\begin{document}

\title{Shannon Entropy Estimation in $\infty$-Alphabets from Convergence Results} 

\author{\name Jorge F. Silva \email josilva@ing.uchile.cl \\
       \addr Department of Electrical Engineering\\
       Information and Decision System Group\\
       University of Chile\\
       Santiago, Chile, Off. 508, Ph. 56-2-9784090 
       }
\editor{}

\maketitle

\begin{abstract}
The problem of Shannon entropy estimation in countable infinite alphabets is addressed from the study and use of convergence results of the entropy functional,  which is known to be discontinuous  with respect to the total variation distance in $\infty$-alphabets. Sufficient conditions for the  convergence of the entropy are used,  including scenarios with both finitely and infinitely supported assumptions on the distributions. From this new perspective,  four plug-in histogram-based estimators are studied showing that convergence results are instrumental to derive new strong consistency and rate of convergences results.  Different scenarios and conditions are used on both the  estimators and the underlying distribution, considering for example  finite and unknown supported assumptions and summable tail bounded conditions. 
\end{abstract}

\begin{keywords}
Shannon entropy, infinite alphabets, convergence properties, histogram-based estimators, data-driven partitions, Barron estimator.
\end{keywords}

\section{Introduction}
The problem of Shannon entropy estimation has a long history in information theory, statistics and computer science \citep{beirlant_1997}. 
This problem belongs to the category of scalar functional estimation that has been richly studied in non-parametric statistics. Starting with the  finite size alphabet scenario,  the classical plug-in estimate (i.e., the empirical distribution evaluated on the functional) is well known to be consistent, minimax optimal and asymptotically efficient \citep[Secs. 8.7-8.9]{vaart_2000} when the number of samples $n$ goes to infinity.
In the finite alphabet setting, more recent research has been interested in looking at the called large alphabet regime,  meaning a non-asymptotic under-sampling regime where the numbers of samples $n$ is on the order of, or even smaller than, the size of the alphabet  denoted by $k$. In this 
context,  it has been shown that the classical plug-in estimator is sub-optimal as it suffers from severe bias \citep{wu_2016,jiao_2015}. 
For characterizing optimality in this high dimensional context,  a non-asymptotic minimax mean square error analysis under a finite $n$ and $k$ has been conducted by several authors \citep{paniski_2004,valiant_2011,valiant_2010,wu_2016,jiao_2015} considering the minimax risk $R^*(k,n)$\footnote{$R^*(k,n) = \inf_{\hat{H}(\cdot)} \sup_{\mu \in \mathcal{P}(k)} \mathbb{E}_{X_1,..X_n\sim \mu^n} \left\{ \left( \hat{H}(X_1,..,X_n) - H(\mu) \right)^2 \right\} $ where $\mathcal{P}(k)$ denotes the collection of probabilities on $[k] \equiv \left\{1,..,k \right\}$.}. \cite{paniski_2004} first showed that it was possible to construct an entropy estimator that uses a sub-linear sampling size to achieve minimax consistency when $k$ goes to infinity,  in the sense that there is a sequence $(n_k)=o(k)$ where $R^*(k,n_k) \longrightarrow 0$ as $k$ goes to infinity.  A 
set of results by \cite{valiant_2011,valiant_2010} show that the optimal scaling of the sampling size with respect to $k$ to achieve the aforementioned asymptotic consistency for entropy estimation is $O(k/\log(k))$.  A refined set of results for the complete characterization of $R^*(k,n)$,  the specific scaling of the sampling complexity, and  the achievability of the obtained minimax $L_2$ risk for the family $\left\{\mathcal{P}(k): k\geq 1\right\} $ with practical estimators have been presented in \citep{wu_2016,jiao_2015}.

Contrasting this set of results, it is well-known that the equivalent problem of estimating the distribution consistently (in total variation) in finite alphabet requieres a sampling complexity that scales like $O(k)$.  Consequently, the large alphabet results for entropy estimation 
show that the task of entropy estimation in finite alphabets is simpler than estimating the high dimensional distribution in terms of sampling complexity.  These findings are consistent with the observation that the entropy is a continuous functional of the space of distributions (in the total variational distance sense) for the finite alphabet case \citep{csiszar_2004,cover_2006,ho_2009,silva_isit_2012}. 

\subsection{Infinite Alphabets}
In this work we are interested in the infinite alphabet scenario, i.e., on the
estimation of the entropy when the alphabet is countable infinite and we have a finite number of samples. 

This problem is an infinite alphabet regime as the size of the alphabet goes unbounded and $n$ 
is kept finite for the analysis,  which contrasts with the large and finite alphabet regime elaborated above.  
As argued in \citep[Sec. IV]{ho_2010},  this is a challenging non-parametric learning problem  because some of the  finite alphabet properties of the entropy do not extend to this infinite dimensional problem.  
Notably, it has been shown that the Shannon entropy is not a continuous functional with respect to the total variational distance in infinite alphabets \citep{harremoes_2007,ho_2009,ho_2010}. 
In  particular,  Ho {\em et al.}  \citep[Theorem 2]{ho_2009} showed concrete examples where convergence in $\chi^2$-divergence and in direct information  divergence  (I-divergence), 
both stronger than total variational convergence \citep{csiszar_2004,devroye_2001}, 
do not imply  convergence of the entropy functional. In addition, \cite{harremoes_2007}   showed  the discontinuity  of the entropy with respect to the reverse  I-divergence \citep{barron_1992},  and consequently,  with respect to the total variational distance\footnote{The  distinction between reverse and direct I-divergence was pointed out in the work of Barron {\em et al.} \citep{barron_1992},  in the context of distribution estimation.}. 
%
In entropy estimation,  the discontinuity of the entropy with respect to the divergence implies that the minimax mean square error is  unbounded\footnote{ $R^*_n = \inf_{\hat{H}(\cdot)} \sup_{\mu \in \mathbb{H}(\mathbb{X}) } \mathbb{E}_{X_1,..X_n\sim \mu^n} \left\{ \left( \hat{H}(X_1,..,X_n) - H(\mu) \right)^2 \right\}=\infty$, where $\mathbb{H}(\mathbb{X})$ denotes the family of finite entropy distribution over the countable alphabet set $\mathbb{X}$. The proof of this result  
follows from \citep[Th. 1]{ho_2010} and  the {Le Cam's} two point method \citep[Sec.2.4.2]{tsybakov_2009}. The argument is presented in Appendix \ref{pro_unbounded_minimax_risk}.}. Consequently, there is no universal minimax  consistent estimator (in the mean square error over sense) of the entropy over the family of finite entropy distributions.

%
Considering  a point-wise or sample wise convergence to zero of the estimation error (instead of the  worse case expected error analysis mentioned above), \cite[Th. 2 and Cor. 1]{antos_2002} showed the remarkable result that the classical plug-in estimate is strongly consistent and consistent in the mean square error sense for any finite entropy distribution, i.e., the simpler  and straightforward  plug-in estimator of entropy is universal, where convergence to the right limiting value is achieved almost surely despite  the discontinuity of the  entropy functional. Moving on the analysis of the (point-wise) rate of convergence of the estimation error,  they presented a finite length lower bound for this error of any arbitrary estimation scheme \citep[Th.3]{antos_2002} showing as a corollary that no universal rate of convergence (to zero) can be achieved for entropy estimation in infinite alphabets \citep[Th.4]{antos_2002}. Finally constrained the problem to a family of distributions with some specific power tail bounded conditions, \cite[Th.7]{antos_2002} show a sharp finite length expression for the rate of convergence of the estimation error of the classical plug-in estimate.

\subsection{From convergence  to Entropy estimation}
Considering the discontinuity of the entropy in $\infty$-alphabets, this work looks at the problem of point-wise almost sure entropy estimation from the new angle of  studying and applying some recent entropy convergence results and their derived bounds in $\infty$-alphabets \citep{piera_2009,silva_isit_2012,ho_2010}.
Entropy convergence results have stablished concrete conditions on both the limiting distribution $\mu$ and the way a sequence of distributions $\left\{\mu_n: n\geq 0\right\}$ convergence to $\mu$ such that  $\lim_{n \rightarrow \infty}  H(\mu_n )=H(\mu)$ is satisfied. The conjecture that motivates this research is that putting these conditions  in the context of a learning task,  i.e., where $\left\{\mu_n: n\geq 0\right\}$ is a random sequence of distributions driven by the classical empirical process,  will offer the possibility to study new family of plug-in estimates with the objective to derive new strong consistency and rate of convergence results under some regularity conditions on $\mu$.  On the practical side, this work explores a data-driven histogram based estimator as a key case of study,  because this approach offers the flexibility to adapt to learning task when appropriate bounds for the estimation and approximation error are derived from the analysis of the problem. 

On the specifics, we begin revisiting the classical plug-in entropy estimator considering the un-explored and relevant scenario where $\mu$ (the data-generated distribution)  has a finite but arbitrary large and unknown support.  This was declared to be a challenging problem in \cite[Th.13]{ho_2010} because of the discontinuity of the entropy.
Finite-length (non-asymptotic) deviation inequalities and intervals of confidence are derived
extending the results presented in \citep[Sec. IV]{ho_2010}. Here it is shown that the classical 
plug-in estimate achieves optimal rates of convergences.  
Relaxing the finite support restriction on $\mu$, we present two histogram-based plug-in estimates, 
one based on 
the celebrated {\em Barron-Gy\"{o}rfi-van der Meulen} estimate \citep{barron_1992, berlinet_1998, vajda_2001};
and the other on a data-driven partition of the space \citep{lugosi_1996_ann_sta,silva_2010b,silva_2010}. 
For the Barron plug-in estimate almost sure consistency is shown for entropy estimation and distribution estimation in direct I-divergence under some mild support conditions on $\mu$.
On the other hand for the data-driven partition scheme,  we show that the estimator is strongly consistent distribution-free,  matching the universal result obtained for the classical plug-in estimate in \citep{antos_2002}. Furthermore, new almost sure rate of convergence results (in the estimation error) are obtained for distributions with finite but unknown support and families of distributions with power and exponential tail dominating conditions. In this context, our result shows that this  adaptive scheme offers optimal and near optimal rate of convergences,  as it approaches arbitrary closely the rate of convergence $O(1/\sqrt{n})$ that is optimal for the finite alphabet problem \citep{vaart_2000}.

The rest of the paper is organized as follows.  Section \ref{sec_pre} introduces some basic concepts,  notation and summarizes the entropy convergence results used in this study. Sections \ref{sec_main}, \ref{subsec:barron} and \ref{subsec:data_driven} state  the main results of this work. The main technical derivation are presented in Section \ref{sec_proofs}, Finally summary and final discussion are given in Section \ref{sec_final}. The proof of some technical auxiliary results are relegated to the Appendix section.

\section{Preliminaries}
\label{sec_pre}
Let $\mathbb{X}$ be a countably  infinite set, without loss of generality the integers, 
and let $\mathcal{P}(\mathbb{X})$ denote the collection of probability measures 
in $\mathbb{X}$. 
For $\mu$ and $v$ in $\mathcal{P}(\mathbb{X})$, 
and $\mu$  absolutely continuous with respect to $v$ (i.e., $\mu \ll v$) \footnote{if  $v(A)=0$ implies that $\mu(A)=0$ for any event $A\in \mathcal{B}(\mathbb{X})$.},  $\frac{d\mu}{d v}(x)$ denotes the  {\em Radon-Nikodym} (RN) derivative  of $\mu$ with respect to $v$. 
Every $\mu \in \mathcal{P}(\mathbb{X})$ is absolutely continuous 
with respect to the counting measure $\lambda$ (or the {\em Lebesgue measure})\footnote{$\lambda(A)=\left|A\right|$ for all $A \subset \mathbb{X}$.}, where its RN derivate is the {\em probability mass function} (pmf), 
$f_{\mu}(x)\equiv \frac{d \mu}{d \lambda}(x)=\mu\left(\left\{x\right\}\right)$, $\forall x \in \mathbb{X}$.  
Finally for any $\mu \in \mathcal{P}(\mathbb{X})$, $A_\mu  \equiv \left\{x\in \mathbb{X}:  f_\mu(x) >0\right\}$ denotes its support and
\begin{equation}\label{eq_pre_1a} 
	 \mathcal{F}(\mathbb{X}) \equiv \left\{\mu \in \mathcal{P}(\mathbb{X}): \lambda(A_\mu) < \infty\right\}
\end{equation}  
denotes the collection of  probabilities with finite support.

Let $\mu$ and $v$ in $\mathcal{P}(\mathbb{X})$, the total variation distance of $\mu$ and $v$ is 
given by \citep{devroye_2001}
\begin{equation}\label{eq_pre_5}
	V(\mu,v) \equiv \sup_{A \in \mathcal{B}(\mathbb{X})} \left|v(A)-\mu(A)\right|,
\end{equation}
where $\mathcal{B}(\mathbb{X})$ is a short-hand for the subsets of $\mathbb{X}$. 
The {\em Kullback-Leibler divergence} or {\em  I-divergence} of $\mu$ with respect to $v$ is given by
\begin{equation}\label{eq_pre_7}
	D(\mu||v) \equiv \sum_{x \in A_\mu} f_\mu(x) \log \frac{f_\mu(x)}{f_v(x)}, 
\end{equation}
when $\mu \ll v$, and $D(\mu||v)$ is  set to infinite, otherwise \citep{kullback_1951}\footnote{The well-known {\em Pinsker`s inequality} offers a relationship 
between the I-divergence and the variational distance 
\citep{kullback_1967,csiszar_1967,kemperman_1969}:  
$\forall \mu,v \in \mathcal{P}(\mathbb{X})$, 
	$2 \ln 2\cdot  V(\mu,v)^2 \leq D(\mu,v)$.
}.

The {\em Shannon entropy} of $\mu\in \mathcal{P}(\mathbb{X})$ is 
given by \citep{cover_2006,gray_1990_b,beirlant_1997}: 
\begin{equation}\label{eq_pre_11}
		{H}(\mu) \equiv  - \sum_{x \in A_\mu} f_\mu(x)  \log f_\mu(x).  
\end{equation}
In this context, it is useful to denote by $\mathbb{H}(\mathbb{X}) \subset \mathcal{P}(\mathbb{X})$
the collection of probabilities where (\ref{eq_pre_11}) is well defined,  by 
$\mathcal{AC}(\mathbb{X}|v) \equiv   \left\{\mu \in \mathcal{P}(\mathbb{X}):  \mu \ll v \right\}$
the collection of measures absolutely continuous with respect to $v \in  \mathcal{P}(\mathbb{X})$, and  by 
$\mathbb{H}(\mathbb{X}|v) \subset \mathcal{AC}(\mathbb{X}|v)$ the collection of probabilities 
where  (\ref{eq_pre_7}) is well defined for $v\in  \mathcal{P}(\mathbb{X})$.

Concerning convergence,  a sequence $\left\{\mu_n: n \in \mathbb{N}\right\} \subset \mathcal{P}(\mathbb{X})$  is said to converge in total variation to $\mu \in \mathcal{P}(\mathbb{X})$ if 
 \begin{equation}\label{eq_sec_pre_11}
 	\lim_{n \rightarrow \infty}  V(\mu_n,\mu)=0.
 \end{equation}
For countable alphabets,  \citep[Lemma 3]{piera_2009}  shows that the convergence in total variation is equivalent to the {\em weak convergence}\footnote{$\left\{\mu_n: n \in \mathbb{N}\right\} \subset \mathcal{P}(\mathbb{X})$  is said to converge  weakly to $\mu \in \mathcal{P}(\mathbb{X})$ if for any bounded function $g(\cdot): \mathbb{X} \rightarrow \mathbb{R}$,  $\lim_{n \rightarrow \infty} \sum_{x\in \mathbb{X}} g(x) f_{\mu_n}(x)= \sum_{x \in \mathbb{X}}g(x) f_{\mu}(x).$},
 which is denoted here by $\mu_n \Rightarrow \mu$, and the point-wise convergence of the  pmf's.  
Furthermore from (\ref{eq_pre_5}),  the convergence in total variation implies the uniform convergence 
of the pmf's, i.e,   $\lim_{n \rightarrow \infty} \sup_{x \in \mathbb{X}} \left| \mu_n(\left\{x\right\})- \mu(\left\{x\right\})\right|=0$.  Therefore in this countable case,  all the four previously mentioned notions of convergence are equivalent:  total variation; weak convergence; point-wise convergence of the pmf`s; 
and uniform convergence of the pmf's. 
 
We conclude with the convergence in I-divergence  introduced by 
\cite{barron_1992}. We say that $\left\{\mu_n: n \in \mathbb{N}\right\}$
converges to $\mu$ in direct and reverse I-divergence if
$\lim_{n \rightarrow \infty}  D(\mu || \mu_n)=0$ and $\lim_{n \rightarrow \infty}  D(\mu_n || \mu)=0$, 
respectively.  From {\em Pinsker's inequality}, the convergence in 
I-divergence implies the  weak convergence in (\ref{eq_sec_pre_11}), where
it is known that the converse 
is not  true \citep{harremoes_2007}.

\subsection{Convergence results for the Shannon entropy}
\label{sec_conv_inq}
The discontinuity of the entropy rises the problem of finding conditions under which convergence of the entropy can be obtained 
in infinite alphabets. On this topic, Ho {\em et al.} \citep{ho_2010} have studied the interplay between the entropy and the  total variation distance stipulating conditions for convergence  by assuming a finite support on the involved distributions.   On the other hand,  \cite[Theorem 21]{harremoes_2007}  obtained convergence of the entropy by imposing a {\em power dominating condition} \cite[Def. 17]{harremoes_2007}  on the limiting probability measure $\mu$, for all the sequences $\left\{\mu_n: n\geq 0\right\}$ converging in reverse I-divergence to $\mu$ \citep{barron_1992}. More recently,   \cite{silva_isit_2012} have addressed the entropy convergence studying a number of new settings that involve  conditions on the limiting measure $\mu$,  as well as the way the sequence $\left\{\mu_n: n\geq 0\right\}$ convergences to $\mu$ in the space of distributions.
This convergence results offer sufficient conditions where the entropy evaluated in a sequence of distribution convergences to the entropy of its limiting distribution and, consequently,  the possibility of applying these results when analyszng plug-in entropy estimator.  The results used in this work are summarized in the rest of this section.

Let us begin with the case when $\mu \in \mathcal{F}(\mathbb{X})$, i.e.,  
the support of the limiting measure is finite and unknown.  
\begin{proposition} \label{lemma_fs_under}
Let us assume that $\mu \in \mathcal{F}(\mathbb{X})$ and $\left\{\mu_n: n \in \mathbb{N}\right\} \subset \mathcal{AC}(\mathbb{X}|\mu)$. 
	If $\mu_n \Rightarrow \mu$,  then $\lim_{n \rightarrow \infty}  D(\mu_n || \mu)=0$
	and $\lim_{n \rightarrow \infty}  H(\mu_n )=H(\mu)$.
\end{proposition}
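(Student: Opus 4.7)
The key observation is that the hypotheses collapse the problem to a finite-dimensional one. Since $\mu \in \mathcal{F}(\mathbb{X})$, its support $A_\mu$ is a finite set, and the assumption $\mu_n \ll \mu$ forces $A_{\mu_n} \subseteq A_\mu$ for every $n$. So both $\mu$ and the whole tail of the sequence live on a common finite alphabet of size $k = \lambda(A_\mu)$, and we are effectively working on the $k$-dimensional simplex.

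My plan is to first invoke the equivalence (noted in the Preliminaries, following Piera and Parada) that on a countable alphabet, weak convergence $\mu_n \Rightarrow \mu$ is equivalent to pointwise convergence of the pmfs: $f_{\mu_n}(x) \to f_\mu(x)$ for every $x \in \mathbb{X}$. Combined with the previous paragraph, this means that on the finite index set $A_\mu$ we have pointwise (hence, by finiteness, uniform) convergence of pmfs. Next, I would handle the entropy convergence by the standard continuity argument: the map $(p_1,\dots,p_k) \mapsto -\sum_{i=1}^k p_i \log p_i$ extends continuously to the closed simplex (using $0\log 0 = 0$), so pointwise convergence of the $f_{\mu_n}(x)$ on the finite set $A_\mu$ immediately gives $H(\mu_n) \to H(\mu)$.

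For the divergence, I would write
\begin{equation*}
D(\mu_n \| \mu) = \sum_{x \in A_\mu} f_{\mu_n}(x) \log f_{\mu_n}(x) - \sum_{x \in A_\mu} f_{\mu_n}(x) \log f_\mu(x) = -H(\mu_n) + R_n,
\end{equation*}
where $R_n \equiv -\sum_{x \in A_\mu} f_{\mu_n}(x) \log f_\mu(x)$. The function $x \mapsto \log f_\mu(x)$ is bounded on $A_\mu$ because $A_\mu$ is finite and $f_\mu(x) > 0$ on $A_\mu$. Therefore $R_n$ is a finite linear combination (in fact, a bounded integral against the finite measure $f_{\mu_n}$) of a fixed bounded function, and pointwise convergence $f_{\mu_n}(x) \to f_\mu(x)$ gives $R_n \to -\sum_{x \in A_\mu} f_\mu(x)\log f_\mu(x) = H(\mu)$. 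Combining with the previous step, $D(\mu_n \| \mu) \to -H(\mu) + H(\mu) = 0$.

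There is no real obstacle here: the crux is simply recognizing that $\mu_n \ll \mu$ together with $|A_\mu|<\infty$ removes all the infinite-alphabet pathology (mass escaping to infinity, unbounded $\log 1/f_\mu$, etc.) that normally breaks continuity of entropy and I-divergence. Once the problem is confined to a fixed finite simplex, continuity of $-x\log x$ and boundedness of $\log f_\mu$ on $A_\mu$ finish both claims. It is worth noting that the order of the two conclusions could be reversed: one could first prove $D(\mu_n\|\mu)\to 0$ using continuity of $t \log t$, and then deduce $H(\mu_n)\to H(\mu)$ from $H(\mu_n) = R_n - D(\mu_n\|\mu)$; the two orderings are interchangeable.
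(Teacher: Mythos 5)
Your proof is correct, and it uses essentially the same algebraic identity as the paper: both exploit the decomposition $D(\mu_n\|\mu) = -H(\mu_n) - \sum_{x\in A_\mu} f_{\mu_n}(x)\log f_\mu(x)$ (equivalently, the paper's (\ref{eq_lemma_fs_1})), together with the observation that $\mu_n \ll \mu$ and $|A_\mu|<\infty$ confine everything to a fixed finite simplex on which $\log f_\mu$ is bounded. The difference is one of emphasis rather than substance. Your argument is purely qualitative: you invoke continuity of $-t\log t$ on the simplex and pass to the limit term by term, and you correctly note that the two conclusions can be derived in either order. The paper instead bounds each term of the identity \emph{quantitatively} by a multiple of the total variation distance: the cross term is controlled by $\mathbf{M}_\mu \cdot V(\mu_n,\mu)$ and the divergence by $\frac{\log e}{\mathbf{m}_\mu}\cdot V(\mu_n,\mu)$ (using $\ln t \le t-1$), yielding the explicit inequalities (\ref{eq_lemma_fs_3}) and (\ref{eq_lemma_fs_4}). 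The paper explicitly remarks that it supplies its own proof precisely to extract these two inequalities, which are then reused to get the Hoeffding-based concentration bounds in Theorem \ref{th_con_inq} and elsewhere. So your proof establishes the stated proposition but does not yield the rate information the paper needs downstream; if you wanted to recover (\ref{eq_lemma_fs_3}) and (\ref{eq_lemma_fs_4}), you would replace your continuity/dominated-convergence step with the elementary estimates $|\log f_\mu(x)|\le \mathbf{M}_\mu$ on $A_\mu$ and $\log u \le (\log e)(u-1)$.
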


This result is well-known because when $A_{\mu_n}\subset A_\mu$ for all $n$ the scenario
reduces to the finite alphabet case, in which the entropy is known to be continuous 
\citep{csiszar_2004, cover_2006}. 
Because  we obtain two inequalities used in the rest of the exposition,  a proof is provided here.
\begin{proof}
	$\mu$ and $\mu_n$ belong to $\mathbb{H}(\mathbb{X})$ from the 
	finite-supported assumption. The same reason can be used to show that 
	$D(\mu_n||\mu)<\infty$,  since $\mu_n \ll \mu$ for all $n$. Let us consider the following identity: 
	\begin{equation}\label{eq_lemma_fs_1}
	 	H(\mu )- H(\mu_n) = \sum_{x \in A_\mu} (f_{\mu_n}(x)-f_\mu(x)) \log f_\mu(x) + D(\mu_n||\mu).
	\end{equation}
	The first term on the right hand side (RHS) of (\ref{eq_lemma_fs_1}) is upper bounded 
	by $\mathbf{M}_\mu \cdot V(\mu_n,\mu)$  where
	\begin{equation}\label{eq_lemma_fs_2}
		\mathbf{M}_\mu = \log \frac{1}{\mathbf{m}_\mu} \equiv \sup_{x \in A_\mu}  \left| \log \mu(\left\{x\right\})\right| < \infty.
	\end{equation}
	For the second term,  we have that
	\begin{align}\label{eq_lemma_fs_3} 
		D(\mu_n||\mu) &\leq  \log e \cdot \sum_{x \in A_{\mu_n}}   f_{\mu_n}(x) \left|\frac{f_{\mu_n}(x)}{f_{\mu}(x)}-1\right| \nonumber\\ 
				      &\leq  \frac{\log e}{\mathbf{m}_\mu} \cdot \sup_{x\in A_\mu} \left|f_{\mu_n}(x) - f_{\mu}(x)\right| 
				      \leq  \frac{\log e}{\mathbf{m}_\mu} \cdot V(\mu_n,\mu).	
	\end{align}
	and, consequently,  
	\begin{align}\label{eq_lemma_fs_4} 
		\left|H(\mu )- H(\mu_n)\right| \leq 
		\left[ \mathbf{M}_\mu+\frac{\log e}{\mathbf{m}_\mu}\right] \cdot V(\mu_n,\mu).
	\end{align} 	
\end{proof}
Under the assumptions of Proposition \ref{lemma_fs_under}, we note that
the reverse I-divergence and  the entropy difference
are bounded by the total variation by  (\ref{eq_lemma_fs_3}) and (\ref{eq_lemma_fs_4}), respectively. Note however that these bounds are distribution dependent function of $\mathbf{m}_\mu$($\mathbf{M}_\mu$) in (\ref{eq_lemma_fs_2}).\footnote{It is simple to note that $\mathbf{m}_\mu(\mathbf{M}_\mu)<\infty$ if,  and only if, $\mu\in \mathcal{F}(\mathbb{X})$.} The next result relaxes the assumption that  $\mu_n \ll \mu$ 
and offers 
a necessary and sufficient condition for the convergence of the entropy.
\begin{lemma}\label{fs_main_th} \citep[Th. 1]{silva_isit_2012}
	Let $\mu \in \mathcal{F}(\mathbb{X})$ and $\left\{\mu_n: n \in \mathbb{N}\right\} \subset \mathcal{F}(\mathbb{X})$.   	If 
	$\mu_n \Rightarrow  \mu$,  then
		there exists $N>0$ such that $\mu \ll \mu_n$ 
		$\forall n\geq N$,
		 and
		\begin{equation} 
			\lim_{n \rightarrow \infty}  D(\mu || \mu_n)=0. \nonumber
		\end{equation}
		Furthermore, $\lim_{n \rightarrow \infty}  H(\mu_n )=H(\mu)$, if and only if \footnote{$\mu(\cdot |B)$ in (\ref{eq_lemma_fs_9}) denotes the conditional probability 
	of $\mu$ given the event $B \subset \mathbb{X}$}, 
		\begin{equation}\label{eq_lemma_fs_9} 
			\lim_{n \rightarrow \infty}  \mu_n \left(A_{\mu_n} \setminus A_{\mu}\right)  \cdot H\left(\mu_n\left(\cdot |A_{\mu_n} \setminus A_{\mu}\right)\right)=0  \Leftrightarrow
			\lim_{n \rightarrow \infty} \sum_{x\in A_{\mu_n} \setminus A_{\mu}} f_{\mu_n}(x) \log \frac{1}{f_{\mu_n}(x)}=0.
		\end{equation}
\end{lemma}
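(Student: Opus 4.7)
The plan is to exploit the finite support of $\mu$ to reduce much of the argument to continuity on a finite-dimensional simplex, and then isolate the contribution of the "new mass" $A_{\mu_n}\setminus A_\mu$, which is the only place the entropy can misbehave.

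First I would prove the existence of $N$ and the convergence of the direct I-divergence. Since $\mu \in \mathcal{F}(\mathbb{X})$, the quantity $\mathbf{m}_\mu = \min_{x \in A_\mu} f_\mu(x) > 0$ is strictly positive. By weak convergence (equivalently, by the uniform convergence of pmf's stated in Section \ref{sec_pre}), $f_{\mu_n}(x) \to f_\mu(x)$ for every $x$; in particular, for $n$ large enough (call it $N$), $f_{\mu_n}(x) \geq \mathbf{m}_\mu / 2 > 0$ for every $x \in A_\mu$. This is exactly $A_\mu \subset A_{\mu_n}$, i.e.\ $\mu \ll \mu_n$. Moreover, for $n \geq N$,
\begin{equation*}
D(\mu \| \mu_n) \;=\; \sum_{x \in A_\mu} f_\mu(x) \log\frac{f_\mu(x)}{f_{\mu_n}(x)}
\end{equation*}
is a finite sum whose summands are continuous functions of $f_{\mu_n}(x)$ at the positive limits $f_\mu(x)$, so each term tends to $0$ and hence $D(\mu\|\mu_n) \to 0$.

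Next I would split the entropy. For $n \geq N$, using $A_\mu \subset A_{\mu_n}$,
\begin{equation*}
H(\mu_n) \;=\; -\!\sum_{x\in A_\mu} f_{\mu_n}(x)\log f_{\mu_n}(x) \;+\; \underbrace{\sum_{x\in A_{\mu_n}\setminus A_\mu} f_{\mu_n}(x)\log\frac{1}{f_{\mu_n}(x)}}_{\beta_n}.
\end{equation*}
By the same pointwise-convergence-on-a-finite-set argument as before, the first sum converges to $-\sum_{x\in A_\mu}f_\mu(x)\log f_\mu(x) = H(\mu)$. Therefore $H(\mu_n)\to H(\mu)$ if and only if $\beta_n \to 0$, which is the right-hand equivalence in \eqref{eq_lemma_fs_9}.

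Finally I would identify $\beta_n$ with the conditional-entropy expression. Set $p_n \equiv \mu_n(A_{\mu_n}\setminus A_\mu) = 1 - \mu_n(A_\mu)$; then
\begin{equation*}
p_n \cdot H\bigl(\mu_n(\cdot\,|\,A_{\mu_n}\setminus A_\mu)\bigr) \;=\; -\!\sum_{x\in A_{\mu_n}\setminus A_\mu} f_{\mu_n}(x)\log\frac{f_{\mu_n}(x)}{p_n} \;=\; \beta_n + p_n\log p_n.
\end{equation*}
Because $A_\mu$ is finite, $\mu_n(A_\mu)\to \mu(A_\mu)=1$, so $p_n\to 0$ and thus $p_n\log p_n \to 0$. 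The two quantities appearing in \eqref{eq_lemma_fs_9} therefore differ by a vanishing term, which gives the left-hand equivalence and closes the proof.

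The main obstacle is essentially bookkeeping rather than a deep inequality: once $A_\mu \subset A_{\mu_n}$ is guaranteed for large $n$, everything on $A_\mu$ behaves continuously, and the only delicate issue is confirming that the correction $p_n \log p_n$ — generated when rewriting the tail as a conditional entropy — vanishes. The identity above makes this transparent, and uses nothing beyond the continuity of $t\mapsto t\log t$ at $t=0$.
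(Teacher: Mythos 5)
Your proof is correct, and since the paper states Lemma~\ref{fs_main_th} with a citation to \citep[Th.~1]{silva_isit_2012} rather than an inline proof, there is no in-paper argument to compare against; your argument is the natural one and is complete. The key steps are all sound: finiteness of $A_\mu$ gives $\mathbf{m}_\mu>0$; weak convergence (equivalently, uniform pmf convergence in the countable setting, as noted in Section~\ref{sec_pre}) forces $f_{\mu_n}\geq\mathbf{m}_\mu/2$ on $A_\mu$ eventually, giving $\mu\ll\mu_n$; $D(\mu\|\mu_n)$ and $-\sum_{A_\mu}f_{\mu_n}\log f_{\mu_n}$ are finite sums of terms continuous at positive limits; and the identity $p_n\,H(\mu_n(\cdot\,|\,A_{\mu_n}\setminus A_\mu))=\beta_n+p_n\log p_n$ with $p_n\to 0$ cleanly reconciles the two equivalent statements in \eqref{eq_lemma_fs_9}. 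One small point worth making explicit in a polished write-up: when $p_n=0$ the set $A_{\mu_n}\setminus A_\mu$ is empty (every $x\in A_{\mu_n}$ has $f_{\mu_n}(x)>0$), so both sides of your identity are zero by the $0\log 0=0$ convention and the equivalence is vacuous there; this is implicit in your argument but deserves a sentence.
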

	Lemma \ref{fs_main_th} tells us that  in order to achieve entropy convergence (on top of the weak convergence), it is necessary and sufficient to ask for a vanishing  expression (with $n$) of the entropy of $\mu_n$ restricted to the elements of the set $A_{\mu_n} \setminus A_{\mu}$.  
	Two remarks about this result: 
	{\bf 1)} The convergence in direct $I$-divergence 
	does not imply the convergence of the entropy\footnote{Concrete examples are presented 
	in \citep[Sec III]{ho_2009} and \citep{silva_isit_2012}.}.
	{\bf 2)} Under the assumption that $\mu \in \mathcal{F}(\mathbb{X})$,  $\mu$ is eventually 
	absolutely  continuous with respect to $\mu_n$, and the convergence in total variations is equivalent to the convergence in direct I-divergence.

We conclude this section with the case when the support of $\mu$ is
infinite and unknown, i.e., $\left|A_\mu\right|=\infty$.  
In this context, we highlight two results:
\begin{lemma}\citep[Theorem 4]{piera_2009} \label{th_con_infty_1}
	Let us consider that  $\mu \in \mathbb{H}(\mathbb{X})$ and  
	$\left\{\mu_n: n\geq 0\right\} \subset \mathcal{AC}(\mathbb{X}|\mu)$.
	If $\mu_n \Rightarrow  \mu$ and 
	\begin{equation}\label{eq_sec_infty_sp_1}
		M \equiv  \sup_{n\geq 1} \sup_{x\in A_{\mu}} \frac{f_{\mu_n}(x)}{f_\mu(x)} < \infty,
	\end{equation}
	then, $\mu_n \in  \mathbb{H}(\mathbb{X}) \cap \mathbb{H}(\mathbb{X}|\mu)$ for all $n$ and
	it follows that
	\begin{equation}
		\lim_{n \rightarrow \infty}  D(\mu_n || \mu)=0 \ \text{and} \ \lim_{n \rightarrow \infty}  H(\mu_n )=H(\mu).  \nonumber
	\end{equation}
\end{lemma}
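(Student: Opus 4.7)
The plan is to reduce all three conclusions---$\mu_n\in\mathbb{H}(\mathbb{X})\cap\mathbb{H}(\mathbb{X}|\mu)$, $D(\mu_n||\mu)\to 0$, and $H(\mu_n)\to H(\mu)$---to applications of Tannery's theorem (dominated convergence for series on $\mathbb{X}$), using the hypothesis $M<\infty$ to manufacture $\mu$-summable dominating functions. Since $\mathbb{X}$ is countable and $\mu_n\Rightarrow\mu$, the pmf's converge pointwise, $f_{\mu_n}(x)\to f_\mu(x)$ for every $x$; and since $\mu_n\ll\mu$, all of the sums below are supported on $A_\mu$. Note that $M\ge 1$ since $1=\sum_x f_{\mu_n}(x)\le M\sum_x f_\mu(x)=M$.

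First I would verify the integrability claims. Setting $y_n(x)=f_{\mu_n}(x)/f_\mu(x)\in[0,M]$, the elementary inequality $|t\log t|\le \max(M\log M, 1/e)$ on $[0,M]$ yields
\[
\left| f_{\mu_n}(x) \log \frac{f_{\mu_n}(x)}{f_\mu(x)} \right| \leq \max(M\log M, 1/e)\cdot f_\mu(x),
\]
so summing over $A_\mu$ gives $D(\mu_n||\mu)\le\max(M\log M,1/e)<\infty$. For the entropy, I would decompose
\[
-f_{\mu_n}(x)\log f_{\mu_n}(x) = -f_{\mu_n}(x)\log\frac{f_{\mu_n}(x)}{f_\mu(x)} + f_{\mu_n}(x)\log\frac{1}{f_\mu(x)},
\]
and bound the second term by $M f_\mu(x)|\log f_\mu(x)|$ via $f_{\mu_n}\le M f_\mu$; summing produces $H(\mu_n)\le \max(M\log M,1/e)+M\,H(\mu)<\infty$, where the finiteness of the right-hand side uses $\mu\in\mathbb{H}(\mathbb{X})$.

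Next I would obtain $D(\mu_n||\mu)\to 0$. The integrand tends to $0$ pointwise on $A_\mu$ (as $y_n(x)\to 1$ and $t\log t\to 0$ at $t=1$) and is uniformly in $n$ dominated by the $\mu$-summable envelope just derived, so Tannery's theorem closes the argument. For $H(\mu_n)\to H(\mu)$ I would invoke the identity
\[
H(\mu)-H(\mu_n) = D(\mu_n||\mu) + \sum_{x\in A_\mu} (f_{\mu_n}(x)-f_\mu(x))\log f_\mu(x),
\]
which is legitimate because each series converges absolutely by the previous step. The first term tends to $0$ by what was just shown. For the second, the summand vanishes pointwise, and $|(f_{\mu_n}(x)-f_\mu(x))\log f_\mu(x)|\le (M+1) f_\mu(x)|\log f_\mu(x)|$, which is summable since $\sum_x f_\mu(x)|\log f_\mu(x)|=H(\mu)<\infty$; one more dominated-convergence application finishes the proof.

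The main obstacle is precisely the construction of a $\mu$-summable dominating function, and it is here that both hypotheses---the bounded ratio $M<\infty$ and $\mu\in\mathbb{H}(\mathbb{X})$---are essential. Without the uniform ratio bound, the RN derivatives $f_{\mu_n}/f_\mu$ could grow unboundedly on the tail where $f_\mu$ is small and drag $D(\mu_n||\mu)$ away from zero; without $H(\mu)<\infty$, the cross-term $\sum(f_{\mu_n}-f_\mu)\log f_\mu$ would lack a summable envelope. Beyond making these bounds explicit I expect no essential difficulty.
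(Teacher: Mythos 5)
Your proof is correct. The paper cites this lemma from Piera et al.\ (2009) without reproducing its proof, but the remark immediately after the statement already identifies the uniform bounding condition in~(\ref{eq_sec_infty_sp_1}) as what ``allows the use of the dominated convergence theorem,'' and your argument---constructing $\mu$-summable envelopes $\max(M\log M,\,1/e)\,f_\mu(x)$ and $(M+1)\,f_\mu(x)|\log f_\mu(x)|$, and invoking the same identity $H(\mu)-H(\mu_n)=D(\mu_n||\mu)+\sum_x(f_{\mu_n}(x)-f_\mu(x))\log f_\mu(x)$ that appears in~(\ref{eq_lemma_fs_1})---is a clean, self-contained execution of exactly that strategy.
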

Interpreting Lemma \ref{th_con_infty_1}, we have that to obtain the convergence of the entropy functional without imposing a finite support assumption on $\mu$,  a uniform bounding condition (UBC) $\mu$-almost everywhere was added in (\ref{eq_sec_infty_sp_1}).  This UBC allows the use of  the {\em dominated convergence theorem} \citep{varadhan_2001,breiman_1968},  and it is strictly needed in that sense \citep{piera_2009}.
Finally by adding this UBC,  the convergence on reverse I-divergence is also obtained as a byproduct.  

Finally,  when $\mu \ll \mu_n$ for all $n$, we consider the following result:
\begin{lemma} \citep[Theorem 3]{silva_isit_2012} \label{th_con_infty_2}
	Let $\mu \in \mathbb{H}(\mathbb{X})$ and a  sequence of measures 
	$\left\{\mu_n: n\geq 1 \right\} \subset \mathbb{H}(\mathbb{X})$ such that 
	$\mu \ll \mu_n$ for all $n \geq 1$. If $\mu_n \Rightarrow  \mu$
	and 
	\begin{equation}\label{eq_sec_infty_sp_2}
		 \sup_{n\geq 1} \sup_{x\in A_{\mu}} \left|\log \frac{f_{\mu_n}(x)}{f_\mu(x)}\right|  < \infty
	\end{equation}
	then, $\mu \in \mathbb{H}(\mathbb{X}|\mu_n)$ for all $n\geq 1$, and
	\begin{equation}
		\lim_{n \rightarrow \infty}  D(\mu || \mu_n)=0. \nonumber
	\end{equation}
	Furthermore, $\lim_{n \rightarrow \infty}  H(\mu_n )=H(\mu)$, if and only if,  
	\begin{equation}\label{eq_sec_infty_sp_3}
		\lim_{n \rightarrow \infty} \sum_{x \in A_{\mu_n}\setminus A_{\mu}} f_{\mu_n}(x) \log \frac{1}{f_{\mu_n}(x)}=0. 
	\end{equation}
\end{lemma}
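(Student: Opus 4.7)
The plan is to mirror the structure of Proposition \ref{lemma_fs_under} and Lemma \ref{th_con_infty_1}, exploiting the two-sided uniform log-ratio bound (\ref{eq_sec_infty_sp_2}) to justify the dominated convergence theorem (DCT) on the countable alphabet $A_\mu$. Throughout, write $K \equiv \sup_{n,\,x\in A_\mu}|\log (f_{\mu_n}(x)/f_\mu(x))| < \infty$, so that $e^{-K}f_\mu(x) \leq f_{\mu_n}(x) \leq e^{K}f_\mu(x)$ for every $x\in A_\mu$ and every $n$.

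First, I would verify the two divergence statements together. Since $\mu \ll \mu_n$, the sum $D(\mu\|\mu_n) = \sum_{x\in A_\mu} f_\mu(x)\log(f_\mu(x)/f_{\mu_n}(x))$ is well defined, and its summand is bounded in absolute value by $K f_\mu(x)$, which is summable. This gives $D(\mu\|\mu_n) \leq K < \infty$ for all $n$, showing $\mu \in \mathbb{H}(\mathbb{X}|\mu_n)$. The weak convergence $\mu_n \Rightarrow \mu$ is equivalent on a countable alphabet to pointwise convergence of the pmfs, so $\log(f_\mu(x)/f_{\mu_n}(x)) \to 0$ pointwise on $A_\mu$. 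Applying DCT with dominating function $K f_\mu(x)$ yields $\lim_n D(\mu\|\mu_n) = 0$.

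For the entropy equivalence, I would exploit the inclusion $A_\mu \subset A_{\mu_n}$ (again from $\mu\ll\mu_n$) to split
\begin{equation}
H(\mu_n) = -\sum_{x \in A_\mu} f_{\mu_n}(x) \log f_{\mu_n}(x) \;-\; \sum_{x \in A_{\mu_n}\setminus A_\mu} f_{\mu_n}(x) \log f_{\mu_n}(x). \nonumber
\end{equation}
Call these $T_n^{(1)}$ and $T_n^{(2)}$. The claim is that $T_n^{(1)} \to H(\mu)$ unconditionally, so that $H(\mu_n)\to H(\mu)$ reduces to $T_n^{(2)} \to 0$, which is precisely (\ref{eq_sec_infty_sp_3}). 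Pointwise on $A_\mu$, $-f_{\mu_n}(x)\log f_{\mu_n}(x) \to -f_\mu(x)\log f_\mu(x)$. To dominate, write $|\log f_{\mu_n}(x)| \leq |\log f_\mu(x)| + K$ by the triangle inequality and the UBC, and $f_{\mu_n}(x) \leq e^K f_\mu(x)$, so
\begin{equation}
\bigl|f_{\mu_n}(x)\log f_{\mu_n}(x)\bigr| \;\leq\; e^K f_\mu(x)\,\bigl(|\log f_\mu(x)| + K\bigr). \nonumber
\end{equation}
Because $\mu$ is a pmf, $|\log f_\mu(x)| = -\log f_\mu(x)$ on $A_\mu$, so the right-hand side sums to $e^K(H(\mu) + K) < \infty$ (using $\mu\in\mathbb{H}(\mathbb{X})$). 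DCT then delivers $T_n^{(1)} \to H(\mu)$.

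The main technical obstacle is the DCT step for $T_n^{(1)}$: one must control $f_{\mu_n}\log f_{\mu_n}$, a nonlinear functional of $f_{\mu_n}$, by something integrable against the counting measure and independent of $n$. The two-sided nature of (\ref{eq_sec_infty_sp_2}), as opposed to the one-sided bound (\ref{eq_sec_infty_sp_1}) used in Lemma \ref{th_con_infty_1}, is essential here: the upper bound $f_{\mu_n}\leq e^K f_\mu$ controls the magnitude, while the lower bound $f_{\mu_n}\geq e^{-K}f_\mu$ keeps $|\log f_{\mu_n}|$ from blowing up on $A_\mu$, tying the dominating function back to the finite quantity $H(\mu)$. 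Once that step is in place, splitting the sum and reading off (\ref{eq_sec_infty_sp_3}) as the remaining quantity closes both directions of the equivalence.
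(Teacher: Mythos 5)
The paper does not prove Lemma~\ref{th_con_infty_2}; it cites it directly as \citep[Theorem 3]{silva_isit_2012}, so there is no internal proof to compare against. Judged on its own merits, your argument is correct and complete. The two-sided sandwich $e^{-K} f_\mu(x) \leq f_{\mu_n}(x) \leq e^{K} f_\mu(x)$ on $A_\mu$, which is exactly what the UBC (\ref{eq_sec_infty_sp_2}) gives, furnishes valid dominating functions for both DCT applications: $K f_\mu(x)$ (with sum $K$) for the divergence, and $e^{K} f_\mu(x)\bigl(|\log f_\mu(x)| + K\bigr)$ (with sum $e^{K}(H(\mu)+K) < \infty$, using $\mu \in \mathbb{H}(\mathbb{X})$) for the partial entropy on $A_\mu$. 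The inclusion $A_\mu \subset A_{\mu_n}$ from $\mu \ll \mu_n$ makes the split $H(\mu_n) = T_n^{(1)} + T_n^{(2)}$ exhaustive, $T_n^{(1)} \to H(\mu)$ holds unconditionally, and since $T_n^{(2)} = H(\mu_n) - T_n^{(1)}$ the equivalence with (\ref{eq_sec_infty_sp_3}) follows in both directions. Your closing remark is also the right structural observation: the one-sided bound (\ref{eq_sec_infty_sp_1}) of Lemma~\ref{th_con_infty_1} controls $f_{\mu_n}/f_\mu$ from above only, which suffices when $\mu_n \ll \mu$ and $A_{\mu_n} \subset A_\mu$, whereas in the present regime $\mu \ll \mu_n$ one needs the lower bound as well to keep $|\log f_{\mu_n}|$ tied to the summable quantity $-\log f_\mu$ on $A_\mu$ and isolate the residual mass on $A_{\mu_n}\setminus A_\mu$ as the sole obstruction to entropy convergence.
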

	This result shows again the non-sufficiency  
	of the convergence in direct I-divergence to achieve entropy convergence in the regime 
	when $\mu \ll \mu_n$. In fact, 
	Lemma \ref{th_con_infty_2} may be interpreted as an extension of 
	Lemma \ref{fs_main_th} when we relax the finite support assumption on $\mu$.

\section{Shannon entropy estimation}
\label{sec_main}
Let $\mu$ be a probability  in $\mathbb{H}(\mathbb{X})$, and 
let us denote by $X_1,X_2,X_3,\ldots$ the empirical process induced 
from i.i.d. realizations  of a random variable driven by $\mu$, 
i.e., $X_i \sim \mu$, for all $i\geq 0$.  
Let $\mathbb{P}_{\mu}$ denote the distribution of the empirical
process in $(\mathbb{X}^{\infty},\mathcal{B}(\mathbb{X}^{\infty}))$
and $\mathbb{P}^n_{\mu}$ denote the finite block distribution of $X^n_1\equiv (X_1,..., X_n)$
in the product space $(\mathbb{X}^{n},\mathcal{B}(\mathbb{X}^{n}))$.
Given a realization of $X_1,X_2,X_3, \ldots, X_n$, we can construct an 
histogram-based estimator like classical empirical probability given by:
\begin{equation}\label{eq_sec_stat_learn_1}
	\hat{\mu}_n(A) \equiv \frac{1}{n} \sum_{k=1}^n \mathbb{1}_{A}(X_k), \ \forall A \subset \mathbb{X},
\end{equation}
with pmf denoted by $f_{\hat{\mu}_n}(x) 
= \hat{\mu}_n(\left\{x\right\})$ for all $x \in \mathbb{X}$.
A natural estimator of the entropy  is the {\em plug-in estimate} 
of $\hat{\mu}_n$ given by 
\begin{equation}\label{eq_sec_stat_learn_2}
	H(\hat{\mu}_n) = - \sum_{x \in \mathbb{X}} f_{\hat{\mu}_n}(x) \log f_{\hat{\mu}_n}(x), 
\end{equation}
which is a measurable function of $X_1, \ldots, X_n$.\footnote{This dependency on the data will be 
implicit for the rest of the exposition.}

For the rest of the exposition, we use the convergence results in Section \ref{sec_conv_inq} to derive 
strong consistency results for plug-in histogram-based estimates, 
like $H(\hat{\mu}_n)$ in (\ref{eq_sec_stat_learn_2}), as well as finite length concentration inequalities 
to obtain almost-sure rate of convergence for the estimation error  $\left|H(\hat{\mu}_n)  - H({\mu}) \right|$. 
 
\subsection{Revisiting the classical  Plug-in estimator for finite and unknown supported distributions}
\label{sub_sec_fs_class_hist}
We start analyzing the case where $\mu$ has a finite and unknown support. 
A  consequence of {\em the strong law of large numbers} \citep{breiman_1968, varadhan_2001} 
is  that $\forall x \in \mathbb{X}$, $\lim_{n  \rightarrow \infty} 
\hat{\mu}_n(\left\{x\right\})= {\mu}(\left\{x\right\})$, $\mathbb{P}_\mu$-almost surely (a.s.), 
hence $\lim_{n  \rightarrow \infty} V(\hat{\mu}_n, \mu)=0$,  $\mathbb{P}_\mu$-a.s. 
On the other hand,  it is clear that $A_{\hat{\mu}_n} \subset A_{\mu}$ with probability one.
Then adopting  Proposition \ref{lemma_fs_under} it follows that
\begin{align}\label{eq_sec_stat_learn_3} 
&\lim_{n \rightarrow \infty}  D(\hat{\mu}_n  || \mu)=0 \text{ and }
\lim_{n \rightarrow \infty}  H(\hat{\mu}_n )=H(\mu),  \ \mathbb{P}_\mu-a.s., 
\end{align}
i.e., $\hat{\mu}_n$ is a strongly consistent estimator of $\mu$ in reverse I-divergence 
and $H(\hat{\mu}_n)$ is a strongly consistent estimate of $H(\mu)$ distribution-free in $\mathcal{F}(\mathbb{X})$. 
Furthermore, we can state the following: 

\begin{theorem}\label{th_con_inq}
Let $\mu \in \mathcal{F}(\mathbb{X})$  and $\hat{\mu}_n$ be  in (\ref{eq_sec_stat_learn_1}).
Then $\hat{\mu}_n \in \mathcal{H}(\mathbb{X}) \cap \mathcal{H}(\mathbb{X}|\mu)$, 
$\mathbb{P}_\mu$-a.s and $\forall n \geq 1$,
$\forall \epsilon>0$,
	\begin{align} \label{eq_sec_stat_learn_6} 
		&\mathbb{P}^n_{\mu} \left(D(\hat{\mu}_n || \mu) > \epsilon\right) \leq 2^{\left|A_{\mu}\right|+1} \cdot e^{- \frac{2 \mathbf{m}_{\mu}^2 \cdot  n \epsilon^2}{\log  e ^2 }}, \\
	\label{eq_sec_stat_learn_6b} 
		& \mathbb{P}^n_{\mu} \left( \left|H(\hat{\mu}_n)  - H({\mu}) \right|  > \epsilon\right) \leq 2^{\left|A_{\mu}\right| +1} \cdot e^{-\frac{2 n \epsilon ^2}{(\mathbf{M}_{\mu} + \frac{\log e}{\mathbf{m}_{\mu}})^2}}.
	\end{align}
	Moreover, $D(\mu||\hat{\mu}_n)$  is eventually well-defined with probability one, 
	and $\forall \epsilon>0$, and for any $n \geq 1$, 
	\begin{align} \label{eq_sec_stat_learn_12b} 
		&\mathbb{P}^n_{\mu} \left(D(\mu || \hat{\mu}_n) > \epsilon\right) \leq
		2^{\left|A_\mu\right|+1} \cdot
		\left[   e^{-\frac{2 n \epsilon^2}{ \log e^2  \cdot \left({1}/{\mathbf{m}_u}+1\right)^2}}
		+ e^{- n \mathbf{m}_\mu^2 } \right].
	\end{align}
\end{theorem}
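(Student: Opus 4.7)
The first assertion $\hat{\mu}_n \in \mathcal{H}(\mathbb{X}) \cap \mathcal{H}(\mathbb{X}|\mu)$ almost surely is immediate: since samples are drawn from $\mu$, we have $A_{\hat{\mu}_n}\subseteq A_\mu$ with probability one, and $|A_\mu|<\infty$ puts us inside the finite-supported regime of Proposition \ref{lemma_fs_under}. For the two deviation bounds (\ref{eq_sec_stat_learn_6}) and (\ref{eq_sec_stat_learn_6b}), my plan is to apply the deterministic inequalities derived in the proof of that proposition, namely (\ref{eq_lemma_fs_3}) and (\ref{eq_lemma_fs_4}), sample-wise to the pair $(\hat{\mu}_n,\mu)$. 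This reduces the events $\{D(\hat{\mu}_n|| \mu)>\epsilon\}$ and $\{|H(\hat{\mu}_n)-H(\mu)|>\epsilon\}$ to events of the form $\{V(\hat{\mu}_n,\mu) > \epsilon/L\}$ with Lipschitz constants $L_1 = \log e/\mathbf{m}_\mu$ and $L_2 = \mathbf{M}_\mu+\log e/\mathbf{m}_\mu$, respectively. Plugging these thresholds into the classical Bretagnolle-Huber-Carol concentration for empirical measures on a finite alphabet, $\mathbb{P}^n_\mu(V(\hat{\mu}_n,\mu)>t)\leq 2^{|A_\mu|+1} e^{-2nt^2}$, then delivers both bounds directly.

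The reverse divergence $D(\mu || \hat{\mu}_n)$ in (\ref{eq_sec_stat_learn_12b}) is the delicate case, since it is finite only when $A_\mu \subseteq A_{\hat{\mu}_n}$, a random event. The strategy is to condition on a ``good event'' $G_n$ on which the minimum empirical mass over $A_\mu$ is bounded away from zero; concretely, $G_n = \{\min_{x \in A_\mu} f_{\hat{\mu}_n}(x) \geq c\}$ for a threshold $c \in (0,\mathbf{m}_\mu)$ tuned to reproduce the constants in the statement. The natural candidate, suggested by the factor $1/\mathbf{m}_\mu+1$, is $c = \mathbf{m}_\mu/(1+\mathbf{m}_\mu)$. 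On $G_n$, mirroring the derivation of (\ref{eq_lemma_fs_3}) with the roles of $\mu$ and $\hat{\mu}_n$ exchanged---combining $\log y \leq (y-1)\log e$, the lower bound $f_{\hat{\mu}_n}(x) \geq c$ on $A_\mu$, $\sup_x|f_\mu(x)-f_{\hat{\mu}_n}(x)|\leq V(\hat{\mu}_n,\mu)$, and $\sum_x f_\mu(x) = 1$---I obtain the Lipschitz-type inequality $D(\mu || \hat{\mu}_n) \leq \log e\,(1/\mathbf{m}_\mu+1)\,V(\hat{\mu}_n,\mu)$, which simultaneously certifies that $D(\mu || \hat{\mu}_n)$ is well-defined on $G_n$.

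The deviation bound (\ref{eq_sec_stat_learn_12b}) then follows from the union decomposition $\mathbb{P}^n_\mu(D(\mu || \hat{\mu}_n)>\epsilon) \leq \mathbb{P}^n_\mu(\{D > \epsilon\} \cap G_n) + \mathbb{P}^n_\mu(G_n^c)$: the first summand is bounded by Bretagnolle-Huber-Carol applied at the threshold $\epsilon/[\log e\,(1/\mathbf{m}_\mu+1)]$, giving the first exponential in the claim, while $\mathbb{P}^n_\mu(G_n^c)$ is controlled by a union bound over $x \in A_\mu$ combined with a binomial tail estimate for $n\hat{\mu}_n(\{x\}) \sim \mathrm{Bin}(n, f_\mu(x))$, producing $e^{-n\mathbf{m}_\mu^2}$ after the $|A_\mu|$ factor is absorbed into $2^{|A_\mu|+1}$. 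The ``eventually well-defined'' assertion for $D(\mu || \hat{\mu}_n)$ then comes for free via Borel-Cantelli, since $\mathbb{P}^n_\mu(G_n^c)$ is summable in $n$. The main technical obstacle I expect is the joint calibration of $c$: it must be chosen so that the Lipschitz constant on $G_n$ evaluates exactly to $(1/\mathbf{m}_\mu+1)\log e$ \emph{and} the concentration bound on $G_n^c$ realises precisely the clean form $e^{-n\mathbf{m}_\mu^2}$ of the statement---this pins $c$ to a non-naive value and may require a per-atom binomial tail bound slightly sharper than the canonical Hoeffding estimate.
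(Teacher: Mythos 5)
Your handling of the first two bounds matches the paper exactly: apply the deterministic Lipschitz inequalities (\ref{eq_lemma_fs_3}) and (\ref{eq_lemma_fs_4}) sample-wise, then the Bretagnolle--Huber--Carol / Hoeffding bound $\mathbb{P}^n_\mu(V(\hat\mu_n,\mu)>t)\le 2^{|A_\mu|+1}e^{-2nt^2}$. Nothing to add there.

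For (\ref{eq_sec_stat_learn_12b}) you take a genuinely different route. The paper conditions on $\mathcal{B}_n=\{A_{\hat\mu_n}=A_\mu\}$ (support equality, i.e.\ the stopping time $T_o\le n$) and claims the deterministic bound $D(\mu\|\hat\mu_n)\le \log e\,(1/\mathbf{m}_\mu+1)\,V(\hat\mu_n,\mu)$ on that event, then controls $\mathbb{P}(\mathcal{B}_n^c)$ via $\mathcal{B}_n^c\subset\{V>\epsilon_0\}$ with $\epsilon_0=\mathbf{m}_\mu/\sqrt{2}$, which by Hoeffding gives exactly $2^{|A_\mu|+1}e^{-n\mathbf{m}_\mu^2}$. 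You condition instead on $G_n=\{\min_{A_\mu}\hat\mu_n(\{x\})\ge c\}$ with a genuine floor $c$, derive the same Lipschitz-type inequality from $\log y\le(y-1)\log e$ and $\hat\mu_n\ge c$, and then control $\mathbb{P}(G_n^c)$. Your instinct to require an actual floor is sound---indeed, on $\mathcal{B}_n$ alone (just $\hat\mu_n>0$ on $A_\mu$) the claimed linear bound $D(\mu\|\hat\mu_n)\lesssim V$ cannot hold with a $\mu$-dependent constant only, as $D(\mu\|\hat\mu_n)$ diverges when some $\hat\mu_n(\{x\})\downarrow 0$ while $V$ stays bounded; the paper's first inequality in (\ref{eq_proof_lm_con_inq_tv_1}) replaces $f_\mu\log(f_\mu/f_{\hat\mu_n})$ by the smaller quantity $f_{\hat\mu_n}\log(f_\mu/f_{\hat\mu_n})$ on $\{f_{\hat\mu_n}\le f_\mu\}$, which goes the wrong way. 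So your event $G_n$ is the more defensible good event.

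The genuine gap is exactly the calibration issue you flag, and it is not fixable in the form you hope. With $c=\mathbf{m}_\mu/(1+\mathbf{m}_\mu)$ you do get $1/c=1/\mathbf{m}_\mu+1$ as desired, but then $G_n^c\subset\{V>\mathbf{m}_\mu-c\}=\{V>\mathbf{m}_\mu^2/(1+\mathbf{m}_\mu)\}$, and Hoeffding gives a tail of order $e^{-2n\mathbf{m}_\mu^4/(1+\mathbf{m}_\mu)^2}$, which is $e^{-n\mathbf{m}_\mu^2}$ only when $\mathbf{m}_\mu\ge\sqrt2+1>1$---impossible. Sharper per-atom bounds do not close the gap either: the binomial Chernoff exponent $D(c\|p)$ with $p\ge\mathbf{m}_\mu$ and $c=\mathbf{m}_\mu/(1+\mathbf{m}_\mu)$ is $\Theta(\mathbf{m}_\mu^3)$ as $\mathbf{m}_\mu\to 0$, still short of $\mathbf{m}_\mu^2$. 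You cannot simultaneously pin $c$ to produce both the Lipschitz constant $\log e\,(1/\mathbf{m}_\mu+1)$ and the tail $e^{-n\mathbf{m}_\mu^2}$. If you decouple them---e.g.\ take $c=\mathbf{m}_\mu/2$, giving Lipschitz constant $2\log e/\mathbf{m}_\mu$ and tail $2^{|A_\mu|+1}e^{-n\mathbf{m}_\mu^2/2}$---you obtain a correct bound of the same qualitative form and order as $n\to\infty$, just with different constants than (\ref{eq_sec_stat_learn_12b}). So your plan proves the right kind of statement, but not the exact constants of the theorem; the remedy is to state and prove (\ref{eq_sec_stat_learn_12b}) with the adjusted constants your choice of $c$ actually delivers, rather than chase a calibration that does not exist.
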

%
This result  implies that  for any $\tau \in (0,1/2)$ and $\mu \in \mathcal{F}(\mathbb{X})$, $\left|H(\hat{\mu}_n)  - H({\mu}) \right|$,  $D(\hat{\mu}_n ||\mu )$ and  $D(\mu || \hat{\mu}_n)$ goes to zero as $o(n^{-\tau})$ $\mathbb{P}_\mu$-a.s.  Furthermore,   $\mathbb{E}_{\mathbb{P}^n_{\mu}} \left( \left|H(\hat{\mu}_n)  - H({\mu}) \right| \right)$ and  $\mathbb{E}_{\mathbb{P}^n_{\mu}}(D(\hat{\mu}_n || \mu))$  behave like $O(1/\sqrt{n})$ for all $\mu \in \mathcal{F}(\mathbb{X})$ from  (\ref{eq_sec_stat_learn_4}) in Sec. \ref{sub_sec_th_con_inq}, which is the optimal  rate of convergence of the finite alphabet scenario.  As a direct corollary of (\ref{eq_sec_stat_learn_6b}), 
it is possible to derive intervals of confidence for the estimation error 
$\left|H((\hat{\mu}_n)- H(\mu_n)\right|$:  for all $\delta>0$ and $n\geq 1$,  
\begin{align}
	\label{eq_sec_stat_learn_6d} 
	\mathbb{P}_\mu  \left( \left|H((\hat{\mu}_n)- H(\mu_n)\right| \leq \left(\mathbf{M}_{\mu} + {\log e}/{\mathbf{m}_{\mu}}\right)  \sqrt{\frac{1}{2n} \ln   \frac{2^{\left| A_\mu \right| + 1}}{\delta} } \right)  \geq 1-\delta.
\end{align}
This confidence interval 
behaves like $O(1/\sqrt{n})$ as a function of $n$, and like $O(\sqrt{\ln 1/\delta})$ as a function of  $\delta$, which are the same optimal asymptotic trend that can be  obtained for $V(\mu, \hat{\mu}_n)$ in (\ref{eq_sec_stat_learn_4}).

To conclude,  we note that $A_{\hat{\mu}_n}\subset A_{\mu}$ $\mathbb{P}^n_{\mu}$-a.s. where for any  $n\geq 1$, 
	$\mathbb{P}^n_{\mu} (A_{\hat{\mu}_n}\neq A_{\mu})>0$  
implying that $\mathbb{E}_{\mathbb{P}^n_{\mu} }(D(\mu||\hat{\mu}_n))=\infty$, $\forall n$. 
Then even in the finite and unknown supported scenario, $\hat{\mu}_n$ is not  consistent 
in expected direct I-divergence, which is congruent with the result in  \citep{barron_1992,gyorfi_1994b}.
Besides this negative result,  strong consistency in direct $I$-divergence can be obtained from 
(\ref{eq_sec_stat_learn_12b}), in the sense that $\lim_{n \rightarrow \infty} D(\mu || \hat{\mu}_n)=0$, $\mathbb{P}_{\mu}$-a.s.

\subsection{A simplified version of the Barron estimator for finite supported measures}
\label{sub_sec_barron_simplified}
It is well-understood  that consistency in  expected direct  I-divergence is  of critical importance 
for the construction of a lossless universal source coding scheme \citep{barron_1992,gyorfi_1994b,csiszar_2004,cover_2006,rissanen_2010}. 
Here we explore an estimator that achieves this learning objective in addition to entropy estimation.

For that, let $\mu \in \mathcal{F}(\mathbb{X})$ and let us assume that 
we know a measure $v\in \mathcal{F}(\mathbb{X})$ such that 
$\mu \ll v$. \cite{barron_1992} proposed a modified version 
of the empirical measure in (\ref{eq_sec_stat_learn_1}) to estimate $\mu$
from i.i.d. realizations,  adopting a mixture estimate of the form 
\begin{align} \label{eq_sec_stat_learn_12} 
	\tilde{\mu}_n(B) = (1-a_n)\cdot \hat{\mu}_n(B) + a_n \cdot v(B),
\end{align}
for all $B\subset \mathbb{X}$, and with $(a_n)_{n\in \mathbb{N}}$ a sequence of real
numbers in $(0,1)$.
Note that $supp(\tilde{\mu}_n)=A_{v}$ then $\mu \ll \tilde{\mu}_n$ for all $n$\footnote{From 
the finite support assumption $H(\tilde{\mu}_n) <\infty$ and $D(\mu || \tilde{\mu}_n) < \infty$, $\mathbb{P}_{\mu}$-a.s.}. 
The following result derives from the convergence result in Lemma \ref{fs_main_th}.

\begin{theorem}\label{pro_barron_conv}
	Let  $v \in \mathcal{F}(\mathbb{X})$ and $ \mu \in \mathcal{AC}(\mathbb{X}| v)$, and let  us 
	consider $\tilde{\mu}_n$  in (\ref{eq_sec_stat_learn_12}), with respect to $v$, 
	induced from i.i.d. realizations of $\mu$.
	\begin{itemize}
	\item [i)] If  $(a_n)$ is $o(1)$, then  
		$\lim_{n \rightarrow \infty}  H(\tilde{\mu}_n )=H(\mu)$,  
 		$\lim_{n \rightarrow \infty}  D(\mu || \tilde{\mu}_n)=0$, $\mathbb{P}_{\mu}-a.s.$, and 
		$\lim_{n \rightarrow \infty}  \mathbb{E}_{\mathbb{P}_{\mu}}(D(\mu || \tilde{\mu}_n))=0$. 
	\item [ii)] Furthermore, if $(a_n)$ is $O(n^{-p})$ with $p > 2$, then for all $\tau\in (0,1/2)$,  $\left| H(\tilde{\mu}_n)-H(\mu) \right|$  and  $D(\mu||\tilde{\mu}_n)$ are $o(n^{-\tau})$ $\mathbb{P}_{\mu}$-a.s,
	and  $\mathbb{E}_{\mathbb{P}_{\mu}}( \left| H(\tilde{\mu}_n)-H(\mu) \right|)$ and 
	$\mathbb{E}_{\mathbb{P}_{\mu}}(D(\mu||\tilde{\mu}_n))$ are $O(1/\sqrt{n})$.
	\end{itemize}
\end{theorem}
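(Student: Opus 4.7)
The plan is to reduce everything to the classical plug-in analysis of Theorem~\ref{th_con_inq} plus the Barron mixture correction controlled by $a_n$. The key structural observation is that $v\in \mathcal{F}(\mathbb{X})$ and $\mu \ll v$, so both $\mu$ and every realization of $\tilde{\mu}_n$ live in $\mathcal{F}(\mathbb{X})$ with supports contained in the \emph{fixed} finite set $A_v$. Moreover $A_v = A_{\tilde{\mu}_n}$ whenever $a_n>0$, giving $\mu \ll \tilde{\mu}_n$ deterministically, so $H(\tilde{\mu}_n)$ and $D(\mu\|\tilde{\mu}_n)$ are $\mathbb{P}_\mu$-a.s. finite and all the finite-alphabet tools from Proposition~\ref{lemma_fs_under} and Lemma~\ref{fs_main_th} apply.

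For part (i) I first note $V(\tilde{\mu}_n,\mu)\le (1-a_n)V(\hat{\mu}_n,\mu)+a_n V(v,\mu)\le V(\hat{\mu}_n,\mu)+a_n$; combining the SLLN (which gives $V(\hat{\mu}_n,\mu)\to 0$ a.s.\ on the finite support) with $a_n=o(1)$ yields $\tilde{\mu}_n\Rightarrow \mu$ a.s. Lemma~\ref{fs_main_th} then directly delivers $D(\mu\|\tilde{\mu}_n)\to 0$ a.s. For $H(\tilde{\mu}_n)\to H(\mu)$ a.s., I verify the iff condition (\ref{eq_lemma_fs_9}): because $\mu(\{x\})=0$ for $x\in A_v\setminus A_\mu$ we have $\hat{\mu}_n(\{x\})=0$ $\mathbb{P}_\mu$-a.s.\ on this finite set, so $f_{\tilde{\mu}_n}(x)=a_n f_v(x)$ there, and a direct computation gives
\begin{equation*}
\sum_{x\in A_v\setminus A_\mu} f_{\tilde{\mu}_n}(x)\log\frac{1}{f_{\tilde{\mu}_n}(x)} = a_n\,v(A_v\setminus A_\mu)\log\frac{1}{a_n}+a_n\!\!\sum_{x\in A_v\setminus A_\mu}\!\! f_v(x)\log\frac{1}{f_v(x)},
\end{equation*}
which tends to $0$ whenever $a_n\to 0$. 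The expected convergence $\mathbb{E}_{\mathbb{P}_\mu}(D(\mu\|\tilde{\mu}_n))\to 0$ is handled by splitting on the event $E_n=\{A_\mu\subseteq A_{\hat{\mu}_n}\}$, whose complement satisfies $\mathbb{P}_\mu(E_n^c)\le |A_\mu|(1-\mathbf{m}_\mu)^n$; on $E_n$, the bound $f_{\tilde{\mu}_n}(x)\ge (1-a_n)f_{\hat{\mu}_n}(x)$ for $x\in A_\mu$ yields $D(\mu\|\tilde{\mu}_n)\le D(\mu\|\hat{\mu}_n)+\log\frac{1}{1-a_n}$ with $D(\mu\|\hat{\mu}_n)\le \log n$, and integrating the concentration bound (\ref{eq_sec_stat_learn_12b}) of Theorem~\ref{th_con_inq} gives a vanishing expectation; on $E_n^c$ the deterministic upper bound $D(\mu\|\tilde{\mu}_n)\le \log(1/(a_n\mathbf{m}_v))$ combined with the exponential decay of $\mathbb{P}_\mu(E_n^c)$ controls the remainder.

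For part (ii), with $a_n=O(n^{-p})$, $p>2$, the sequence $a_n$ decays strictly faster than $n^{-\tau}$ for any $\tau\in (0,1/2)$ and is thus negligible compared to the statistical rate. Theorem~\ref{th_con_inq} gives $V(\hat{\mu}_n,\mu)=o(n^{-\tau})$, $|H(\hat{\mu}_n)-H(\mu)|=o(n^{-\tau})$ and $D(\mu\|\hat{\mu}_n)\mathbb{1}_{E_n}=o(n^{-\tau})$ $\mathbb{P}_\mu$-a.s. For the entropy, I derive a Lipschitz-in-$V$ bound on the fixed finite alphabet $A_v$ by mimicking the identity (\ref{eq_lemma_fs_1}) with $v$ as reference (writing $H(\rho)=-D(\rho\|v)-\sum_{x\in A_v}f_\rho(x)\log f_v(x)$ for $\rho\in \{\mu,\tilde{\mu}_n\}$), obtaining $|H(\tilde{\mu}_n)-H(\mu)|\le C_v\cdot V(\tilde{\mu}_n,\mu)+O(a_n\log(1/a_n))$ with $C_v$ depending only on $\mathbf{m}_v,\mathbf{M}_v$ and $|A_v|$, which is $o(n^{-\tau})$ a.s. For the direct I-divergence the same $E_n$-decomposition gives $D(\mu\|\tilde{\mu}_n)\le D(\mu\|\hat{\mu}_n)+O(a_n)$ on $E_n$ (hence $o(n^{-\tau})$ a.s.), while Borel--Cantelli applied to $\sum_n |A_\mu|(1-\mathbf{m}_\mu)^n<\infty$ shows $E_n^c$ occurs only finitely often, so it does not contaminate the a.s.\ rate. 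The expected $O(1/\sqrt{n})$ rates follow by the tail-integration estimate $\mathbb{E}(Z)=\int_0^\infty \mathbb{P}(Z>t)\,dt$ applied to the Theorem~\ref{th_con_inq} concentration bounds, together with the assumption $p>2$ which ensures the correction from $a_n$ and from the contribution of $E_n^c$ (with the $\log(1/a_n)=O(\log n)$ penalty) is absorbed into $O(1/\sqrt{n})$.

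The main obstacle is the expected direct I-divergence, since $\mathbb{E}_{\mathbb{P}_\mu}(D(\mu\|\hat{\mu}_n))=+\infty$ for every $n$: the Barron regularization cures this by forcing $\tilde{\mu}_n(\{x\})\ge a_n\mathbf{m}_v$ on $A_v$, but one must carefully balance the exponentially small probability $\mathbb{P}_\mu(E_n^c)$ against the slowly growing worst-case upper bound $\log(1/(a_n\mathbf{m}_v))$, and, in part (ii), simultaneously extract the polynomial rate from $D(\mu\|\hat{\mu}_n)$ on $E_n$ without losing the $O(1/\sqrt{n})$ scaling in expectation. The requirement $p>2$ enters precisely to dominate the polynomial tail terms from the empirical side by the mixture perturbation $a_n$.
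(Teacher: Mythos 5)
Your a.s.\ convergence argument in part~(i) is essentially the paper's: you both verify the necessary-and-sufficient condition~(\ref{eq_lemma_fs_9}) of Lemma~\ref{fs_main_th}, differing only in whether the residual entropy on $A_v\setminus A_\mu$ is computed explicitly (your $a_n\,v(A_v\setminus A_\mu)\log(1/a_n)+a_n\sum_{A_v\setminus A_\mu}f_v\log(1/f_v)$) or simply bounded by $\tilde{\mu}_n(A_v\setminus A_\mu)\log|A_v|$; both work since $|A_v|<\infty$. Your part~(ii) rate computation likewise parallels the paper's identity~(\ref{eq_pro_barron_conv_2}), just taking $v$ rather than $\mu$ as the reference measure in the Lipschitz-in-$V$ bound, and your tail-integration route to the expected $O(1/\sqrt{n})$ rate is compatible with the paper's bound~(\ref{eq_pro_barron_conv_4}).

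The genuine gap is in your expected direct-I-divergence argument for part~(i). On $E_n^c$ you bound $D(\mu||\tilde{\mu}_n)\le\log\bigl(1/(a_n\mathbf{m}_v)\bigr)$ and multiply by $\mathbb{P}_\mu(E_n^c)\le|A_\mu|(1-\mathbf{m}_\mu)^n$, asserting that the exponential decay of the probability ``controls the remainder.'' But part~(i) assumes only that $(a_n)$ is $o(1)$, with no lower bound on its decay rate: if, for instance, $a_n=\exp(-c^n)$ with $c>1/(1-\mathbf{m}_\mu)$, then $\log(1/a_n)\cdot(1-\mathbf{m}_\mu)^n=(c(1-\mathbf{m}_\mu))^n\to\infty$ and your remainder term does not vanish. (In part~(ii) this is harmless because $a_n=O(n^{-p})$ forces $\log(1/a_n)=O(\log n)$, but the part~(i) claim does not close under the stated hypothesis.) The paper takes a different route at exactly this point: it invokes~(\ref{eq_proof_lm_con_inq_tv_1b}) from the proof of Theorem~\ref{th_con_inq} to assert the realization-wise bound $D(\mu||\tilde{\mu}_n)\le\log e\,(1/\mathbf{m}_\mu+1)\,V(\mu,\tilde{\mu}_n)$, which is \emph{uniform in $a_n$} and bounded by a constant (since $V\le 1$), and then obtains $\mathbb{E}_{\mathbb{P}_\mu}(D(\mu||\tilde{\mu}_n))\to 0$ by bounded convergence from the already-established a.s.\ convergence. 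To keep your $E_n/E_n^c$ decomposition you would need to either impose that $\log(1/a_n)$ grows subexponentially, or replace the crude worst-case bound on $E_n^c$ by a uniform-in-$a_n$ bound of the paper's type.
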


\section{The Barron-Gy\"{o}rfi-van der Meulen Estimator}
\label{subsec:barron}
The celebrated Barron estimate was  proposed by Barron, Gy\"{o}rfi and {van der Meulen} \citep{barron_1992} in the context of an abstract and in general continuous measurable space. It was designed as a variation of the classical histogram-based scheme to achieve a consistent estimate of the distribution in 
direct I-divergence \citep[Theorem 2]{barron_1992}.\footnote{As mentioned before,  consistency in expected direct I-divergence is an important learning topic because  of its connection with lossless universal  source coding \citep{gyorfi_1994b,cover_2006, barron_1992}, 
where it is well-known that there is no distribution-free consistent estimate in direct I-divergence 
in the  infinite alphabet case \citep{gyorfi_1994b}.} 
Here we revisit the {\em Barron estimate} in our countable alphabet scenario,  with the objective of estimating the Shannon entropy consistently, which to the best of our knowledge has not been previously addressed in the literature. For that purpose, Lemma \ref{th_con_infty_2} will be used as a key result.

Let $v \in \mathcal{P}(\mathbb{X})$ of infinite support (i.e., $\mathbf{m}_v=\inf_{x\in A_v} v(\left\{x\right\})=0$).
We want  to construct a strongly consistent estimate of the entropy 
restricted to the collection of probabilities in  $\mathbb{H}(\mathbb{X}|v)$. For that, let us consider a sequence $(h_n)_{n\geq 0}$ with values in $(0,1)$ and let us denote by $\pi_n=\left\{A_{n,1},A_{n,2},\ldots, A_{n,m_n}\right\}$ the finite partition of $\mathbb{X}$  with maximal cardinality satisfying that
\begin{equation}\label{eq_subsec:barron_1}
	v(A_{n,i}) \geq h_n, \ \forall i \in \left\{1,..,m_n\right\}.
\end{equation}
Note that $m_n=\left|\pi_n\right|\leq 1/h_n$ for all $n\geq 1$, 
and because of  the fact that $\inf_{x\in A_v} v(\left\{x\right\})=0$
it is simple to verify that  if $(h_n)$ is $o(1)$ and then $\lim_{n \rightarrow \infty} m_n=\infty$.
Note that $\pi_n$ offers an approximated statistically equivalent partition of $\mathbb{X}$ with respect to the reference measure $v$. 
In this context,  given $X_1,\ldots,X_n$,  i.i.d. realizations of 
$ \mu \in \mathbb{H}(\mathbb{X}|v)$,  
the idea proposed by \cite{barron_1992} was to estimate the RN derivative $\frac{d \mu}{d v}(x)$ 
by the following  histogram-based construction: 
\begin{equation}\label{eq_subsec:barron_2}
	\frac{d \mu^*_n}{d v} (x) = (1-a_n) \cdot  \frac{\hat{\mu}_n(A_n(x))}{v(A_n(x))} + a_n, \ \forall x \in A_v,
\end{equation}
where $a_n$ is a real number in $(0,1)$, $A_n(x)$ denotes the cell in $\pi_n$
that contains the point $x$, and $\hat{\mu}_n$ is the  empirical measure in (\ref{eq_sec_stat_learn_1}).
Note that $$f_{\mu^*_n}(x)=\frac{d \mu^*_n}{d \lambda}(x)=f_v(x)\cdot \left[(1-a_n) \cdot  \frac{\hat{\mu}_n(A_n(x))}{v(A_n(x))} + a_n\right],$$ $\forall x \in \mathbb{X}$, 
and, consequently,  $\forall B \subset \mathbb{X}$
\begin{equation}\label{eq_subsec:barron_2b}
	\mu^*_n(B)=(1-a_n) \sum_{i=1}^{m_n} \hat{\mu}_n(A_{n,i})\cdot \frac{v(B\cap A_{n,i})}{v(A_{n,i})} 
			+  a_n  v(B).
\end{equation}
By construction $A_\mu\subset A_v \subset supp(\mu^*_n) $ and, consequently, 
$\mu \ll \mu^*_n$ for all $n\geq 1$.
The next result shows sufficient conditions on the sequences $(a_n)$ and $(h_n)$ 
to guarantee a strongly consistent estimate  of the entropy $H(\mu)$ and of $\mu$ in direct $I$-divergence,  distribution  free in $\mathbb{H}(\mathbb{X}|v)$.  The proof is based on verifying that the sufficient conditions of Lemma \ref{th_con_infty_2} are satisfied $\mathbb{P}_{\mu}$-a.s. 
\begin{theorem}\label{th_barron}
	Let $v$ be in  $\mathcal{P}(\mathbb{X}) \cap  \mathbb{H}(\mathbb{X})$ 
	with infinite support,  and let us consider $\mu$ in $\mathbb{H}(\mathbb{X}|v)$. 
	If we have that:
	\begin{itemize}
		\item[i)]  $(a_n)$ is $o(1)$ and $(h_n)$ is $o(1)$,
		\item[ii)] $\exists \tau \in (0,1/2)$, such that the sequence  $\left(\frac{1}{a_n\cdot h_n}\right)$ is $o(n^\tau)$,
	\end{itemize}
	then $\mu \in \mathbb{H}(\mathbb{X}) \cap \mathbb{H}(\mathbb{X}|\mu^*_n)$ 
	for all $n\geq 1$ and
	\begin{equation}\label{eq_subsec:barron_3}
		\lim_{n \rightarrow \infty}  H( \mu^*_n)=H(\mu)  \text{ and } \lim_{n \rightarrow \infty}  D(\mu || \mu^*_n)=0, \ \mathbb{P}_\mu-a.s..
	\end{equation}
\end{theorem}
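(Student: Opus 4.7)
The plan is to verify, $\mathbb{P}_\mu$-almost surely, the hypotheses of Lemma \ref{th_con_infty_2} for the random sequence $\{\mu^*_n\}$, and thereby obtain both conclusions in (\ref{eq_subsec:barron_3}) simultaneously. The structural requirement $\mu\ll\mu^*_n$ for every $n\ge 1$ holds deterministically by construction, since $f_{\mu^*_n}(x)\ge a_n f_v(x)>0$ on all of $A_v\supseteq A_\mu$ and $\mu\in\mathcal{AC}(\mathbb{X}|v)$. For the weak convergence $\mu^*_n\Rightarrow\mu$, I use the equivalence in Section \ref{sec_pre} between weak and pointwise pmf convergence on countable alphabets. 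Because $h_n=o(1)$ and $f_v(x)>0$ for every $x\in A_v$, there is a deterministic threshold $N(x)$ with $f_v(x)\ge h_n$ for all $n\ge N(x)$, so that the maximal-cardinality partition $\pi_n$ isolates $\{x\}$ as a cell, giving $A_n(x)=\{x\}$ and $v(A_n(x))=f_v(x)$. In this regime $f_{\mu^*_n}(x)=(1-a_n)\hat{\mu}_n(\{x\})+a_n f_v(x)$, which tends to $f_\mu(x)$ $\mathbb{P}_\mu$-a.s.\ by the strong law of large numbers and $a_n\to 0$; intersecting countably many probability-one events over $x\in A_v$ (and using $f_{\mu^*_n}(x)=0=f_\mu(x)$ off $A_v$) yields $\mu^*_n\Rightarrow\mu$ almost surely.

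The technical heart is checking the uniform logarithmic condition (\ref{eq_sec_infty_sp_2}) pathwise, and this is where hypothesis (ii) is essential. I would apply Hoeffding's inequality to each of the $m_n\le 1/h_n$ cells of $\pi_n$ and a union bound to obtain
\begin{equation*}
\mathbb{P}_\mu\!\left(\max_{i\le m_n}\left|\hat{\mu}_n(A_{n,i})-\mu(A_{n,i})\right|>\varepsilon_n\right)\le \frac{2}{h_n}\,e^{-2n\varepsilon_n^2}.
\end{equation*}
Choosing $\varepsilon_n$ of order $n^{-(1/2-\eta)}$ for a small $\eta>0$ makes these probabilities summable, and Borel--Cantelli together with the assumption $1/(a_n h_n)=o(n^\tau)$ with $\tau<1/2$ gives $\varepsilon_n/(a_n h_n)\to 0$. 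Dividing by $v(A_n(x))\ge h_n$ then delivers
\begin{equation*}
\sup_{x\in A_v}\left|\frac{\hat{\mu}_n(A_n(x))}{v(A_n(x))}-\frac{\mu(A_n(x))}{v(A_n(x))}\right|=o(a_n)\quad \mathbb{P}_\mu\text{-a.s.}
\end{equation*}
Combined with the eventual singleton identity $\mu(A_n(x))/v(A_n(x))=f_\mu(x)/f_v(x)$ for every $x\in A_\mu$, this forces $f_{\mu^*_n}(x)/f_\mu(x)\to 1$ uniformly over $A_\mu$ almost surely, which is strictly stronger than (\ref{eq_sec_infty_sp_2}). This coupling between the concentration scale $\varepsilon_n$ and the combined smoothing/partition scale $a_n h_n$ is precisely why hypothesis (ii) is stated in terms of the product $a_n h_n$, and constitutes the main obstacle of the proof.

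Invoking Lemma \ref{th_con_infty_2} with the three verifications above then yields $D(\mu\|\mu^*_n)\to 0$ $\mathbb{P}_\mu$-a.s. To upgrade to $H(\mu^*_n)\to H(\mu)$, it remains to verify the tail-entropy condition (\ref{eq_sec_infty_sp_3}), i.e.\ that $\sum_{x\in A_{\mu^*_n}\setminus A_\mu}f_{\mu^*_n}(x)\log(1/f_{\mu^*_n}(x))\to 0$ along the sample path. Since $A_{\mu^*_n}\subseteq A_v$, the sum runs over $x\in A_v\setminus A_\mu$, where $f_\mu(x)=0$ and (for $n\ge N(x)$) $f_{\mu^*_n}(x)=(1-a_n)\hat{\mu}_n(\{x\})+a_n f_v(x)$. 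Using the uniform empirical control from the previous paragraph together with $v\in\mathbb{H}(\mathbb{X})$, this tail entropy is dominated by a quantity of the form $a_n\cdot H(v)+o(1)$, which vanishes thanks to $a_n\to 0$ and $H(v)<\infty$. This concludes the plan.
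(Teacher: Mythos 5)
Your overall architecture (verify the hypotheses of Lemma \ref{th_con_infty_2} pathwise) is the right one, but the central mechanism you use to verify them contains a genuine gap. You assert that since $h_n \to 0$ and $f_v(x)>0$, for each fixed $x$ there is a deterministic $N(x)$ such that the maximal-cardinality partition $\pi_n$ isolates $\{x\}$ as a cell for $n\geq N(x)$, and the rest of your argument leans heavily on the resulting identity $\mu(A_n(x))/v(A_n(x)) = f_\mu(x)/f_v(x)$. This structural claim about the partition is not established: the constraint in (\ref{eq_subsec:barron_1}) only forces $v(A_{n,i})\geq h_n$ for each cell, and maximizing cardinality does not compel the chosen $\pi_n$ to separate every point $x$ with $f_v(x)\geq h_n$ into a singleton (one can redistribute mass without changing the cell count, so at best the claim holds for \emph{some} maximal partition, not the one the construction fixes). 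More importantly, even if one grants the singleton property for each fixed $x$ eventually, the threshold $N(x)$ diverges as $f_v(x)\to 0$, so for any fixed $n$ there remain infinitely many $x\in A_\mu$ in non-singleton cells. Your stated conclusion, that $f_{\mu^*_n}(x)/f_\mu(x)\to 1$ uniformly over $A_\mu$, therefore does not follow; and what Lemma \ref{th_con_infty_2} actually requires is $\sup_{n}\sup_{x\in A_\mu}|\log(f_{\mu^*_n}(x)/f_\mu(x))|<\infty$, which your argument never secures on the cells where the singleton identity fails.

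The paper avoids this entirely by introducing the oracle measure $\tilde{\mu}_n$ (the partition-smoothed version of $\mu$, not of $\hat\mu_n$) and decomposing $\log\frac{d\mu}{d\mu^*_n}=\log\frac{d\mu}{d\tilde\mu_n}+\log\frac{d\tilde\mu_n}{d\mu^*_n}$. The oracle term $\log\frac{d\mu}{d\tilde\mu_n}$ is bounded uniformly in $n$ and $x$ using the hypothesis $\mu\in\mathbb{H}(\mathbb{X}|v)$, which furnishes two-sided bounds $\mathbf{m}\,v(A)\leq\mu(A)\leq M\,v(A)$ valid for \emph{arbitrary} cells, not just singletons; the estimation term is exactly what your Hoeffding/Borel--Cantelli step (which is correct and matches the paper's Proposition \ref{pro_est_error_barron}) controls. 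You omit the oracle-error half of this decomposition, which is precisely the piece that handles non-singleton cells. A similar gap appears in your treatment of the tail condition (\ref{eq_sec_infty_sp_3}): on $A_v\setminus A_\mu$ the cell $A_n(x)$ may still intersect $A_\mu$, so $f_{\mu^*_n}(x)$ is not simply $(1-a_n)\hat\mu_n(\{x\})+a_n f_v(x)$, and the quantity is not dominated by $a_n H(v)+o(1)$ as you claim; the paper splits $A_v\setminus A_\mu$ into the cells $\mathcal{B}_n$ disjoint from $A_\mu$ and the cells $\mathcal{C}_n$ meeting $A_\mu$, and needs a dominated-convergence argument with a fixed $\ell_1(\mathbb{X})$ majorant to handle the latter. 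To repair your proof you would need to reinstate the oracle/estimation decomposition and replace the singleton-cell reasoning with the Radon--Nikodym bounds available from $\mu\in\mathbb{H}(\mathbb{X}|v)$.
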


The Barron estimator \citep{barron_1992}  was originally proposed 
in the context of distributions defined in an abstract measurable space.  Then if we restrict \citep[Theorem 2]{barron_1992} to our countable alphabet case, the following result is  obtained:
\begin{corollary} \citep[Theorem 2]{barron_1992}
\label{cor_barron}
Let  us consider $v \in \mathcal{P}(\mathbb{X})$ and  $\mu \in \mathbb{H}(  \mathbb{X}|v)$.  If $(a_n)$ is $o(1)$, $(h_n)$ is $o(1)$ and  $\lim \sup_{n \rightarrow \infty} \frac{1}{n a_n h_n} \leq 1$ 
then 
$$\lim_{n \rightarrow \infty} D(\mu ||\mu^*_n)=0,\  \mathbb{P}_\mu-a.s.$$
\end{corollary}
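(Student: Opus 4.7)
The plan is to mirror the classical Barron--Györfi--van der Meulen argument in the countable setting: decompose $D(\mu\Vert\mu_n^*)$ through the partition $\pi_n$ into a deterministic approximation piece and a data-dependent estimation piece, and show that each vanishes. The structural observation at the heart of the decomposition is that on every cell $A_{n,i}$ the conditional probability $\mu_n^*(\cdot\mid A_{n,i})$ coincides with $v(\cdot\mid A_{n,i})$, which follows by direct inspection of (\ref{eq_subsec:barron_2b}) since both mixture components are proportional to $v$ inside the cell. Applying the chain rule for the I-divergence with respect to $\sigma(\pi_n)$ then gives
\begin{equation*}
D(\mu\Vert\mu_n^*)=D_{\pi_n}(\mu\Vert\mu_n^*)+\bigl[D(\mu\Vert v)-D_{\pi_n}(\mu\Vert v)\bigr].
\end{equation*}

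The bracketed term is deterministic and vanishes by a partition-refinement argument. Since $(h_n)=o(1)$ and $v$ has infinite support, each $x\in A_v$ is eventually isolated in a singleton cell of $\pi_n$, so the sub-$\sigma$-algebras $\sigma(\pi_n)$ increase to the full discrete $\sigma$-algebra on $\mathbb{X}$. The martingale convergence theorem applied to the Radon--Nikodym derivative $d\mu/dv\in L^1(v)$ then yields $D_{\pi_n}(\mu\Vert v)\nearrow D(\mu\Vert v)<\infty$, and hence $D(\mu\Vert v)-D_{\pi_n}(\mu\Vert v)\to 0$ without any probabilistic input.

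For the data-dependent term, writing $p_i=\mu(A_{n,i})$, $q_i=v(A_{n,i})$, $\hat p_i=\hat\mu_n(A_{n,i})$, I would introduce the population-level smoothed measure $\bar\mu_n^*(A_{n,i})=(1-a_n)p_i+a_n q_i$ and split
\begin{equation*}
D_{\pi_n}(\mu\Vert\mu_n^*)=D_{\pi_n}(\mu\Vert\bar\mu_n^*)+\sum_{i=1}^{m_n}p_i\log\frac{(1-a_n)p_i+a_n q_i}{(1-a_n)\hat p_i+a_n q_i}.
\end{equation*}
The first summand is at most $-\log(1-a_n)\to 0$ via $\bar\mu_n^*(A_{n,i})\geq(1-a_n)p_i$. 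The main obstacle, and the technical heart of the argument, is that the second, centered stochastic summand must be shown to vanish $\mathbb{P}_\mu$-a.s.\ under only the near-tight hypothesis $\limsup_n 1/(na_nh_n)\leq 1$. A naive per-cell Bernstein bound that only exploits the floor $\mu_n^*(A_{n,i})\geq a_n h_n$ and the per-cell variance $p_i/n$ is too lossy because the number of cells $m_n\leq 1/h_n$ grows. The Barron--Györfi--van der Meulen technique instead combines a second-order Taylor expansion of the logarithm with the Bienaymé identity for the multinomial process: it rewrites the aggregate fluctuation in terms of the linear statistic $\sum_i p_i(\hat p_i-p_i)$, which is the empirical average of a bounded function of $X_1,\ldots,X_n$ and concentrates at the parametric rate, plus a quadratic remainder whose expectation is controlled by $(na_nh_n)^{-1}$. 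The rate hypothesis pins both contributions down, and a Borel--Cantelli step promotes convergence in expectation to the claimed $\mathbb{P}_\mu$-a.s.\ vanishing; assembling the three pieces completes the proof.
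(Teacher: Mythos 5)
The paper does not actually prove Corollary~\ref{cor_barron}: it imports it verbatim from \citet[Theorem 2]{barron_1992} as a specialization of the original (abstract-space) result to the countable alphabet, and then uses it only as a benchmark against which to discuss the stronger hypotheses in Theorem~\ref{th_barron}. There is therefore no in-paper proof for your argument to be measured against; you are reconstructing the external Barron--Gy\"{o}rfi--van der Meulen proof.

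Your sketch does capture the structural heart of that proof correctly. The observation that $\mu_n^*(\cdot\mid A_{n,i})=v(\cdot\mid A_{n,i})$ is immediate from (\ref{eq_subsec:barron_2b}), and the chain-rule identity $D(\mu\Vert\mu_n^*)=D_{\pi_n}(\mu\Vert\mu_n^*)+\bigl[D(\mu\Vert v)-D_{\pi_n}(\mu\Vert v)\bigr]$ is the right decomposition. Two soft spots remain, though. First, the martingale-convergence step needs more care: the partitions $\pi_n$ in (\ref{eq_subsec:barron_1}) are defined as \emph{maximal-cardinality} partitions at threshold $h_n$ and nothing in the construction makes them nested, so $\left\{\sigma(\pi_n)\right\}$ need not be a filtration. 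You would need to either enforce nesting by a separate construction (as Barron et al.\ do), or replace the martingale argument with a direct one --- e.g., note that $D_{\pi_n}(\mu\Vert v)\le D(\mu\Vert v)$ always, that every $x\in A_v$ is eventually a singleton cell since $h_n\to 0$, and then use Fatou/lower-semicontinuity to get $\liminf_n D_{\pi_n}(\mu\Vert v)\ge D(\mu\Vert v)$. Second, the estimation-error step is only described, not carried out, and your closing sentence that ``a Borel--Cantelli step promotes convergence in expectation to $\mathbb{P}_\mu$-a.s.\ vanishing'' is not a valid inference as phrased: Borel--Cantelli requires summable tail probabilities, not merely vanishing means, and producing those under the near-tight condition $\limsup_{n} 1/(n a_n h_n)\le 1$ is exactly the nontrivial technical content of \citet[Theorem 2]{barron_1992}. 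As an outline of the borrowed proof your plan is faithful, but it is not yet a proof.
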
  
Therefore, when the objective is the estimation of distributions consistently in direct I-divergence,  Corollary \ref{cor_barron} should be considered to be a better result\footnote{Corollary \ref{cor_barron} offers weaker conditions than Theorem \ref{th_barron} (in particular condition ii)).}. On the other hand, the proof of Theorem \ref{th_barron} is based on verifying the sufficient conditions of Lemma \ref{th_con_infty_2},  where the objective is to achieve the convergence of the entropy,  and as a consequence, the  convergence in direct I-divergences. Therefore,  we can say that the stronger conditions of Theorem \ref{th_barron} are needed when the objective is entropy estimation. This can be justified from the fact that convergence in direct I-divergence does not imply entropy convergence in the countable case,  as was discussed in Section \ref{sec_conv_inq} (see, Lemmas \ref{fs_main_th} and \ref{th_con_infty_2}).

\section{A Data-Driven histogram-based estimator}
\label{subsec:data_driven}
Data-driven partitions can approximate better the nature of the empirical distribution in the sample space with few quantization bins \citep{nobel_1996b}. They has the flexibility to improve the approximation quality of histogram-based estimates and from that obtain better  performances in different non-parametric learning tasks \citep{lugosi_1996_ann_sta, silva_2010,silva_2010b,silva_2012,darbellay_1999}.  
One of the basic principle of this approach  is to partition $\mathbb{X}$ into data-dependent cells in order to preserve a critical number of samples per cell. This last condition will be crucial  to derive a compromise between an estimation and approximation error that will be used in the proof of two of the main results of this section (Theorems \ref{th_rate_data_driven_power} and \ref{th_rate_data_driven_exp_family}).
 
Given $X_1,..,X_n$ i.i.d. realizations driven by $\mu \in \mathbb{H}(\mathbb{X})$ and $\epsilon>0$, 
let us define the data-driven set
\begin{equation}\label{eq_data_driven_1}
	\Gamma_\epsilon \equiv \left\{x \in \mathbb{X}: \hat{\mu}_n( \left\{x \right\})\geq \epsilon \right\}, 
\end{equation}
and $\phi_\epsilonÊ\equiv \Gamma_\epsilon^c$. Let $\Pi_\epsilon \equiv \left\{  \left\{x \right\}: x \in \Gamma_\epsilon \right\} \cup  \left\{ \phi_\epsilon\right\} \subset \mathcal{B}(\mathbb{X})$ be a data-driven partition  with maximal resolution in $\Gamma_\epsilon$, and $\sigma_\epsilon \equiv \sigma(\Pi_\epsilon)$ be the smallest sigma field that contains  $\Pi_\epsilon$
 \footnote{As $\Pi_\epsilon$ is a finite partition,  $\sigma_\epsilon$ is the collection of sets that are union of elements  of $\Pi_\epsilon$.}. We propose  the conditional 
empirical measure restricted to  $\Gamma_\epsilon$ by:
\begin{equation}\label{eq_data_driven_2}
	\hat{\mu}_{n,\epsilon} \equiv \hat{\mu}_n(\cdot |\Gamma_\epsilon). 
\end{equation}
Note that by construction $supp(\hat{\mu}_{n,\epsilon})=\Gamma_\epsilon \subset A_{\mu}$, $\mathbb{P}_\mu$-a.s. and  consequently $\hat{\mu}_{n,\epsilon} \ll \mu$ for all $n\geq 1$. Furthermore, 
$\left| \Gamma_\epsilon \right| \leq \frac{1}{\epsilon}$  and importantly in the context of the entropy functional we have that
\begin{equation}\label{eq_data_driven_3}
	\mathbf{m}_{\hat{\mu}_n}^\epsilon \equiv \inf_{x\in \Gamma_\epsilon} \hat{\mu}_n(\left\{x \right\})\geq \epsilon. 
\end{equation}

The next result establishes a mild sufficient condition on $(\epsilon_n)$ 
 for which the plug-in estimate $H(\hat{\mu}_{n,\epsilon_n})$ is strongly consistent distribution-free in  $\mathbb{H}(\mathbb{X})$.  Considering that  we are in the regime where $\hat{\mu}_{n,\epsilon_n} \ll \mu$, $\mathbb{P}_\mu$-a.s.,  the proof  uses Lemma \ref{th_con_infty_1} as a  central result.

\begin{theorem}\label{th_data_driven}
	 If $(\epsilon_n)$ is $O(n^{-\tau})$ with $\tau\in (0,1)$, then for all $\mu \in \mathbb{H}(\mathbb{X})$
	\begin{equation}
	\lim_{n \rightarrow \infty}  H(\hat{\mu}_{n,\epsilon_n})=H(\mu), 
	\  \mathbb{P}_\mu-a.s. \nonumber
	\end{equation}
\end{theorem}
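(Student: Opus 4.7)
The plan is to verify, $\mathbb{P}_\mu$-almost surely and for all sufficiently large $n$, the three hypotheses of Lemma \ref{th_con_infty_1} applied to the random sequence $(\hat{\mu}_{n,\epsilon_n})_{n\geq 1}$ with limiting measure $\mu$. These are: (a) $\hat{\mu}_{n,\epsilon_n}\ll\mu$; (b) $\hat{\mu}_{n,\epsilon_n}\Rightarrow\mu$; and (c) the uniform bounding condition $\sup_{n\geq n_0}\sup_{x\in A_\mu}f_{\hat{\mu}_{n,\epsilon_n}}(x)/f_\mu(x)<\infty$. A tail-version of the UBC is sufficient, since the sequence may be restarted at $n_0$ without affecting the limiting value of the entropy.

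Item (a) will be immediate: the strong law forces $\hat{\mu}_n(\{x\})=0$ $\mathbb{P}_\mu$-a.s.\ whenever $f_\mu(x)=0$, so $\Gamma_{\epsilon_n}\subset A_\mu$ a.s. For (b), first I would prove the auxiliary fact $\hat{\mu}_n(\phi_{\epsilon_n})\to 0$ $\mathbb{P}_\mu$-a.s.\ by a truncation argument: for any finite $K\subset A_\mu$, the pointwise convergence $\hat{\mu}_n(\{x\})\to f_\mu(x)>0$ combined with $\epsilon_n\to 0$ yields $K\subset\Gamma_{\epsilon_n}$ eventually, hence $\limsup_n\hat{\mu}_n(\phi_{\epsilon_n})\leq\mu(A_\mu\setminus K)$; letting $\mu(K)\uparrow 1$ concludes. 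Combined with the total-variation strong law $V(\hat{\mu}_n,\mu)\to 0$ a.s., the triangle inequality gives $V(\hat{\mu}_{n,\epsilon_n},\mu)\to 0$, so $\hat{\mu}_{n,\epsilon_n}\Rightarrow\mu$ $\mathbb{P}_\mu$-a.s.

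The main obstacle is (c). Since $\hat{\mu}_n(\Gamma_{\epsilon_n})\to 1$ $\mathbb{P}_\mu$-a.s.\ by the previous step, it suffices to bound $\hat{\mu}_n(\{x\})/f_\mu(x)$ uniformly on $\Gamma_{\epsilon_n}$ for $n$ large. I would decompose $A_\mu=C_n\cup B_n$ with $C_n=\{x:f_\mu(x)\geq\epsilon_n/2\}$ (necessarily finite, with $|C_n|\leq 2/\epsilon_n$) and $B_n=A_\mu\setminus C_n$. For $x\in C_n$, a multiplicative Chernoff bound yields $\mathbb{P}(\hat{\mu}_n(\{x\})\geq 2f_\mu(x))\leq e^{-nf_\mu(x)/3}\leq e^{-n\epsilon_n/6}$; union-bounding over at most $2/\epsilon_n$ elements and summing in $n$ produces a tail $\sum_n\epsilon_n^{-1}e^{-n\epsilon_n/6}$ that is finite under $\epsilon_n=O(n^{-\tau})$ with $\tau<1$ (because $n\epsilon_n\sim n^{1-\tau}\to\infty$), so Borel--Cantelli gives $\hat{\mu}_n(\{x\})\leq 2f_\mu(x)$ for all $x\in C_n$, eventually $\mathbb{P}_\mu$-a.s. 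The harder half is to show $\Gamma_{\epsilon_n}\cap B_n=\emptyset$ eventually a.s.: this will require the sharper Chernoff bound $\mathbb{P}(\hat{\mu}_n(\{x\})\geq\epsilon_n)\leq e^{-nD(\epsilon_n||f_\mu(x))}$, whose $f_\mu(x)$-dependent decay must be balanced against the possibly infinite summation over $B_n$ by exploiting the constraint $\sum_xf_\mu(x)=1$. The assumption $\tau<1$ is precisely what makes the aggregate tail bound summable so that Borel--Cantelli applies; this is the delicate step of the argument.

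Combining the two parts of (c), for $n$ sufficiently large and $\mathbb{P}_\mu$-a.s., $f_{\hat{\mu}_{n,\epsilon_n}}(x)/f_\mu(x)=\hat{\mu}_n(\{x\})/(\hat{\mu}_n(\Gamma_{\epsilon_n})f_\mu(x))\leq 2/\hat{\mu}_n(\Gamma_{\epsilon_n})$, which is bounded by a finite constant uniformly in $x\in\Gamma_{\epsilon_n}$. Lemma \ref{th_con_infty_1} applied to the tail of $(\hat{\mu}_{n,\epsilon_n})$ then yields $H(\hat{\mu}_{n,\epsilon_n})\to H(\mu)$ $\mathbb{P}_\mu$-a.s., completing the proof.
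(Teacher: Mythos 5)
Your route is genuinely different from the paper's, and the difference is not cosmetic. The paper never attempts to verify the uniform bounding condition of Lemma \ref{th_con_infty_1} for the \emph{empirical} measure $\hat\mu_{n,\epsilon_n}$. Instead it introduces the \emph{oracle} conditional measure $\mu_{\epsilon_n}=\mu(\cdot\,|\,\Gamma_{\epsilon_n})$ and applies the lemma to that sequence, for which the UBC is immediate: $\sup_{x\in A_\mu}f_{\mu_{\epsilon_n}}(x)/f_\mu(x)=1/\mu(\Gamma_{\epsilon_n})\to 1$ a.s. The remaining discrepancy $\left|H(\hat\mu_{n,\epsilon_n})-H(\mu_{\epsilon_n})\right|$ is then handled by a separate, elementary estimation-error bound (the identity (\ref{eq_data_driven_4}) together with the Hoeffding concentration (\ref{eq_data_driven_new7})--(\ref{eq_data_driven_new8}) on $V(\mu/\sigma_{\epsilon_n},\hat\mu_n/\sigma_{\epsilon_n})$). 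That decomposition entirely sidesteps the question of whether $\hat\mu_n(\{x\})/f_\mu(x)$ is uniformly bounded over the random set $\Gamma_{\epsilon_n}$, which is exactly the part you flag as ``the delicate step.''

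And that flag marks a genuine gap: you state the claim $\Gamma_{\epsilon_n}\cap B_n=\varnothing$ eventually a.s., point at the right tools (a sharp binomial upper-tail bound, summability using $\sum_x f_\mu(x)=1$), and assert that $\tau<1$ suffices, but you do not carry the estimate out. The claim is in fact provable --- for instance, split $B_n$ at $\epsilon_n/(2e)$; for the at most $2e/\epsilon_n$ atoms with $\epsilon_n/(2e)\le f_\mu(x)<\epsilon_n/2$ a multiplicative Chernoff bound gives probability $\le e^{-c\,n\epsilon_n}$, and for $f_\mu(x)<\epsilon_n/(2e)$ the crude bound $\mathbb{P}(\mathrm{Bin}(n,p)\ge n\epsilon_n)\le(ep/\epsilon_n)^{n\epsilon_n}\le 2^{-(n\epsilon_n-1)}\,(ep/\epsilon_n)$ makes the infinite sum over $x$ collapse to $O(\epsilon_n^{-1}2^{-n\epsilon_n})$, summable in $n$ precisely because $n\epsilon_n\sim n^{1-\tau}\to\infty$ --- but as written the step is an I.O.U.\ rather than an argument. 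Your items (a), (b), the $C_n$ Chernoff-plus-union-bound part, and the observation that a tail version of the UBC suffices are all sound, so the proposal is a correct plan with one unexecuted core step. The paper's oracle/estimation split avoids this infinite-tail Chernoff bookkeeping altogether, which is why it is the cleaner closure and also why the same oracle device carries over unchanged to the rate results of Theorems \ref{th_rate_data_driven_power} and \ref{th_rate_data_driven_exp_family}.
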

This theorem tells us that this construction offers a universal estimation of the entropy in the strong 
almost sure sense,  as long as the design parameter $\tau$ belongs to $(0,1)$. 

Complementing Theorem \ref{th_data_driven}, 
the next result offers almost sure rates of converge for a  family of distributions with a power tail bounded condition (TBC).  
In particular, we consider the family of distributions studied by \cite[Th.7]{antos_2002} in the context of characterizing the rate of convergences for the classical plug-in estimate.
\begin{theorem}\label{th_rate_data_driven_power}
	 Let us assume that for some $p>1$ there are two constants
	$0<k_0 \leq k_1$ such that $k_0\cdot x^{-p}\leq \mu ( \left\{ x\right\})\leq k_1 x^{-p}$ for all $x\in \mathbb{X}$. If we consider that $(\epsilon_n)=(n^{-\tau^*})$ for $\tau^*=\frac{1}{2+1/p}$, then 
	\begin{equation} 
	\left|  H(\mu) - H(\hat{\mu}_{n,\epsilon_n})  \right|  \text{ is } O(n^{- \frac{1-{1}/{p}}{2+1/p}}), \ \mathbb{P}_\mu-a.s.\nonumber
	\end{equation}
\end{theorem}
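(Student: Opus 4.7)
The plan is to decompose the error via the triangle inequality with an intermediate deterministic oracle set, control each piece by means of the power-law tail hypothesis, and lift high-probability bounds to almost-sure ones by Borel--Cantelli; the exponent $\tau^*=1/(2+1/p)$ arises by balancing the two resulting contributions.

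Introduce the deterministic truncation $S_n=\{x\in\mathbb{X}:\mu(\{x\})\geq\epsilon_n\}$. The two-sided bound $k_0 x^{-p}\leq f_\mu(x)\leq k_1 x^{-p}$ gives $|S_n|=\Theta(\epsilon_n^{-1/p})=\Theta(n^{\tau^*/p})$, tail mass $\mu(S_n^c)=\Theta(n^{-\tau^*(1-1/p)})$, and tail entropy $\sum_{x\in S_n^c}f_\mu(x)\log(1/f_\mu(x))=O(n^{-\tau^*(1-1/p)}\log n)$. From the identity
\[
H(\mu)-H\bigl(\mu(\cdot|S_n)\bigr) = -\mu(S_n^c)\,H\bigl(\mu(\cdot|S_n)\bigr)-\mu(S_n)\log\mu(S_n)+\sum_{x\in S_n^c}f_\mu(x)\log\tfrac{1}{f_\mu(x)},
\]
together with the finiteness of $H(\mu)$ when $p>1$, the approximation bracket $|H(\mu)-H(\mu(\cdot|S_n))|$ is of order $n^{-\tau^*(1-1/p)}$ up to logarithmic factors.

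For the estimation bracket I would first compare the random set $\Gamma_{\epsilon_n}$ with $S_n$. A multiplicative Chernoff bound applied to each binomial count $n\hat{\mu}_n(\{x\})$, together with the cardinality bound $|\{x:\mu(\{x\})\geq\epsilon_n/2\}|\leq 2/\epsilon_n$, yields an inclusion failure probability of order $n^{\tau^*}\exp(-c n^{1-\tau^*})$, which is summable in $n$ for $\tau^*<1$. Borel--Cantelli then gives $\mathbb{P}_\mu$-almost surely, for all large $n$, the sandwich $\{x:\mu(\{x\})\geq 2\epsilon_n\}\subset\Gamma_{\epsilon_n}\subset\{x:\mu(\{x\})\geq\epsilon_n/2\}$; the symmetric difference with $S_n$ is a boundary layer of $\mu$-mass $O(n^{-\tau^*(1-1/p)})$ by the power-law. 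On this event both $\mu(\cdot|S_n)$ and $\hat{\mu}_{n,\epsilon_n}$ are finitely supported with minimum atom mass $\gtrsim\epsilon_n$, so Proposition~\ref{lemma_fs_under}, specifically inequality (\ref{eq_lemma_fs_4}), applies with Lipschitz constant $\mathbf{M}+\log e/\mathbf{m}=O(1/\epsilon_n)=O(n^{\tau^*})$; a Hoeffding bound applied atomwise over the common support (cardinality $O(n^{\tau^*/p})$), a union bound, and a second Borel--Cantelli control the total variation $V(\mu(\cdot|S_n),\hat{\mu}_{n,\epsilon_n})$ almost surely.

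Combining the two brackets, the overall error is almost surely dominated by the sum of the approximation rate $n^{-\tau^*(1-1/p)}$ and an estimation rate produced by multiplying $1/\epsilon_n$ with the Hoeffding and cardinality factors; the choice $\tau^*=1/(2+1/p)$ is precisely the value at which both exponents coincide at $(1-1/p)/(2+1/p)$, giving the stated rate. The main technical obstacle is the joint control of (i) the normalization mismatch between $\hat{\mu}_n(\Gamma_{\epsilon_n})$ and $\mu(S_n)$, (ii) the boundary-layer atoms $\{x:\mu(\{x\})\asymp\epsilon_n\}$ where $\Gamma_{\epsilon_n}$ and $S_n$ may disagree, and (iii) the $1/\epsilon_n$ Lipschitz blow-up; each of these sources of slack must be absorbed into the target rate without spoiling the Borel--Cantelli summability.
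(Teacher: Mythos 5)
Your plan is close in spirit to the paper's (same triangle-inequality decomposition, same power-law tail calculus for the approximation term, same use of Hoeffding plus Borel–Cantelli for the estimation term), but the estimation bracket as you set it up cannot close, and the obstruction is quantitative, not merely technical.

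The problem is the Lipschitz constant. When you bound $|H(\mu(\cdot|S_n))-H(\hat{\mu}_{n,\epsilon_n})|$ via inequality (\ref{eq_lemma_fs_4}), the constant $\mathbf{M}+\log e/\mathbf{m}$ is dominated by the $\log e/\mathbf{m}$ term, i.e.\ it is $\Theta(1/\epsilon_n)=\Theta(n^{\tau^*})$, and this polynomial factor exactly cancels the rate you can extract from Hoeffding. Concretely, the sigma-field Hoeffding bound over $\Gamma_{\epsilon_n}$ gives a total-variation exponent arbitrarily close to $(1-\tau/p)/2$; multiplying by $n^\tau$ leaves an estimation exponent of $(1-\tau/p)/2-\tau$, and at $\tau^*=1/(2+1/p)$ one has $(1-\tau^*/p)/2=\tau^*$ exactly, so the estimation piece has exponent zero — it does not vanish. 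If you shrink $\tau$ to make the estimation bracket decay, the approximation exponent $\tau(1-1/p)$ also shrinks, and the optimal balance under your scheme lands at $\tau=1/(4-1/p)$ with overall rate $n^{-(1-1/p)/(4-1/p)}$, strictly slower than the claimed $n^{-(1-1/p)/(2+1/p)}$. So the statement of the theorem is not recoverable along this route.

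What the paper does differently is replace the generic bound (\ref{eq_lemma_fs_3})--(\ref{eq_lemma_fs_4}) by a sharper, data-driven divergence inequality (Appendix~\ref{appendix_div_bound_ddp}, Proposition~\ref{pro_div_bound_ddp}): the KL term $D(\mu_{\epsilon_n}\|\hat{\mu}_{n,\epsilon_n})$ is controlled by $\bigl(\log e+\log(1/\epsilon_n)\bigr)\,V$, not $(\log e/\epsilon_n)\,V$. The trick is to split the sum in the KL according to the sign of $f_\mu-f_{\hat{\mu}_n}$, apply $\ln x\leq x-1$ only to ratios at most one, and on the remaining atoms exploit $f_{\hat{\mu}_n}(x)\geq\epsilon_n$ and $f_\mu(x)\leq 1$ to cap $\log(f_\mu/f_{\hat{\mu}_n})$ by $\log(1/\epsilon_n)$. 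Likewise the cross-entropy term $\sum(\mu_{\epsilon_n}-\hat{\mu}_{n,\epsilon_n})\log\hat{\mu}_{n,\epsilon_n}$ picks up only a $\log(1/\epsilon_n)$ factor because $\hat{\mu}_{n,\epsilon_n}\geq\epsilon_n$ on its support. With the Lipschitz factor downgraded from $n^{\tau}$ to $\log n$, the estimation exponent becomes $(1-\tau/p)/2$ (up to logs), and at $\tau^*=1/(2+1/p)$ the approximation exponent $(1-1/p)/(2+1/p)$ is the bottleneck, giving the stated rate. This logarithmic refinement is the essential ingredient your proposal is missing, and no amount of Borel–Cantelli bookkeeping around the boundary layer can substitute for it.

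A secondary, repairable difference: the paper routes both brackets through the random oracle $\mu_{\epsilon_n}=\mu(\cdot|\Gamma_{\epsilon_n})$, so that the two measures in the estimation bracket share the same support $\Gamma_{\epsilon_n}$ and Proposition~\ref{lemma_fs_under}-style identities apply directly; deterministic oracle sets only enter afterwards, one bracket at a time (Appendices~\ref{app_prop_oracle_appro_err_apr} and~\ref{app_prop_oracle_est_err_aprt}). Your deterministic middle term $\mu(\cdot|S_n)$ has support $S_n\neq\Gamma_{\epsilon_n}$, so (\ref{eq_lemma_fs_4}) does not literally apply; you flag this as obstacle (i)–(ii) but the sandwich event does not resolve it, since neither inclusion $S_n\subset\Gamma_{\epsilon_n}$ nor $\Gamma_{\epsilon_n}\subset S_n$ is guaranteed. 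Swapping to the paper's order of comparisons fixes this cleanly, though it would still not fix the dominant Lipschitz issue above.
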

This result states that under the  mentioned $p$-power TBC on $f_{\mu}(\cdot)$, the plug-in estimate $H(\hat{\mu}_{n,\epsilon_n})$  can offer a rate  of convergence to the true limit that is $O(n^{- \frac{1-{1}/{p}}{2+1/p}})$ with probability one.  
The proof of this results set the approximation sequence $(\epsilon_n)$ function of $p$,  by finding an optimal tradeoff between estimation and approximation errors while performing a finite length (non-asymptotic) analysis of the expression $\left|  H(\mu) - H(\hat{\mu}_{n,\epsilon_n})  \right|$ (the details of this analysis are presented in Section \ref{proof_th_rate_data_driven_power}). It is insightful to look at two extreme regimes in this result:  $p$ approaching $1$, in which the rate is arbitrarily slow (approaching a non-decaying behavior); and  $p \rightarrow \infty$, where $\left|  H(\mu) - H(\hat{\mu}_{n,\epsilon_n})  \right|$ is $O(n^{-q})$  for all $q\in (0,1/2)$ $\mathbb{P}_\mu$-a.s..
 This last power decaying range $q\in (0,1/2)$ matches what can be achieved for the finite alphabet scenario in Theorem \ref{th_con_inq} (see  Eq.(\ref{eq_sec_stat_learn_6b})) which it is known to be optimal rate  for finite alphabets. 

Extending Theorem \ref{th_rate_data_driven_power}, the following result addresses  the more
constrained case of distributions with an exponential TBC.
\begin{theorem}\label{th_rate_data_driven_exp_family}
Let us consider $\alpha>0$ and let us assume that there are $k_0,k_1$ with $0<k_0 \leq k_1$ and $N>0$ such that  $k_0 \cdot e^{-\alpha x} \leq \mu ( \left\{ x\right\})\leq k_1 \cdot e^{-\alpha x}$ for all $x \geq N$.  If we consider $(\epsilon_n)=(n^{-\tau})$ with  $\tau \in (0,1/2)$, then
\begin{equation}
	\left|  H(\mu) - H(\hat{\mu}_{n,\epsilon_n})  \right|  \text{ is } O(n^{-\tau} \log n), \ \mathbb{P}_\mu-a.s.\nonumber
\end{equation}
\end{theorem}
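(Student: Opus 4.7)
My plan is to follow the template of Theorem \ref{th_rate_data_driven_power} and decompose $|H(\mu)-H(\hat{\mu}_{n,\epsilon_n})|$ into (i) an \emph{estimation} error on the random finite set $\Gamma_{\epsilon_n}$ and (ii) an \emph{approximation} error coming from truncating the tail $\phi_{\epsilon_n}$. Using the entropy grouping identity
\begin{equation*}
H(\mu)=\mu(\Gamma_{\epsilon_n})\,H(\mu(\cdot|\Gamma_{\epsilon_n}))+\mu(\phi_{\epsilon_n})\,H(\mu(\cdot|\phi_{\epsilon_n}))+h(\mu(\Gamma_{\epsilon_n})),
\end{equation*}
with $h(p)=-p\log p-(1-p)\log(1-p)$, the triangle inequality gives
\begin{equation*}
|H(\mu)-H(\hat{\mu}_{n,\epsilon_n})|\le |H(\mu(\cdot|\Gamma_{\epsilon_n}))-H(\hat{\mu}_{n,\epsilon_n})|+(1-\mu(\Gamma_{\epsilon_n}))H(\mu(\cdot|\Gamma_{\epsilon_n}))+\mu(\phi_{\epsilon_n})H(\mu(\cdot|\phi_{\epsilon_n}))+h(\mu(\Gamma_{\epsilon_n})).
\end{equation*}

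The first major step is a deterministic sandwich for the random set $\Gamma_{\epsilon_n}$. A pointwise Hoeffding inequality combined with a union bound over the deterministic sets $B_n^-=\{x:\mu(\{x\})\ge 2\epsilon_n\}$ and $B_n^+=\{x:\mu(\{x\})\ge \epsilon_n/2\}$ (both of size $\le 2/\epsilon_n=2n^\tau$), together with Borel--Cantelli, yields $\mathbb{P}_\mu$-a.s.\ eventually
\begin{equation*}
B_n^-\subseteq\Gamma_{\epsilon_n}\subseteq B_n^+\qquad\text{and}\qquad \sup_{x\in B_n^+}|\hat{\mu}_n(\{x\})-\mu(\{x\})|=O\!\left(\sqrt{\log n/n}\right).
\end{equation*}
Under the exponential TBC $k_0 e^{-\alpha x}\le\mu(\{x\})\le k_1 e^{-\alpha x}$ (for $x\ge N$), both $B_n^\pm$ are initial integer segments of length $X_n=\alpha^{-1}\log(1/\epsilon_n)+O(1)=O(\log n)$, so $|\Gamma_{\epsilon_n}|=O(\log n)$ and $\mathbf{m}_{\mu(\cdot|\Gamma_{\epsilon_n})}\ge \epsilon_n/2$ a.s.\ eventually.

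For the approximation remainder I would use $\mu(\{x\})\le k_1 e^{-\alpha x}$ to bound the geometric-type tail sums by $\mu(\phi_{\epsilon_n})=O(e^{-\alpha X_n})=O(n^{-\tau})$ and
\begin{equation*}
-\sum_{x>X_n}f_\mu(x)\log f_\mu(x)\le \sum_{x>X_n}k_1 e^{-\alpha x}(\alpha x-\log k_1)=O(X_n e^{-\alpha X_n})=O(n^{-\tau}\log n),
\end{equation*}
which controls both $\mu(\phi_{\epsilon_n})H(\mu(\cdot|\phi_{\epsilon_n}))$ and the binary remainder $h(\mu(\Gamma_{\epsilon_n}))$, while $(1-\mu(\Gamma_{\epsilon_n}))H(\mu(\cdot|\Gamma_{\epsilon_n}))\le \mu(\phi_{\epsilon_n})\log|\Gamma_{\epsilon_n}|$ is also $O(n^{-\tau}\log n)$. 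For the estimation term I would invoke Proposition \ref{lemma_fs_under} on the common support $\Gamma_{\epsilon_n}$:
\begin{equation*}
|H(\mu(\cdot|\Gamma_{\epsilon_n}))-H(\hat{\mu}_{n,\epsilon_n})|\le \Bigl(\mathbf{M}_{\mu(\cdot|\Gamma_{\epsilon_n})}+\tfrac{\log e}{\mathbf{m}_{\mu(\cdot|\Gamma_{\epsilon_n})}}\Bigr)\,V(\mu(\cdot|\Gamma_{\epsilon_n}),\hat{\mu}_{n,\epsilon_n}).
\end{equation*}
The prefactor is $O(\log n)$, and expanding the ratio difference $\mu(\{x\})/\mu(\Gamma_{\epsilon_n})-\hat{\mu}_n(\{x\})/\hat{\mu}_n(\Gamma_{\epsilon_n})$ bounds the variation by $|\Gamma_{\epsilon_n}|\sup_{x\in\Gamma_{\epsilon_n}}|\hat{\mu}_n(\{x\})-\mu(\{x\})|(1+o(1))=O((\log n)^{3/2}/\sqrt n)$ a.s., so the estimation contribution is $O((\log n)^{5/2}/\sqrt n)$, which is $o(n^{-\tau}\log n)$ for any $\tau\in(0,1/2)$. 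Adding the two pieces yields the advertised rate.

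The main technical obstacle is the sandwich step: the Hoeffding slack available is of order $\epsilon_n=n^{-\tau}$, so $\exp(-cn\epsilon_n^2)=\exp(-cn^{1-2\tau})$ is summable only if $\tau<1/2$, which is precisely the hypothesis and pins down the sharp admissible range. A secondary subtlety is ruling out spurious atoms with $\mu(\{x\})<\epsilon_n/2$ that could enter $\Gamma_{\epsilon_n}$: for each such $x$ a Chernoff bound gives probability $\le 2^{-cn\epsilon_n}$, and the aggregate over the at most $n$ atoms actually seen in the sample is also summable, sealing the Borel--Cantelli argument.
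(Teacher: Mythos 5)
Your sandwiching of $\Gamma_{\epsilon_n}$ between the deterministic sets $B_n^{\pm}$ and the exponential-tail bound giving $|\Gamma_{\epsilon_n}|=O(\log n)$ are sound, and the approximation remainders are correctly controlled at rate $O(n^{-\tau}\log n)$; this matches the paper's reduction to the oracle quantity $a_{2\epsilon_n}$ and its geometric tail sums.

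The gap is in the estimation step. You invoke Proposition \ref{lemma_fs_under} on the common support $\Gamma_{\epsilon_n}$ and claim the prefactor $\mathbf{M}_{\mu(\cdot|\Gamma_{\epsilon_n})}+\log e/\mathbf{m}_{\mu(\cdot|\Gamma_{\epsilon_n})}$ is $O(\log n)$. It is not: the best available lower bound is $\mathbf{m}_{\mu(\cdot|\Gamma_{\epsilon_n})}\geq\epsilon_n/2$, so $\log e/\mathbf{m}_{\mu(\cdot|\Gamma_{\epsilon_n})}$ is $O(1/\epsilon_n)=O(n^{\tau})$, which dominates the $O(\log n)$ contribution of $\mathbf{M}$ and is polynomial in $n$. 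With the correct prefactor your estimation error becomes $O\bigl(n^{\tau-1/2}(\log n)^{3/2}\bigr)$, which exceeds the target $O(n^{-\tau}\log n)$ as soon as $\tau\geq 1/4$; your argument as written only establishes the theorem for $\tau\in(0,1/4)$. The obstruction is structural: the $1/\mathbf{m}_\mu$ factor enters Proposition \ref{lemma_fs_under} through the crude reverse-divergence bound (\ref{eq_lemma_fs_3}). The paper avoids it by decomposing $H(\hat{\mu}_{n,\epsilon_n})-H(\mu_{\epsilon_n})$ as in (\ref{eq_data_driven_4}), so the linear term carries $\log\hat{\mu}_{n,\epsilon_n}(\{x\})$ — bounded in magnitude by $\log(1/\epsilon_n)$ because $\hat{\mu}_n(\{x\})\geq\epsilon_n$ on $\Gamma_{\epsilon_n}$ — and by bounding the reverse divergence via Proposition \ref{pro_div_bound_ddp}, which again costs only $\log(e/\epsilon_n)$ rather than $1/\epsilon_n$. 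The resulting estimation-error bound (\ref{eq_data_driven_new6}) replaces your $O(n^\tau)$ prefactor by $O(\log n)$, giving an estimation error $O(n^{-q})$ for every $q\in(0,1/2)$ and hence $o(n^{-\tau}\log n)$ for the full range of $\tau$. You need to replace your Proposition-\ref{lemma_fs_under} step with this sharper estimate, or some equivalent bound that exploits $\hat{\mu}_n(\{x\})\geq\epsilon_n$ on $\Gamma_{\epsilon_n}$, before the claimed rate follows.
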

Under this stringer TBC on $f_\mu(\cdot)$,  we note that 
	$\left|  H(\mu) - H(\hat{\mu}_{n,\epsilon_n})  \right|  \text{ is } o(n^{-q})$  $\mathbb{P}_\mu-a.s.$,  
for any arbitrary $q \in (0,1/2)$,  by selecting $(\epsilon_n)=(n^{-\tau})$ with  $q<\tau<1/2$. This last condition on $\tau$ is universal over $\alpha>0$.
Remarkably for any distribution with this exponential TBC, we can approximate (arbitrarely closely)  the optimal almost sure rate of convergence achieved for the finite alphabet scenario.  

Finally,  we revisit the finite and unknown supported scenario,  where we show that the data-driven estimate exhibit the same optimal almost sure convergence rate performance of the classical plug-in entropy estimate studied in  Sec. \ref{sub_sec_fs_class_hist}.
\begin{theorem}\label{th_rate_data_driven_finite_support}
	Let us assume that $\mu\in \mathcal{F}(\mathcal{X})$ and $(\epsilon_n)$ being $o(1)$.  
	Then for all $\epsilon>0$ there is $N>0$ such that  $\forall n\geq N$
	\begin{equation} \label{eq_data_driven_6}
	\mathbb{P}^n_{\mu} \left( \left|H(\hat{\mu}_{n,\epsilon_n})  - H({\mu}) \right|  > \epsilon\right) \leq 2^{\left|A_\mu\right|+1} \cdot
		\left[   e^{-\frac{2 n \epsilon ^2}{(\mathbf{M}_{\mu} + \frac{\log e}{\mathbf{m}_{\mu}})^2}}
		+ e^{- \frac{n \mathbf{m}_\mu^2}{4}} \right].
	\end{equation}
\end{theorem}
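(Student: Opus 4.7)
The plan is to reduce the statement to Theorem \ref{th_con_inq} by showing that, with high probability, the data-driven support $\Gamma_{\epsilon_n}$ coincides with the true support $A_\mu$, in which case $\hat{\mu}_{n,\epsilon_n}$ is simply $\hat{\mu}_n$. Since $\mu \in \mathcal{F}(\mathbb{X})$ we have $\mathbf{m}_\mu > 0$, and since $(\epsilon_n)$ is $o(1)$ there exists $N$ such that $\epsilon_n < \mathbf{m}_\mu/2$ for all $n \geq N$. Consider the event $B_n := \{\Gamma_{\epsilon_n} = A_\mu\}$. On $B_n$, using that $A_{\hat{\mu}_n} \subset A_\mu$ with $\mathbb{P}_\mu$-probability one, we have $\hat{\mu}_n(\Gamma_{\epsilon_n}) = \hat{\mu}_n(A_\mu) = 1$, and therefore $\hat{\mu}_{n,\epsilon_n} = \hat{\mu}_n(\cdot \, | \, \Gamma_{\epsilon_n}) = \hat{\mu}_n$.

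With this observation, I would split the probability as
\begin{equation*}
\mathbb{P}^n_\mu\bigl(|H(\hat{\mu}_{n,\epsilon_n}) - H(\mu)| > \epsilon\bigr) \leq \mathbb{P}^n_\mu\bigl(|H(\hat{\mu}_n) - H(\mu)| > \epsilon\bigr) + \mathbb{P}^n_\mu(B_n^c).
\end{equation*}
The first term is bounded directly by Theorem \ref{th_con_inq}, equation (\ref{eq_sec_stat_learn_6b}), yielding the first exponential in (\ref{eq_data_driven_6}) together with the $2^{|A_\mu|+1}$ prefactor.

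For the second term, observe that (again using $A_{\hat{\mu}_n} \subset A_\mu$ a.s.) $B_n^c = \{\exists \, x \in A_\mu : \hat{\mu}_n(\{x\}) < \epsilon_n\}$. For $n \geq N$, on this event some $x \in A_\mu$ satisfies $\mu(\{x\}) - \hat{\mu}_n(\{x\}) > \mathbf{m}_\mu - \epsilon_n > \mathbf{m}_\mu/2$. Applying Hoeffding's inequality to each Bernoulli variable $\mathbb{1}_{\{x\}}(X_k)$ and a union bound over $A_\mu$ gives
\begin{equation*}
\mathbb{P}^n_\mu(B_n^c) \leq |A_\mu| \cdot e^{-n \mathbf{m}_\mu^2 / 2} \leq 2^{|A_\mu|} \cdot e^{-n\mathbf{m}_\mu^2/4},
\end{equation*}
which absorbs into the $2^{|A_\mu|+1}$ prefactor of (\ref{eq_data_driven_6}) as the second exponential term.

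There is no real obstacle here: the whole argument hinges on the fact that in the finite-support regime the smallest atom $\mathbf{m}_\mu$ is strictly positive, so the vanishing threshold $\epsilon_n$ eventually lies well below it, and a standard Chernoff/Hoeffding argument ensures that the data-driven partition $\Pi_{\epsilon_n}$ stabilizes to the singleton partition of $A_\mu$ with exponentially small failure probability. The only care needed is in matching the constants: choosing $N$ so that $\epsilon_n < \mathbf{m}_\mu/2$, and using $|A_\mu| \leq 2^{|A_\mu|}$ and $e^{-n\mathbf{m}_\mu^2/2} \leq e^{-n\mathbf{m}_\mu^2/4}$ to collapse the two resulting bounds into the common prefactor $2^{|A_\mu|+1}$ displayed in (\ref{eq_data_driven_6}).
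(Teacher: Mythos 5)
Your proof is correct and follows essentially the same route as the paper: split on the event $B_n=\{\Gamma_{\epsilon_n}=A_\mu\}$, observe that on $B_n$ (using $A_{\hat\mu_n}\subset A_\mu$ a.s.) $\hat\mu_{n,\epsilon_n}=\hat\mu_n$ so Theorem \ref{th_con_inq} controls the entropy deviation, and bound $\mathbb{P}^n_\mu(B_n^c)$ by Hoeffding once $\epsilon_n<\mathbf{m}_\mu/2$. The only minor difference is in bounding $\mathbb{P}^n_\mu(B_n^c)$: you apply Hoeffding per-atom with a union bound over $A_\mu$, while the paper embeds $B_n^c$ in the total-variation deviation event $\{V(\mu,\hat\mu_n)>\mathbf{m}_\mu/2\}$ and invokes the $2^{|A_\mu|+1}e^{-2n\epsilon^2}$ bound of (\ref{eq_sec_stat_learn_4}); both produce a constant absorbed by the stated $2^{|A_\mu|+1}$ prefactor, and both yield the $e^{-n\mathbf{m}_\mu^2/4}$ term after the usual relaxations.
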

The proof of this result reduces to verify that $\hat{\mu}_{n,\epsilon_n}$ detects $A_\mu$ almost surely when $n$ goes to infinity and from this, it follows that $H(\hat{\mu}_{n,\epsilon_n})$ matches the optimal almost sure performance of $H(\hat{\mu}_n)$ under the assumption that $\mu\in \mathcal{F}(\mathcal{X})$.  In particular,  (\ref{eq_data_driven_6}) implies that $\left|H(\hat{\mu}_{n,\epsilon_n})  - H({\mu}) \right|$ is $o(n^{-q})$ almost surely for all $q\in (0,1/2)$ as long as $\epsilon_n \rightarrow 0$  with $n$. 

\section{Proofs of the Main results}
\label{sec_proofs}
\subsection{Theorem \ref{th_con_inq}:}
\label{sub_sec_th_con_inq}
\begin{proof}
	Let  $\mu$ be in $\mathcal{F}(\mathbb{X})$,  then $\left|A_{\mu}\right|  \leq k$ for some $k>1$.  
	From  {\em Hoeffding's inequality}  \citep{devroye_2001} 
	 $\forall n\geq 1$, and for any $\epsilon>0$, 
	\begin{align} \label{eq_sec_stat_learn_4} 
		\mathbb{P}^n_{\mu} \left(V(\hat{\mu}_n, \mu) > \epsilon\right) \leq 2^{k+1} \cdot e^{- 2n \epsilon^2}
	\text{ and }
		\mathbb{E}_{\mathbb{P}^n_{\mu}}(V(\hat{\mu}_n, \mu)) \leq 2 \sqrt{\frac{{(k+1)} {\log 2} }{{n}}}.
	\end{align}
	Considering that $\hat{\mu}_n\ll \mu$ $\mathbb{P}_\mu$-a.s, we can use 
	Proposition \ref{lemma_fs_under} to obtain that
	\begin{align} \label{eq_sec_stat_learn_7} 
		D(\hat{\mu}_n||\mu) \leq \frac{\log e}{\mathbf{m}_\mu} \cdot V(\hat{\mu}_n,\mu), \text{ and }
		\left|H((\hat{\mu}_n)- H(\mu_n)\right|  \leq \left[ \mathbf{M}_\mu+\frac{\log e}{\mathbf{m}_\mu}\right] \cdot 	V(\hat{\mu}_n,\mu).
	\end{align}
	Hence,  (\ref{eq_sec_stat_learn_6}) and (\ref{eq_sec_stat_learn_6b}) derive from (\ref{eq_sec_stat_learn_4}). 
	
	For the direct I-divergence, let us consider a sequence $(x_i)_{i\geq 1}$ and the following 
	function (a stopping time):
	\begin{align} \label{eq_sec_stat_learn_10} 
		T_o(x_1,x_2,\ldots) \equiv \inf \left\{n\geq 1: A_{\hat{\mu}_n(x^n_1)}=A_\mu \right\},
	\end{align}
	i.e, $T_o(x_1,x_2,\ldots)$ is the point where the support of $\hat{\mu}_n(x^n_1)$ is equal to $A_\mu$
	and, consequently, the direct I-divergence is well-defined (since $\mu \in \mathcal{F}(\mathbb{X})$).
	\footnote{By the uniform convergence of $\hat{\mu}_n$ to $\mu_n$ ($\mathbb{P}_{\mu}$-a.s.) 
	and the finite support assumption of $\mu$,  it  is  simple to verify that  
	$\mathbb{P}_{\mu}(T_o(X_1,X_2,\ldots) < \infty)=1$.} 
	Let us define the event: 
	\begin{align} \label{eq_proof_lm_con_inq_tv_0}
	\mathcal{B}_n \equiv \left\{x_1,x_2,..:T_o(x_1,x_2,..)\leq n\right\} \subset \mathbb{X}^{\mathbb{N}},
	\end{align}
 	i.e.,  the collection of sequences in $\mathbb{X}^{\mathbb{N}}$ where at time $n$,
	$A_{\hat{\mu}_n}=A_{\mu}$ and, consequently, $D(\mu||\hat{\mu}_n) < \infty$. 
	Restricted to this set, 
	\begin{align} \label{eq_proof_lm_con_inq_tv_1}
		D(\mu || \hat{\mu}_n) &\leq 
				\sum_{x \in A_{\hat{\mu}_n||\mu}} f_{\hat{\mu}_n}(x) \log \frac{f_{\hat{\mu}_n}(x)}{f_{{\mu}}(x)}
				+ \sum_{x \in A_\mu \setminus A_{\hat{\mu}_n || \mu}} f_{\hat{\mu}_n}(x) \log \frac{f_\mu(x)} {f_{\hat{\mu}_n}(x)} \\
				&\leq \log e \cdot  \sum_{x \in A_{\hat{\mu}_n||\mu}}  f_{\hat{\mu}_n}(x) \cdot \left(\frac{f_{\hat{\mu}_n}(x)}{f_{{\mu}}(x)} -1\right)\nonumber\\ 
				 &+ \log e \cdot \left[\mu(A_\mu \setminus A_{\hat{\mu}_n || \mu}) - \hat{\mu}_n((A_\mu \setminus A_{\hat{\mu}_n || \mu}))\right]\\
				\label{eq_proof_lm_con_inq_tv_1b}
				&\leq \log e  \cdot \left({1}/{\mathbf{m}_u}+1\right) V(\mu, \hat{\mu}_n),
	\end{align}
	where in the first inequality  $A_{\hat{\mu}_n||\mu}\equiv {\left\{x \in A_{\hat{\mu}_n}:  f_{\hat{\mu}_n}(x) > f_{{\mu}}(x) \right\}}$, 
	and the last is obtained by the definition of the total variational distance. 
	In addition,  let us  define the $\epsilon$-deviation set 
	$\mathcal{A}^{\epsilon}_n \equiv \left\{x_1,x_2,...:D(\mu||\hat{\mu}_n(x^n_1))>\epsilon\right\} \subset
	 \mathbb{X}^{\mathbb{N}}$. Then by additivity and monotonicity of $\mathbb{P}_\mu$,  we have that 
	\begin{align} \label{eq_proof_lm_con_inq_tv_2} 
		\mathbb{P}_\mu(\mathcal{A}^{\epsilon}_n) \leq  \mathbb{P}_\mu(\mathcal{A}^{\epsilon}_n \cap \mathcal{B}_n) + \mathbb{P}_\mu (\mathcal{B}_n^c).
	\end{align}
	By definition of $\mathcal{B}_n$,  (\ref{eq_proof_lm_con_inq_tv_1b})
	and (\ref{eq_sec_stat_learn_4})  
	\begin{align} \label{eq_proof_lm_con_inq_tv_3} 
		\mathbb{P}_\mu(\mathcal{A}^{\epsilon}_n \cap \mathcal{B}_n) &\leq 
			\mathbb{P}_\mu (V(\mu||\hat{\mu}_n) \log e  \cdot \left({1}/{\mathbf{m}_u}+1\right) >\epsilon)\nonumber\\
			& \leq 2^{\left|A_\mu\right| +1} \cdot e^{-\frac{2 n \epsilon^2}{ \log e^2  \cdot \left({1}/{\mathbf{m}_u}+1\right)^2}}.
	\end{align}
	On the other hand,  $\forall \epsilon_o \in (0,\mathbf{m}_\mu)$
	if $V(\mu, \hat{\mu}_n) \leq \epsilon_o$ then $T_o\leq n$. 
	Consequently $\mathcal{B}^c_n \subset \left\{x_1,x_2,..:V(\mu, \hat{\mu}_n(x^n_1))>\epsilon_o\right\}$, 
	and again from (\ref{eq_sec_stat_learn_4})  
	\begin{align} \label{eq_proof_lm_con_inq_tv_4} 
		\mathbb{P}_\mu(\mathcal{B}_n^c) & \leq 2^{\left|A_\mu\right|+1} \cdot e^{-2 n \epsilon_o^2}, 
	\end{align}
	for all $n\geq 1$ and $\forall \epsilon_o\in (0,\mathbf{m}_\mu)$.
	Integrating the results in  (\ref{eq_proof_lm_con_inq_tv_3}) and (\ref{eq_proof_lm_con_inq_tv_4})
	and considering $\epsilon_0 =\mathbf{m}_\mu/\sqrt{2}$,  suffices to show  (\ref{eq_sec_stat_learn_12b}).  
\end{proof}

\subsection{Theorem \ref{pro_barron_conv}}
\begin{proof}
As $(a_n)$ is $o(1)$,  it is simple to verify that 
$\lim_{n \rightarrow \infty}V(\tilde{\mu}_n, \mu)=0$, $\mathbb{P}_{\mu}$-a.s.
Also note that the support disagreement between $\tilde{\mu}_n$ and $\mu$
is bounded by the hypothesis, then 
\begin{align} \label{eq_pro_barron_conv_1} 
	&\lim_{n \rightarrow \infty}  \tilde{\mu}_n \left(A_{\mu_n} \setminus A_{\mu}\right) \cdot \log \left|A_{\tilde{\mu}_n} \setminus A_{\mu}\right| 
	\leq 
	\lim_{n \rightarrow \infty}  \tilde{\mu}_n \left(A_{\mu_n} \setminus A_{\mu}\right) \cdot \log \left|A_{v} \right|=0, 
	\ \mathbb{P}_{\mu}-a.s.
\end{align}
Therefore from Lemma \ref{fs_main_th},  
we have the strong consistency of  $H(\tilde{\mu}_n)$ and the almost sure convergence of $D( \mu||\tilde{\mu}_n)$ to zero. Note that $D( \mu||\tilde{\mu}_n)$
is uniformly upper bounded by $\log e \cdot (1/\mathbf{m}_\mu+1) V(\mu, \tilde{\mu}_n)$ 
(see (\ref{eq_proof_lm_con_inq_tv_1b}) in the proof of Theorem \ref{th_con_inq}). 
Then the convergence in probability of $D( \mu||\tilde{\mu}_n)$ implies the 
convergence of its mean \citep{breiman_1968},
which concludes the proof of the first part.  

Concerning rate of convergences, we use the following: 
\begin{align} \label{eq_pro_barron_conv_2} 
		H(\mu )- H(\tilde{\mu}_n) &= \sum_{x \in A_\mu \cap A_{\tilde{\mu}_n}} \left[f_{\tilde{\mu}_n}(x)-f_\mu(x)\right]  \log f_\mu(x)
	 + \sum_{x\in A_\mu \cap A_{\tilde{\mu}_n}} f_{\tilde{\mu}_n}(x) \log \frac{f_{\tilde{\mu}_n}(x)}{f_{\mu}(x)}\nonumber\\
	 &-\sum_{x\in A_{\tilde{\mu}_n} \setminus A_{\mu}} f_{\tilde{\mu}_n}(x) \log \frac{1}{f_{\tilde{\mu}_n}(x)}.
\end{align}
The absolute value of the first term in the right hand side (RHS) of (\ref{eq_pro_barron_conv_2}) is 
bounded by $\mathbf{M}_{\mu} \cdot V(\tilde{\mu}_n,\mu)$ and the second term is bounded 
by $\log e/{\mathbf{m}_\mu} \cdot V(\tilde{\mu}_n,\mu)$, from the assumption that $\mu \in \mathcal{F}(\mathbb{X})$. For the last term, note that $f_{\tilde{\mu}_n}(x)=a_n\cdot v(\left\{ x\right\})$ for all 
$x \in A_{\tilde{\mu}_n} \setminus A_{\mu}$ and that $A_{\tilde{\mu}_n} =A_v$, then 
$$0\leq \sum_{x\in A_{\tilde{\mu}_n} \setminus A_{\mu}} f_{\tilde{\mu}_n}(x) \log \frac{1}{f_{\tilde{\mu}_n}(x)} \leq a_n\cdot (H(v) + \log \frac{1}{a_n}\cdot v(A_v\setminus A_\mu)).$$
On the other hand, 
\begin{align} 
V(\tilde{\mu}_n,\mu)&=\frac{1}{2}\sum_{x\in A_\mu}  \left| (1-a_n)\hat{\mu}_n(\left\{ x\right\}) +a_n v(\left\{ x\right\})  - \mu(\left\{ x\right\})\right| + \sum_{x \in A_{v} \setminus A_{\mu}} a_n v(\left\{ x\right\}).\nonumber\\
	&\leq (1-a_n)\cdot V(\hat{\mu}_n, \mu) +a_n.\nonumber
\end{align}
Integrating these bounds in (\ref{eq_pro_barron_conv_2}), 
\begin{align} \label{eq_pro_barron_conv_4} 
	 \left|  H(\mu )- H(\tilde{\mu}_n)\right|  &\leq (\mathbf{M}_{\mu} + \log e/\mathbf{m}_{\mu}) \cdot ((1-a_n)\cdot V(\hat{\mu}_n, \mu) +a_n) + a_n \cdot H(v) + a_n\cdot \log \frac{1}{a_n}	\nonumber\\ 		
	 &= K_1\cdot V(\hat{\mu}_n, \mu) + K_2\cdot a_n + a_n\cdot \log \frac{1}{a_n},  
\end{align}
for  constants $K_1>0$ and $K_2>0$ function of $\mu$ and $v$.

Under the assumption that $\mu \in \mathcal{F}(\mathbb{X})$,  the {\em Hoeffding's inequality} \citep{devroye_2001,devroye1996} tells us that 
$\mathbb{P}_\mu(V(\hat{\mu}_n, \mu) >\epsilon) \leq C_1 \cdot e^{-C_2 n \epsilon^2}$
(for some distribution free constants $C_1>0$ and $C_2>0$). From this inequality,  
 $V(\hat{\mu}_n, \mu)$ goes to zero as $o(n^{-\tau})$ $\mathbb{P}_\mu$-a.s. $\forall \tau\in (0,1/2)$ and 
 $\mathbb{E}_{\mathbb{P}_\mu}(V(\hat{\mu}_n, \mu))$ is $O(1/\sqrt{n})$. On the other hand,  
under the assumption in ii) $(K_2\cdot a_n + a_n\cdot \log \frac{1}{a_n})$ is $O(1/\sqrt{n})$, which from (\ref{eq_pro_barron_conv_4}) proves the convergence rate results for $ \left|  H(\mu )- H(\tilde{\mu}_n)\right|$.

Considering the direct I-divergence, $D(\mu|| \tilde{\mu}_n) \leq \log e \cdot \sum_{x \in A_\mu}   f_\mu(x) \left|\frac{f_\mu(x)}{f_{\tilde{\mu}_n}(x)}-1\right| \leq \frac{\log e}{\mathbf{m}_{\tilde{\mu}_n}} \cdot V(\tilde{\mu}_n,\mu)$,  then the uniform convergence of  $\tilde{\mu}_n(\left\{x\right\})$ to $\mu(\left\{x\right\})$ $\mathbb{P}_\mu$-a.s. in $A_\mu$, and the fact that $\left|A_\mu\right|<\infty$, imply that  for an arbitrary small 
$\epsilon>0$ (in particular smaller than $\mathbf{m}_\mu$),  
\begin{align}\label{eq_pro_barron_conv_5} 
	\lim_{n \longrightarrow \infty} D(\mu|| \tilde{\mu}_n) &\leq \frac{\log e}{\mathbf{m}_\mu-\epsilon} \cdot \lim_{n \longrightarrow \infty}V(\tilde{\mu}_n,\mu), \ \mathbb{P}_\mu-a.s., 
\end{align}
which suffices to obtain the convergence results for the I-divergence. 
\end{proof}

\subsection{Theorem \ref{th_barron}}
\begin{proof}
	Let us define the {\em oracle Barron measure} $\tilde{\mu}_n$ by:
	\begin{equation}\label{eq_proof_th_barron_1}
		f_{\tilde{\mu}_n}(x) = \frac{d \tilde{\mu}_n}{d \lambda}(x) = f_v(x) \left[(1-a_n) \cdot  \frac{{\mu}(A_n(x))}{v(A_n(x))} + a_n\right],
	\end{equation}
	where we consider the true measure instead of its empirical version in (\ref{eq_subsec:barron_2}). 
	Then, the following convergence results can be obtained (see Proposition \ref{pro_est_error_barron} in Appendix \ref{appendix_3}), 
	\begin{align} \label{eq_proof_th_barron_9b_v2}
	\lim_{n \rightarrow \infty} \sup_{x\in A_{\tilde{\mu}_n}}\left|\frac{d \tilde{\mu}_n}{d \mu^*_n}(x)-1\right|=0, \  		\mathbb{P}_\mu-a.s.
	\end{align}
	Let $\mathcal{A}$ denote the collection   of sequences $x_1,x_2,....$ where the convergence in 
	(\ref{eq_proof_th_barron_9b_v2}) is holding\footnote{This set is almost-surely typical meaning that $\mathbb{P}_\mu(\mathcal{A})=1$.}.
	The rest of the proof reduces to show that for any arbitrary $(x_n)_{n\geq 1} \in \mathcal{A}$, 
	its respective sequence of induced measures  $\left\{ \mu^*_n: n\geq 1 \right\}$\footnote{For simplicity, the
	 dependency of $\mu^*_n$  on the sequence $(x_n)_{n\geq 1}$ will be considered implicit for the rest of the
	 proof.} satisfies the sufficient conditions  of Lemma \ref{th_con_infty_2} for entropy convergence.
	 
	 Let fix an arbitrary $(x_n)_{n\geq 1}\in \mathcal{A}$:\\
	{\bf Weak convergence $\mu^*_n \Rightarrow \mu$:}
	Without  loss of generality  we consider that $A_{\tilde{\mu}_n}=A_v$ for all $n\geq 1$. 
	Since $a_n \rightarrow 0$ and $h_n \rightarrow 0$,
	$f_{\tilde{\mu}_n}(x) \rightarrow \mu(\left\{x\right\})$ $\forall x \in A_v$,  
	we got the weak 
	convergence of $\tilde{\mu}_n$ to $\mu$.  On the other hand by definition of  $\mathcal{A}$,  
	$ \lim_{n \rightarrow \infty}  \sup_{x\in A_{\tilde{\mu}_n}} \left|\frac{f_{\tilde{\mu}_n}(x)}{f_{\mu^*_n}(x)}-1\right| =0$
	that implies that $ \lim_{n \rightarrow \infty} \left| f_{\mu^*_n}(x) - f_{\tilde{\mu}_n}(x)\right|=0$
	for all $x \in A_v$, and consequently  $\mu^*_n \Rightarrow \mu$. \\
	{\bf The condition in (\ref{eq_sec_infty_sp_2}):}
	By construction $\mu \ll {\mu}^*_n$,  $\mu \ll \tilde{\mu}_n$ and $\tilde{\mu}_n \approx {\mu}^*_n$  for all $n$, 
	then we will use the following equality: 
	\begin{equation}\label{eq_proof_th_barron_5}
		\log \frac{d \mu }{d \mu^*_n} (x) = \log \frac{d \mu }{d \tilde{\mu}_n}(x)  + \log \frac{d \tilde{\mu}_n}{d \mu^*_n }(x),
	\end{equation}
	for all $x \in A_\mu$. Concerning the approximation error term of (\ref{eq_proof_th_barron_5}), i.e.,  $\log \frac{d \mu }{d \tilde{\mu}_n}(x)$, 
	 $\forall x\in A_\mu$, 
	\begin{equation}\label{eq_proof_th_barron_2}
		\frac{d \tilde{\mu}_n}{d \mu}(x)= (1-a_n) \left[\frac{\mu(A_n(x))}{\mu(\left\{x\right\})} \frac{v(\left\{x\right\})}	{v(A_n(x))}\right] + a_n \frac{v(\left\{x\right\})}{\mu (\left\{x\right\})}.
	\end{equation}
	Given that $\mu \in \mathbb{H}(\mathbb{X}|v)$ this is equivalent to state that 
	 $\log (\frac{d \mu}{d v}(x))$ is bounded  $\mu$-almost everywhere, 
	which is equivalent to say that  $\mathbf{m}\equiv \inf_{x\in A_\mu} \frac{d \mu}{d v}(x) >0$
	and $M \equiv \sup_{x\in A_\mu} \frac{d \mu}{d v}(x) < \infty$. From this,  
	$\forall A \subset A_\mu$, 
	\begin{equation}\label{eq_proof_th_barron_3}
	\mathbf{m}   v(A) \leq \mu(A)\leq M  v(A).
	\end{equation}
	Then we have that, $\forall x \in A_\mu$
		$\frac{\mathbf{m}}{M}\leq \left[\frac{\mu(A_n(x))}{\mu(\left\{x\right\})} \frac{v(\left\{x\right\})}{v(A_n(x))}\right] \leq \frac{M}{\mathbf{m}}$.
	Therefore  for $n$ sufficient large,   $0<\frac{1}{2} \frac{\mathbf{m}}{M} \leq \frac{d \tilde{\mu}_n}{d \mu}(x) \leq \frac{M}{\mathbf{m}} + M < \infty$ for all $x$ in  $A_\mu$. Hence,  there exists $N_o>0$
	such that $\sup_{n\geq N_o} \sup_{x \in A_\mu}  \left|\log \frac{d \tilde{\mu}_n}{d \mu}(x)\right| < \infty$.\\
	For the estimation error term of (\ref{eq_proof_th_barron_5}), i.e., $\log \frac{d \tilde{\mu}_n}{d \mu^*_n }(x)$,  note that from the fact that  $(x_n)\in \mathcal{A}$, 
	and the convergence in (\ref{eq_proof_th_barron_9b_v2}),  there exists $N_1>0$ such that for all $n\geq N_1$   $	\sup_{x\in A_\mu} \left|\log \frac{d \tilde{\mu}_n}{d \mu^*_n}(x)\right| < \infty$,  given that $A_\mu\subset 	A_{\tilde{\mu}_n}=A_v$. Then using (\ref{eq_proof_th_barron_5}), 
	for all $n \geq \max \left\{N_0,N_1 \right\}$   $\sup_{x\in A_\mu} \left| \log \frac{d \mu^*_n}{d \mu}(x)\right| < \infty$, 
	which verifies (\ref{eq_sec_infty_sp_2}).\\
	%
	{\bf The condition in (\ref{eq_sec_infty_sp_3}):}
	Defining the function $\phi^*_n(x) \equiv 1_{A_v\setminus A_\mu}(x) \cdot f_{{\mu}^*_n}(x) \log (1/f_{\mu^*_n}(x))$, we want to verify that  $\lim_{n \rightarrow \infty} \int_{\mathbb{X}} \phi^*_n(x) d \lambda(x)=0$. 
	Considering that $(x_n)\in \mathcal{A}$, for all $\epsilon>0$ there exists $N(\epsilon)>0$, 
	such that $\sup_{x\in A_{\tilde{\mu}_n}} \left|\frac{f_{\tilde{\mu}_n}(x)}{f_{\mu^*_n}(x)}-1\right|<\epsilon$, then
	\begin{equation}\label{th_barron_c3_1}
		(1-\epsilon) f_{\tilde{\mu}_n}(x) < f_{\mu^*_n}(x) < (1+ \epsilon) f_{\tilde{\mu}_n}(x), \text{ for all } x \in A_v.
	\end{equation}
	From (\ref{th_barron_c3_1}),  $0 \leq \phi^*_n(x) \leq (1+ \epsilon) f_{\tilde{\mu}_n}(x) \log (1/(1- \epsilon) f_{\tilde{\mu}_n}(x))$ for all $n \geq N(\epsilon)$.  Analyzing $f_{\tilde{\mu}_n}(x)$ in (\ref{eq_proof_th_barron_1}) 	there are two scenarios: if $A_n(x)\cap A_\mu = \emptyset$   then $f_{\tilde{\mu}_n}(x)=a_n f_v(x)$, and otherwise $f_{\tilde{\mu}_n}(x)=f_v(x)(a_n+(1-a_n)\mu(A_n(x)\cap A_\mu)/v(A_n(x)))$.  Let us define: 
	\begin{align}\label{th_barron_c3_2}
		\mathcal{B}_n \equiv  \left\{x \in A_v\setminus A_\mu: A_n(x)\cap A_\mu = \emptyset \right\} \text{ and }
		\mathcal{C}_n \equiv  \left\{x\in A_v\setminus A_\mu: A_n(x)\cap A_\mu \ne \emptyset \right\}.
	\end{align}
	Then for all $n\geq N(\epsilon)$, 
	\begin{align}\label{th_barron_c3_3}
		 \sum_{\mathbb{X}} \phi^*_n(x) 
		 &\leq \sum_{A_v\setminus A_\mu} (1+ \epsilon) f_{\tilde{\mu}_n}(x) \log 1/((1- \epsilon) f_{\tilde{\mu}_n}(x)) 
		 \nonumber\\
		 &= \sum_{\mathcal{B}_n} (1+\epsilon) a_n f_v(x) \log \frac{1}{(1-\epsilon)a_n f_v(x)} + \sum_{\mathbb{X}} \tilde{\phi}_n(x), 
	\end{align}
	with $\tilde{\phi}_n(x) \equiv 1_{\mathcal{C}_n}(x) \cdot (1+\epsilon) f_{\tilde{\mu}_n}(x)  \log \frac{1}{(1-	\epsilon)f_{\tilde{\mu}_n}(x)}$.  The left term in (\ref{th_barron_c3_3}) is upper bounded by 
	$a_n(1+\epsilon) (H(v) + \log (1/a_n))$ which goes to zero with $n$ from $(a_n)$ being $o(1)$ 
	and the fact that $v\in \mathbb{H}(\mathbb{X})$.  For the right term in (\ref{th_barron_c3_3}), $(h_n)$ being $o(1)$ implies that  $\forall x\in A_v\setminus A_\mu$,  $x$ belongs to $\mathcal{B}_n$ eventually in $n$, then  $\tilde{\phi}_n(x)$ 
	tends to zero point-wise as $n$ goes to infinity. On the other hand, for all $x \in \mathcal{C}_n$ 
	(see 	(\ref{th_barron_c3_2})),  we have that
	\begin{align}\label{th_barron_c3_4}
		\frac{1}{1/m + 1}\leq \frac{\mu(A_n(x) \cap A_\mu )}{v(A_n(x) \cap A_\mu) +  v(A_v\setminus A_\mu)} \leq 	\frac{\mu(A_n(x))}{v(A_n(x))} \leq \frac{\mu(A_n(x)\cap A_\mu)}{v(A_n(x)\cap A_\mu)} \leq M.
	\end{align}
	These inequalities derive from (\ref{eq_proof_th_barron_3}). Consequently for all $x\in \mathbb{X}$,  
	if  $n$ sufficiently large such that $a_n<0.5$, then 
	\begin{align}\label{th_barron_c3_5}
	0\leq \tilde{\phi}_n(x) &\leq (1+\epsilon)(a_n+(1-a_n)M)f_v(x) \log \frac{1}{(1-\epsilon)(a_n+(1-a_n)m/(m+1))} 	\nonumber\\
				& \leq (1 + \epsilon)(1+M)f_v(x)   \left[ \log \frac{2(m+1)}{(1-\epsilon)} + \log \frac{1}{f_v(x)} \right] .
	\end{align}
	Hence from (\ref{th_barron_c3_2}), $\tilde{\phi}_n(x)$ is  bounded by a  fix function that is $l_1(\mathbb{X})$ by the assumption that $v\in \mathbb{H}(\mathbb{X})$. Then by the {\em dominated convergence theorem} \citep{varadhan_2001} and (\ref{th_barron_c3_3}),   
	$$\lim_{n \rightarrow \infty} \sum_{\mathbb{X}} \phi^*_n(x) 
	\leq  \lim_{n \rightarrow \infty} \sum_{\mathbb{X}} \tilde{\phi}_n(x). 
	$$ 
	
	In summary, we have shown that for any arbitrary  $(x_n)\in \mathcal{A}$ the sufficient conditions of Lemma \ref{th_con_infty_2} are satisfied, which proves the result in (\ref{eq_subsec:barron_3}) reminding that $\mathbb{P}_\mu(\mathcal{A})=1$. 
\end{proof}

\subsection{Theorem \ref{th_data_driven}}
\begin{proof}
	Let us  first introduce the oracle measure 
	\begin{equation} \label{eq_data_driven_3b}
	\mu_{\epsilon_n} \equiv  \mu(\cdot |\Gamma_{\epsilon_n}) \in \mathcal{P}(\mathbb{X}).
	\end{equation}
	Note that $\mu_{\epsilon_n}$ is a random probability measure (function of the i.i.d sequence $X_1,..,X_n$) as $\Gamma_{\epsilon_n}$ is a data-driven set (\ref{eq_data_driven_1}).  We will first show that: 
	\begin{equation} \label{eq_data_driven_3c}
	\lim_{n \rightarrow \infty}  H(\mu_{\epsilon_n})=H(\mu) 
	\text{ and } \lim_{n \rightarrow \infty}  D(\mu_{\epsilon_n}|| \mu)=0, \  \mathbb{P}_\mu-a.s. 
	\end{equation}
	Under the assumption on $(\epsilon_n)$ of Theorem \ref{th_data_driven},    
	$\lim_{n \rightarrow \infty}  \left| \mu(\Gamma_{\epsilon_n})- \hat{\mu}_n(\Gamma_{\epsilon_n}) \right|=0$,  
	$\mathbb{P}_\mu$-a.s.\footnote{This result derives from the fact that $\lim_{n \rightarrow \infty} V(\mu/\sigma_{\epsilon_n}, \hat{\mu}_n/\sigma_{\epsilon_n}) =0$,  $\mathbb{P}_\mu\text{-}a.s.$ , from (\ref{eq_data_driven_new8}).}  
	In addition,  since $(\epsilon_n)$ is $o(1)$ then $\lim_{n \rightarrow \infty} \hat{\mu}_n(\Gamma_{\epsilon_n}) =1$,  
	which implies that $\lim_{n \rightarrow \infty}   \mu(\Gamma_{\epsilon_n})=1$ $\mathbb{P}_\mu$-a.s.  
	From this $\mu_{\epsilon_n} \Rightarrow \mu$, $\mathbb{P}_\mu$-a.s.  
	Let us  consider a sequences $(x_n)$ where 
	$\lim_{n \rightarrow \infty}   \mu(\Gamma_{\epsilon_n})=1$.  Constrained to that 
	\begin{equation}
		\lim\sup_{n \rightarrow \infty} \sup_{x\in A_\mu} \frac{f_{\mu_{\epsilon_n}}(x) }{f_{\mu}(x)} = \lim \sup_{n 			\rightarrow \infty} \frac{1}{\mu(\Gamma_{\epsilon_n})} < \infty,
	\end{equation}
	then there is $N>0$ such that $\sup_{n>N} \sup_{x\in A_\mu} \frac{f_{\mu_{\epsilon_n}}(x) }{f_{\mu}(x)} <\infty$. 
	Hence from Lemma  \ref{th_con_infty_1},   $\lim_{n \rightarrow \infty} D(\mu_{\epsilon_n} || \mu)=0$ and 
	$\lim_{n \rightarrow \infty}  \left| H(\mu_{\epsilon_n}) - H(\mu) \right|=0$.  Finally,  the set of sequences $(x_n)$ 
	where $\lim_{n \rightarrow \infty}   \mu(\Gamma_{\epsilon_n})=1$ has probability one with respect 
	to $\mathbb{P}_\mu$, which proves (\ref{eq_data_driven_3c}).
	
	For the rest of the proof,  we concentrate on the analysis of  
	$\left|H(\hat{\mu}_{n,\epsilon_n})- H(\mu_{\epsilon_n}) \right|$ 
	that can be attributed to the estimation error aspect of the problem. 
	It is worth noting that by construction  $supp(\hat{\mu}_{n,\epsilon_n})=supp(\mu_{\epsilon_n})=\Gamma_{\epsilon_n}$, $\mathbb{P}_\mu$-a.s., consequently we can use
	\begin{equation} \label{eq_data_driven_4}
		H(\hat{\mu}_{n,\epsilon_n})- H(\mu_{\epsilon_n}) = \sum_{x\in \Gamma_{\epsilon_n}}  \left[ \mu_{\epsilon_n} (\left\{x \right\})- \hat{\mu}_{n,\epsilon_n} (\left\{x \right\}) \right] \log \hat{\mu}_{n,\epsilon_n} (\left\{ x \right\}) + D(\mu_{\epsilon_n} || \hat{\mu}_{n,\epsilon_n}).
	\end{equation}
		
	The first term on the RHS  of (\ref{eq_data_driven_4}) is upper bounded  by 
	$\log 1/ \mathbf{m}_{\hat{\mu}_n}^{\epsilon_n} \cdot V(\mu_{\epsilon_n}, \hat{\mu}_{n,\epsilon_n})$ $\leq \log 1/\epsilon_n \cdot V(\mu_{\epsilon_n}, \hat{\mu}_{n,\epsilon_n})$. Concerning the second term on the RHS  of (\ref{eq_data_driven_4}), it is possible to show  (details presented in Appendix \ref{appendix_div_bound_ddp}) that
	\begin{equation} \label{eq_data_driven_new5}
		D(\mu_{\epsilon_n} || \hat{\mu}_{n,\epsilon_n}) \leq  \frac{2 \log \frac{e}{\epsilon_n}}{\mu(\Gamma_{\epsilon_n})} \cdot V(\mu/\sigma_{\epsilon_n}, \hat{\mu}_n/\sigma_{\epsilon_n}),
	\end{equation}
	where 
	\begin{equation} \label{eq_data_driven_new5b}
	V(\mu/\sigma_{\epsilon_n}, \hat{\mu}_n/\sigma_{\epsilon_n}) \equiv \sup_{A \in \sigma_{\epsilon_n}}   \left| \mu(A) - \hat{\mu}_n(A)\right|.
	\end{equation}
	In addition, it can be verified  (details presented in Appendix \ref{appendix_1}) that
	\begin{equation} \label{eq_data_driven_5}
		V(\mu_{\epsilon_n}, \hat{\mu}_{n,\epsilon_n}) \leq K \cdot  V(\mu/\sigma_{\epsilon_n}, \hat{\mu}_n/\sigma_{\epsilon_n}), 
	\end{equation}
	for some universal constant $K>0$.  Therefore from (\ref{eq_data_driven_4}), (\ref{eq_data_driven_new5}) and (\ref{eq_data_driven_5}), there is $C>0$ such that:
	\begin{equation} \label{eq_data_driven_new6}
		\left|  H(\hat{\mu}_{n,\epsilon_n})- H(\mu_{\epsilon_n})\right|  \leq  \frac{C}{\mu(\Gamma_{\epsilon_n})} \log \frac{1}{\epsilon_n} \cdot  V(\mu/\sigma_{\epsilon_n}, \hat{\mu}_n/\sigma_{\epsilon_n}).
	\end{equation}
	As mentioned before,  $\mu(\Gamma_{\epsilon_n})$ goes to $1$ almost surely,  then we need to concentrate 
	on the analysis of the asymptotic behavior of $\log1/\epsilon_n  \cdot V(\mu/\sigma_{\epsilon_n}, \hat{\mu}_n/\sigma_{\epsilon_n})$.
	From {\em Hoeffding's inequality} \citep{devroye_2001}, we have that $\forall \delta>0$
	\begin{equation} \label{eq_data_driven_new7}
		\mathbb{P}^n_{\mu} \left(  \log1/\epsilon_n  \cdot V(\mu/\sigma_{\epsilon_n}, \hat{\mu}_n/\sigma_{\epsilon_n})  >\delta  \right) \leq 2^{\left| \Gamma_{\epsilon_n}\right|+1} \cdot e^{-\frac{ 2n \delta^2}{(\log 1/\epsilon_n)^2}}, 
	\end{equation}
	considering that by construction $\left| \sigma_{\epsilon_n} \right| \leq 2^{\left| \Gamma_{\epsilon_n}\right|+1} \leq  2^{1/\epsilon_n +1}$.
	Assuming that $(\epsilon_n)$ is $O(n^{-\tau})$, 
	$$\ln \mathbb{P}^n_{\mu} \left( \log 1/\epsilon_n  \cdot V(\mu/\sigma_{\epsilon_n}, \hat{\mu}_n/\sigma_{\epsilon_n})  >\delta  \right) \leq  (n^\tau+1) \ln 2 -\frac{2n\delta^2}{\tau \log n}.$$ 
	Therefore for all $\tau\in (0,1)$, $\delta>0$ and any arbitrary $l\in (\tau, 1)$
	\begin{equation} \label{eq_data_driven_new8}
		\lim\sup_{n \rightarrow \infty}\frac{1}{n^l} \cdot \ln \mathbb{P}^n_{\mu} \left( \log 1/\epsilon_n  \cdot V(\mu/\sigma_{\epsilon_n}, \hat{\mu}_n/\sigma_{\epsilon_n})  >\delta  \right) < 0. 
	\end{equation}
	This is sufficient to  show that $\sum_{n\geq 1} \mathbb{P}^n_{\mu} \left(  \log1/\epsilon_n  \cdot V(\mu/\sigma_{\epsilon_n}, \hat{\mu}_n/\sigma_{\epsilon_n})  >\delta  \right) < \infty$ that concludes the argument 
	from the  {\em Borel-Cantelli Lemma}.

\end{proof}

\subsection{Theorem \ref{th_rate_data_driven_power}}
\label{proof_th_rate_data_driven_power}
\begin{proof}
	We  consider 
	\begin{equation}\label{pr_th_rate_data_driven_0}
	\left| H(\mu) - H(\hat{\mu}_{n,\epsilon_n}) \right| \leq  \left| H(\mu) - H(\mu_{\epsilon_n}) \right| +  \left| H(\mu_{\epsilon_n}) - H(\hat{\mu}_{n,\epsilon_n}) \right| 
	\end{equation}
	to analize the approximation  and the estimation error terms separately. 
	
\subsubsection{Approximation error analysis:} 
	 Note that 
	 $\left| H(\mu) - H(\mu_{\epsilon_n}) \right|$  is a random object
	  as $\mu_{\epsilon_n}$ in (\ref{eq_data_driven_3b}) is a function  of the data-dependent partition 
	  and, consequently, a function of $X_1,..,X_n$. In the  following,  
	  we consider the oracle set
	  \begin{equation}\label{pr_th_rate_data_driven_1}
	  	\tilde{\Gamma}_{\epsilon_n} \equiv \left\{x \in \mathbb{X}:  \mu( \left\{x \right\})\geq {\epsilon_n} \right\},  
	   \end{equation}
	   and the oracle conditional measure\footnote{$\tilde{\Gamma}_{\epsilon_n}$ is a deterministic function of $(\epsilon_n)$ and so is  the measure $\tilde{\mu}_{\epsilon_n}$ in (\ref{pr_th_rate_data_driven_2}).}
	  \begin{equation}\label{pr_th_rate_data_driven_2}   
	  	\tilde{\mu}_{\epsilon_n} \equiv  \mu(\cdot | \tilde{\Gamma}_{\epsilon_n}) \in \mathcal{P}(\mathbb{X}).
	   \end{equation} 
	   From  definitions and triangular inequality:
	  \begin{align}\label{pr_th_rate_data_driven_3a}   
	 		 \left| H(\mu) - H( \tilde{\mu}_{\epsilon_n}) \right|  &\leq  \sum_{x\in \tilde{\Gamma}^c_{\epsilon_n}} \mu ( \left\{x \right\}) \log \frac{1}{\mu ( \left\{x \right\}) } + \log\frac{1}{\mu (\tilde{\Gamma}_{\epsilon_n})} \nonumber\\
			 &+ \left(\frac{1}{\mu (\tilde{\Gamma}_{\epsilon_n})}  - 1 \right) \cdot   \sum_{x\in \tilde{\Gamma}_{\epsilon_n}} \mu ( \left\{ x\right\}) \log \frac{1}{\mu ( \left\{x \right\})}, 
 	   \end{align} 
	   and, similarly,  the approximation error is bounded by
	  \begin{align}\label{pr_th_rate_data_driven_3b}   
	 		 \left| H(\mu) - H(\mu_{\epsilon_n}) \right|  &\leq  \sum_{x\in \Gamma^c_{\epsilon_n}} \mu ( \left\{x \right\}) \log \frac{1}{\mu ( \left\{x \right\}) } \nonumber\\
			 &+ \log\frac{1}{\mu (\Gamma_{\epsilon_n})} +  \left(\frac{1}{\mu (\Gamma_{\epsilon_n})}  -1 \right) \cdot   \sum_{x\in \Gamma_{\epsilon_n}} \mu ( \left\{ x\right\}) \log \frac{1}{\mu ( \left\{x \right\})}. 
 	   \end{align} 
	   We denote the RHS of (\ref{pr_th_rate_data_driven_3a}) and (\ref{pr_th_rate_data_driven_3b}) by $a_{\epsilon_n}$ and $b_{\epsilon_n}(X_1,..,X_n)$, respectively. 
	   
	   We can show that if $(\epsilon_n)$ is $O(n^{-\tau})$ and $\tau \in (0,1/2)$, then 
		\begin{equation}\label{pr_th_rate_data_driven_3c}   
		\lim \sup_{n \rightarrow \infty} b_{\epsilon_n}(X_1,..,X_n) - a_{2 \epsilon_n} \leq 0, \  \mathbb{P}_{\mu}-a.s.,
		\end{equation}
	 which from (\ref{pr_th_rate_data_driven_3b})   implies that $\left| H(\mu) - H(\mu_{\epsilon_n}) \right|$ 
	   is $O(a_{2 \epsilon_n})$, $\mathbb{P}_{\mu}$-a.s.
	   The proof of (\ref{pr_th_rate_data_driven_3c}) is presented in Appendix \ref{app_prop_oracle_appro_err_apr}.
	   
	   Then, we  need to analyze the rate of convergence of the deterministic sequence $(a_{2\epsilon_n})$.
	   Analyzing the RHS of (\ref{pr_th_rate_data_driven_3a}), we recognize two independent terms: 
	   the partial entropy sum $\sum_{x\in \tilde{\Gamma}^c_{\epsilon_n}} \mu ( \left\{x \right\}) \log \frac{1}{\mu ( \left\{x \right\}) }$ and the rest that is bounded asymptotically by   $\mu(\tilde{\Gamma}^c_{\epsilon_n}) (1+H(\mu))$, using the fact that $\ln x \leq x-1$ for $x\geq 1$. Here is where the tail condition on $\mu$ plays a  role. 
	   From the tail condition, we have that 
	   \begin{align}\label{pr_th_rate_data_driven_5}   
	   	\mu(\tilde{\Gamma}^c_{\epsilon_n}) &\leq \mu  \left( \left\{ ({k_o}/{\epsilon_n})^{1/p}+1,  ({k_o}/{\epsilon_n})^{1/p}+2,  ({k_o}/{\epsilon_n})^{1/p}+3, \ldots \right\} \right)  = \sum_{x\geq  (\frac{k_o}{\epsilon_n})^{1/p}+1} \mu(\left\{x \right\}) \nonumber\\ 
		&\leq k_1 \cdot \mathcal{S}_{ (\frac{k_o}{\epsilon_n})^{1/p}+1},
	  \end{align} 
	  where $\mathcal{S}_{x_o}\equiv \sum_{x\geq x_o} x^{-p}$.  Similarly as $\left\{0,1,.., ({k_o}/{\epsilon_n})^{1/p}\right\} \subset \tilde{\Gamma}_{\epsilon_n}$, then
	   \begin{align}\label{pr_th_rate_data_driven_6}   
	   		\sum_{x\in \tilde{\Gamma}^c_{\epsilon_n}} \mu ( \left\{x \right\}) \log \frac{1}{\mu ( \left\{x \right\})} &\leq \sum_{x\geq  (\frac{k_o}{\epsilon_n})^{1/p}+1}  \mu ( \left\{x \right\}) \log \frac{1}{\mu ( \left\{x \right\})} \leq \sum_{x\geq  (\frac{k_o}{\epsilon_n})^{1/p}+1} k_1 x^{-p} \cdot \log \frac{1}{k_0 x^{-p}} \nonumber\\
		& \leq k_1\log p \cdot  \mathcal{R}_{(\frac{k_o}{\epsilon_n})^{1/p}+1} + k_1 \log 1/k_0 \cdot \mathcal{S}_{(\frac{k_o}{\epsilon_n})^{1/p}+1},
	  \end{align} 
	  where  $\mathcal{R}_{x_o}\equiv  \sum_{x\geq x_o} x^{-p} \log x$.   
	  
	  In Appendix \ref{aux_th_rate_data_driven}, 
	  it is shown that $\mathcal{S}_{x_o} \leq C_0\cdot x_o^{1-p}$ and $\mathcal{R}_{x_o} \leq C_1 \cdot x_o^{1-p}$ for constants  $C_1>0$ and $C_0>0$. Integrating these results in the RHS of (\ref{pr_th_rate_data_driven_5}) and (\ref{pr_th_rate_data_driven_6}) and considering that $(\epsilon_n)$ is $O(n^{-\tau})$, we have that 
	   both $\mu(\tilde{\Gamma}^c_{\epsilon_n})$ and $\sum_{x\in \tilde{\Gamma}^c_{\epsilon_n}} \mu ( \left\{x \right\}) \log \frac{1}{\mu ( \left\{x \right\})}$ are  $O(n^{-\frac{\tau(p-1)}{p}})$. This implies that our oracle sequence $(a_{\epsilon_n})$ is $O(n^{-\frac{\tau(p-1)}{p}})$. 
	   
	   In conclusion, if $\epsilon_n$ is $O(n^{-\tau})$ for $\tau \in (0,1/2)$, it follows that 
	   \begin{equation}\label{pr_th_rate_data_driven_7}   
	   \left| H(\mu) - H(\mu_{\epsilon_n}) \right|  \text{ is } O(n^{-\frac{\tau(p-1)}{p}}),\  \mathbb{P}_\mu-a.s.
	    \end{equation} 

\subsubsection{Estimation error analysis:} 
Let us consider $\left| H(\mu_{\epsilon_n}) - H(\hat{\mu}_{n,\epsilon_n}) \right|$, 
from the bound in (\ref{eq_data_driven_new6}) and the fact that for any $\tau\in (0,1)$, 
$\lim_{n \rightarrow \infty}{\mu(\Gamma_{\epsilon_n})}=1$ $\mathbb{P}_\mu$-a.s. from (\ref{eq_data_driven_new8}),   
the problem reduces to analyze the rate of convergence  of the following random object: 
\begin{equation}\label{pr_th_rate_data_drive_8}   
	\rho_n(X_1,..,X_n) \equiv \log \frac{1}{\epsilon_n} \cdot V(\mu/\sigma{(\Gamma_{\epsilon_n})}, \hat{\mu}_n/\sigma(\Gamma_{\epsilon_n})).
\end{equation} 
We will analize, instead, the  oracle version of $\rho_n(X_1,..,X_n)$ given by: 
\begin{equation}\label{pr_th_rate_data_drive_9}   
	\xi_n(X_1,..,X_n) \equiv \log \frac{1}{\epsilon_n} \cdot V(\mu/\sigma(\tilde{\Gamma}_{\epsilon_n/2}), \hat{\mu}_n/\sigma(\tilde{\Gamma}_{\epsilon_n/2})),
\end{equation} 
where $\tilde{\Gamma}_{\epsilon} \equiv \left\{x\in \mathbb{X}: \mu(\left\{ x\right\})\geq \epsilon \right\}$ is the oracle counterpart of ${\Gamma}_{\epsilon}$ in (\ref{eq_data_driven_1}). To do so, we can show that if $\epsilon_n$ is $O(n^{-\tau})$  with $\tau\in (0, 1/2)$, then 
\begin{equation}\label{pr_th_rate_data_drive_10}   
	\lim \inf_{n \rightarrow \infty} \xi_n(X_1,..,X_n) - \rho_n(X_1,..,X_n) \geq 0, \ \ \mathbb{P}_\mu-a.s.
\end{equation}
The proof of (\ref{pr_th_rate_data_drive_10}) is presented in Appendix \ref{app_prop_oracle_est_err_aprt}.

Moving to the almost sure rate of convergence  of $\xi_n(X_1,..,X_n)$, it is simple to show 
for our $p$-power dominating distribution that if $(\epsilon_n)$ is $O(n^{-\tau})$ and $\tau\in (0,p)$ 
then $$\lim_{n \rightarrow \infty}\xi_n(X_1,..,X_n)=0\text{ }\mathbb{P}_\mu-a.s,$$ and, more specifically,  
\begin{equation}\label{pr_th_rate_data_drive_10b}   
\xi_n(X_1,..,X_n) \text{ is } o(n^{-q}) \text{ for all } q\in (0,(1-\tau/p)/2), \ \mathbb{P}_\mu-a.s.
\end{equation}
The argument is presented in Appendix \ref{app_rate_oracle_est_error}.

In conclusion, if $\epsilon_n$ is $O(n^{-\tau})$ for $\tau \in (0,1/2)$, it follows that 
\begin{equation}\label{pr_th_rate_data_driven_10c}   
	\left| H(\mu_{\epsilon_n}) - H(\hat{\mu}_{n,\epsilon_n}) \right|  \text{ is } O(n^{-q}),\  \mathbb{P}_\mu-a.s.,
\end{equation} 
for all $q\in (0,(1-\tau/p)/2)$.

\subsubsection{Estimation vs. approximation error:} 
Coming back to (\ref{pr_th_rate_data_driven_0}) and using (\ref{pr_th_rate_data_driven_7}) and (\ref{pr_th_rate_data_driven_10c}),  the analysis reduces to finding the solution $\tau^*$  in $(0,1/2)$ that offers the 
best trade-off between the estimation and approximation error rate:
\begin{equation}\label{pr_th_rate_data_driven_13}   
	\tau^* \equiv \arg \max_{\tau (0,1/2)} \min \left\{ \frac{(1-\tau/p)}{2}, \frac{\tau(p-1)}{p}\right\}.
\end{equation}  
It is simple to verify that $\tau^*=1/2$. Then by considering $\tau$ arbitrary close to the admissible limit $1/2$,  we can 
achieve a rate of convergence for $ \left| H(\mu) - H(\hat{\mu}_{n,\epsilon_n}) \right|$
that is arbitrary close to $O(n^{-\frac{1}{2}(1-1/p)})$, $\mathbb{P}$-a.s. 

More formally,  for any $l \in (0,\frac{1}{2}(1-1/p))$ we can take $\tau\in (\frac{l}{(1-1/p)}, \frac{1}{2})$
where  $\left| H(\mu) - H(\hat{\mu}_{n,\epsilon_n}) \right|$ is $o(n^{-l})$, $\mathbb{P}_\mu$-a.s.,  
from (\ref{pr_th_rate_data_driven_7}) and (\ref{pr_th_rate_data_driven_10c}).

Finally, a simple corollary of this analysis is to consider $\tau(p)=\frac{1}{2+1/p}<1/2$ where: 
\begin{equation}\label{pr_th_rate_data_driven_14}   
	\left| H(\mu) - H(\hat{\mu}_{n,\epsilon_n}) \right| \text{ is } O(n^{- \frac{1-{1}/{p}}{2+1/p} }), \ \mathbb{P}_\mu-a.s.,   
\end{equation}  
which concludes the argument. 
\end{proof}

\subsection{Theorem \ref{th_rate_data_driven_exp_family}}
\label{proof_th_rate_data_driven_exp_family}
\begin{proof}
	The argument follows the proof of Th.~\ref{th_rate_data_driven_power}. 
	In particular, we use the estimation-approximation error bound: 
	\begin{equation} \label{eq_proof_th_rate_data_driven_exp_family_0}
	\left| H(\mu) - H(\hat{\mu}_{n,\epsilon_n}) \right| \leq  \left| H(\mu) - H(\mu_{\epsilon_n}) \right| +  \left| H(\mu_{\epsilon_n}) - H(\hat{\mu}_{n,\epsilon_n}) \right|,
	\end{equation} 
	and the following two results derived in the proof of Th.~\ref{th_rate_data_driven_power}: If $(\epsilon_n)$ is $O(n^{-\tau})$ with $\tau \in (0,1/2)$ then (for the approximation error)
	\begin{equation} \label{eq_proof_th_rate_data_driven_exp_family_1}
		\left| H(\mu) - H(\mu_{\epsilon_n}) \right| \text{ is }  O(a_{2\epsilon_{n}})\  \mathbb{P}_{\mu}-a.s.,
	\end{equation}
	with $a_{\epsilon_n} = \sum_{x\in \tilde{\Gamma}^c_{\epsilon_n}} \mu ( \left\{x \right\}) \log \frac{1}{\mu ( \left\{x \right\}) } + \mu(\tilde{\Gamma}^c_{\epsilon_n}) (1+H(\mu)),$
	while (for the estimation error)
	\begin{equation} \label{eq_proof_th_rate_data_driven_exp_family_2}
		\left| H(\mu_{\epsilon_n}) - H(\hat{\mu}_{n,\epsilon_n}) \right| \text{ is } O (\xi_n(X_1,..,X_n)) \  \mathbb{P}_{\mu}-a.s.,
	\end{equation}
	with $\xi_n(X_1,..,X_n) = \log \frac{1}{\epsilon_n} \cdot V(\mu/\sigma(\tilde{\Gamma}_{\epsilon_n/2}), \hat{\mu}_n/\sigma(\tilde{\Gamma}_{\epsilon_n/2})).$
	
	For the estimation error,  we need to bound the rate of convergence of $\xi_n(X_1,..,X_n)$ to zero almost surely.  We first note that $ \left\{1,..,x_o(\epsilon_n) \right\}=\tilde{\Gamma}_{\epsilon_n}$ with $x_o(\epsilon_n)= \lfloor  1/\alpha \ln (k_0/\epsilon_n)\rfloor$.  Then from {\em Hoeffding's inequality} we have that
	\begin{align} \label{eq_proof_th_rate_data_driven_exp_family_6}
		\mathbb{P}^n_{\mu} \left(  \left\{\xi_n(X_1,..,X_n) > \delta  \right\}  \right) &\leq 2^{\left| (\tilde{\Gamma}_{\epsilon_n/2}) \right|} \cdot  e^{-2n \frac{\delta^2}{\log(1/\epsilon_n)^2}} \nonumber\\
					&\leq 2^{1/\alpha \ln (2k_0/\epsilon_n)+1} \cdot e^{-2n \frac{\delta^2}{\log(1/\epsilon_n)^2}}.
	\end{align}
	Considering $\epsilon_{n}=O(n^{-\tau})$,  an arbitrary sequence $(\delta_n)$ being $o(1)$ and $l>0$, 
	it follows from (\ref{eq_proof_th_rate_data_driven_exp_family_6}) that 
	\begin{align} \label{eq_proof_th_rate_data_driven_exp_family_7}
		\frac{1}{n^l} \cdot \ln \mathbb{P}^n_{\mu} \left(  \left\{\xi_n(X_1,..,X_n) > \delta_n  \right\}  \right) & \leq 
				\frac{1}{n^l} \ln(2)  \left[ 1/\alpha \ln (2k_0/\epsilon_n)+1 \right] - n^{1-l}  \frac{\delta_n^2}{\log(1/\epsilon_n)^2}.  
	\end{align}
	We note that the first term in the RHS of (\ref{eq_proof_th_rate_data_driven_exp_family_7}) 
	is $O(\frac{1}{n^l}\log n)$ and goes to zero for all $l>0$, while the second term is $O(n^{1-l}\frac{\delta_n^2}{\log n^2})$.  
	If  we consider $\delta_n=O(n^{-q})$, this second term is $O(n^{1-2q-l} \cdot \frac{1}{\log n^2})$. Therefore, 
	for any $q\in (0,1/2)$ we can take an arbitrary $l\in (0,1-2q]$ such that 
	$\mathbb{P}^n_{\mu} \left(  \left\{\xi_n(X_1,..,X_n) > \delta_n  \right\}  \right)$ is $O(e^{-n^l})$ from (\ref{eq_proof_th_rate_data_driven_exp_family_7}). This result implies,  from the 
	{\em Borel-Cantelli Lemma}, that $\xi_n(X_1,..,X_n)$ is $o(\delta_n)$, $\mathbb{P}_{\mu}$-a.s, which in summary shows that $\left| H(\mu_{\epsilon_n}) - H(\hat{\mu}_{n,\epsilon_n}) \right|$ is $O(n^{-q})$ for all $q\in (0,1/2)$.
	
	For the approximation error, it is simple to verify that: 
	\begin{equation} \label{eq_proof_th_rate_data_driven_exp_family_3}
		\mu(\tilde{\Gamma}^c_{\epsilon_n}) \leq k_1 \cdot \sum_{x  \geq x_o(\epsilon_n)+1} e^{-\alpha x}  = k_1  \cdot \tilde{\mathcal{S}}_{x_o(\epsilon_n)+1} 
	\end{equation}
	and 
	\begin{align} \label{eq_proof_th_rate_data_driven_exp_family_4}
		\sum_{x\in \tilde{\Gamma}^c_{\epsilon_n}} \mu ( \left\{x \right\}) \log \frac{1}{\mu ( \left\{x \right\})} &\leq  \sum_{x  \geq x_o(\epsilon_n)+1} k_1 e^{-\alpha x} \log \frac{1}{k_0 e^{-\alpha x}} = k_1 \log \frac{1}{k_0}  \cdot \tilde{\mathcal{S}}_{x_o(\epsilon_n)+1} \nonumber\\ 
		&+ \alpha \log e \cdot k_1  \cdot \tilde{\mathcal{R}}_{x_o(\epsilon_n)+1}, 
	\end{align}
	where $\tilde{\mathcal{S}}_{x_o} \equiv \sum_{x  \geq x_o} e^{-\alpha x}$  and $\tilde{\mathcal{R}}_{x_o} \equiv \sum_{x  \geq x_o} x\cdot e^{-\alpha x}$. At this point, it is not difficult to show that $\tilde{\mathcal{S}}_{x_o} \leq M_1 e^{-\alpha x_o}$ and  
	$\tilde{\mathcal{R}}_{x_o}\leq M_2 e^{-\alpha x_o}\cdot x_o$ for some constants $M_1>0$ and $M_2>0$. Integrating these 
	partial steps,  we have that
	\begin{align} \label{eq_proof_th_rate_data_driven_exp_family_5}
		a_{\epsilon_n} \leq k_1 (1+H(\mu)+\log \frac{1}{k_0}) \cdot  \tilde{\mathcal{S}}_{x_o(\epsilon_n)+1} + \alpha \log e \cdot k_1  \cdot \tilde{\mathcal{R}}_{x_o(\epsilon_n)+1}
		    \leq O_1\cdot \epsilon_n + O_2\cdot \epsilon_n \log \frac{1}{\epsilon_n}
	\end{align}
	for some constant $O_1>0$ and $O_2>0$. 
	The last step is from the evaluation of $x_o(\epsilon_n)= \lfloor  1/\alpha \ln (k_0/\epsilon_n)\rfloor$. 
	Therefore from (\ref{eq_proof_th_rate_data_driven_exp_family_1}) and (\ref{eq_proof_th_rate_data_driven_exp_family_5}), it follows that $\left| H(\mu) - H(\mu_{\epsilon_n}) \right|$ 
	is $O(n^{-\tau} \log n)$  $\mathbb{P}_{\mu}$-a.s. for all $\tau \in (0,1/2)$. 
	
	The argument concludes by integrating in (\ref{eq_proof_th_rate_data_driven_exp_family_0}) the almost sure convergence results obtained for the estimation and approximation in errors. 
\end{proof}

\subsection{Theorem \ref{th_rate_data_driven_finite_support}}
\label{proof_th_rate_data_driven_finite_support}
\begin{proof}
	Let us define the event 
	\begin{equation}\label{eq_proof_th_rate_data_driven_finite_support_1}
		\mathcal{B}^\epsilon_n  =  \left\{x^n_1 \in \mathbb{X}^n: \Gamma_\epsilon(x^n_1)=A_\mu \right\},   
	\end{equation}
	that represents the detection of the support of $\mu$ from the data for a given $\epsilon>0$ in (\ref{eq_data_driven_1}). 
	Note that the dependency on the data for $\Gamma_\epsilon$  is made explicit in this notation. 
	In addition, let us consider the deviation event
	\begin{equation}\label{eq_proof_th_rate_data_driven_finite_support_2}
		\mathcal{A}^\epsilon_{n}(\mu)= \left\{ x^n_1 \in \mathbb{X}^n: V(\mu,\hat{\mu}_n) >\epsilon \right\}.
	\end{equation}
	By the hypothesis that $\left| A_\mu\right|<\infty$, then $\mathbf{m}_\mu=\min_{x\in A_\mu} f_\mu(x)>0$.  
	Therefore if  $x^n_1\in (\mathcal{A}^{\mathbf{m}_\mu/2}_{n}(\mu))^c$  then $\hat{\mu}_n( \left\{ x \right\}) \geq  \mathbf{m}_\mu/2$  for  all $x\in A_\mu$, which implies that  $(\mathcal{B}^\epsilon_n)^c\subset \mathcal{A}^{\mathbf{m}_\mu/2}_{n}(\mu)$ as long as $0<\epsilon\leq \mathbf{m}_\mu/2$. 
	Using the hypothesis that $\epsilon_n \rightarrow 0$,  there is $N>0$ such that for all $n\geq N$ $(\mathcal{B}^{\epsilon_n}_n)^c\subset \mathcal{A}^{\mathbf{m}_\mu/2}_{n}(\mu)$ and, consequently, 
	\begin{equation}\label{eq_proof_th_rate_data_driven_finite_support_3}
		\mathbb{P}^n _\mu ((\mathcal{B}^{\epsilon_n}_n)^c) \leq  \mathbb{P}^n _\mu (\mathcal{A}^{\mathbf{m}_\mu/2}_{n}(\mu)) \leq 2^{k+1} \cdot e^{- \frac{n \mathbf{m}_\mu^2}{4}}, 
	\end{equation}
	the last from {\em Hoeffding's inequality} considering $k=\left| A_\mu\right| <\infty$.

	If we consider the events:
	\begin{align}\label{eq_proof_th_rate_data_driven_finite_support_4}
		\mathcal{C}^\epsilon_n(\mu) &=\left\{ x^n_1 \in \mathbb{X}^n:  \left|H(\hat{\mu}_{n,\epsilon_n})  - H({\mu}) \right|  >\epsilon \right\} \text{ and }\\
		\mathcal{D}^\epsilon_n(\mu) &=\left\{ x^n_1 \in \mathbb{X}^n: \left|H(\hat{\mu}_{n})  - H({\mu}) \right|  >\epsilon \right\}
	\end{align}
	and we use the fact that by definition $\hat{\mu}_{n,\epsilon_n}=\hat{\mu}_{n}$ conditioning on  $\mathcal{B}^{\epsilon_n}_n$,  it follows that $\mathcal{C}^\epsilon_n(\mu) \cap \mathcal{B}^{\epsilon_n}_n \subset \mathcal{D}^\epsilon_n(\mu)$.  	
	Then,  for all $\epsilon>0$ and $n\geq N$
	\begin{align}\label{eq_proof_th_rate_data_driven_finite_support_5}
	\mathbb{P}^n_\mu(\mathcal{C}^\epsilon_n(\mu)) &\leq \mathbb{P}^n_\mu(\mathcal{C}^\epsilon_n(\mu) \cap \mathcal{B}^{\epsilon_n}_n) + \mathbb{P}^n _\mu ((\mathcal{B}^{\epsilon_n}_n)^c)\nonumber\\
		&\leq \mathbb{P}^n_\mu(\mathcal{D}^\epsilon_n(\mu)) + \mathbb{P}^n _\mu ((\mathcal{B}^{\epsilon_n}_n)^c)\nonumber\\
		&\leq 2^{k+1}  \left[ e^{-\frac{2 n \epsilon ^2}{(\mathbf{M}_{\mu} + \frac{\log e}{\mathbf{m}_{\mu}})^2}} + e^{- \frac{n \mathbf{m}_\mu^2}{4}}  \right], 
	\end{align}
	the last inequality from Theorem \ref{th_con_inq} and (\ref{eq_proof_th_rate_data_driven_finite_support_3}).
	\end{proof}

\section{Summary and Final Remarks}
\label{sec_final}
In this work we show that entropy convergence results  
are instrumental to derive new (strongly consistent) estimation results for the Shannon entropy in infinite alphabets, and as a byproduct,  distribution estimators that are strongly consistent in direct and reverse  I-divergence.  
Adopting a set of sufficient conditions for entropy convergences in the context of four plug-in histogram-based schemes, we show concrete conditions where strong consistency for entropy estimation in $\infty$-alphabets can be obtained (Theorems \ref{pro_barron_conv}, \ref{th_barron} and \ref{th_data_driven}). In addition,  we explore the relevant  case where the target distribution has a finite but unknown support,  deriving for the overall estimation error 
almost sure rate of convergence results (Theorems \ref{th_con_inq} and \ref{th_rate_data_driven_finite_support}) that match the optimal asymptotic rate that can be obtained in the finite alphabet version of the problem. Finally, we focus on the case of a data-driven plug-in estimate that restricts the support where we estimate the distribution. The basic idea was to have design parameters to control estimation error effects in the context of entropy estimation. The conjecture was that this approach will offers degrees of freedom to restrict the learning complexity while introducing an approximation error and, consequently, the possibility to find an adequate balance between these two learning components.  Adopting a entropy convergence result (Lemma \ref{th_con_infty_1}), we show that this data-driven scheme offers the same universal estimation attributes than the classical plug-in estimate under some mild conditions on its threshold design parameter (Theorem \ref{th_data_driven}). In addition,  by addressing the  technical task of deriving concrete closed-form expressions for the estimation 
and approximation error in this adaptive data-driven context (where a subset of the space is selected adaptively from the data to restrict the estimation of the distribution), we show a solution were  almost sure rate of convergence of the overall estimation error are obtained over a family of distribution with some concrete tail bounded conditions (Theorems \ref{th_rate_data_driven_power} and \ref{th_rate_data_driven_exp_family}). These results show the potential that  data-driven frameworks has to adapt to the complexity of entropy estimation in unbounded alphabets by restricting the inference to a finite but dynamic subset of the space that scales with the amount of data and the distribution of the data in the sampling space. 

Concerning the classical plug-in estimate presented in Section \ref{sub_sec_fs_class_hist}, it is important to mention that the  work of  \cite{antos_2002}  shows that $\lim_{n \rightarrow \infty}  H(\hat{\mu}_n )=H(\mu)$ almost surely  distribution-free and,  furthermore, it provides rates of convergences for families with specific  
tail-bounded conditions \citep[Th. 7]{antos_2002}. On this context, Theorem \ref{th_con_inq}  focuses on the case when $\mu \in \mathcal{F}(\mathbb{X})$, where  new finite-length deviation inequalities and confidence intervals were derived.  From that perspective,  it complements the result presented in \citep{antos_2002} in the non-explored scenario when $\mu \in \mathcal{F}(\mathbb{X})$. It is also important to mention two results by \cite[Ths. 11 and 12]{ho_2010} for the plug-in estimate in (\ref{eq_sec_stat_learn_2}). They derive bounds for the object $\mathbb{P}_{\mu}^n(\left|H(\hat{\mu}_n)  - H({\mu}) \right|\geq \epsilon)$ 
and from this determine confidence intervals under a finite and known support restriction on the distribution $\mu$.
In contrast, our result in Theorem \ref{th_con_inq}  resolves the case for a  finite and unknown supported distribution,   that was declared to be a challenging problem based on the arguments presented in \cite[Th.13]{ho_2010} concerning the discontinuity of the entropy.

\acks{The work is supported by funding from FONDECYT Grant 1170854, CONICYT-Chile
and the Advanced Center for Electrical and Electronic Engineering (AC3E), Basal Project FB0008.  The author is grateful to Patricio Parada for his insights and stimulating discussion in the initial stage of this work.}

\appendix
\section{Minimax risk for finite Entropy distributions in Infinite alphabets}
\label{pro_unbounded_minimax_risk}
\begin{proposition} $R^*_n =\infty.$
\end{proposition}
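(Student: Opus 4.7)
The plan is to apply Le Cam's two-point method to a carefully chosen pair $\mu_0,\mu_1 \in \mathbb{H}(\mathbb{X})$ that exploits the discontinuity of $H$ with respect to the total variation distance, which is the very phenomenon highlighted in Theorem 1 of Ho \emph{et al.} cited in the Introduction. Concretely, for every fixed $n\geq 1$ I will exhibit a one-parameter family of alternatives whose entropy gap is unbounded while the $n$-fold product laws remain statistically indistinguishable, forcing $R^*_n=\infty$.

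First I would invoke the standard two-point reduction of Tsybakov (Section 2.4.2): for any estimator $\hat{H}$ based on $X_1,\ldots,X_n$ and any pair $\mu_0,\mu_1\in\mathbb{H}(\mathbb{X})$,
\[
\sup_{\mu\in\{\mu_0,\mu_1\}} \mathbb{E}_{\mu^n}\!\left[(\hat{H}-H(\mu))^2\right]
\;\geq\; \tfrac{1}{4}(H(\mu_0)-H(\mu_1))^2 \cdot (1-V(\mu_0^n,\mu_1^n)).
\]
Taking the inf over estimators and the sup over $\mathbb{H}(\mathbb{X})$ gives $R^*_n \geq \tfrac{1}{4}(H(\mu_0)-H(\mu_1))^2(1-V(\mu_0^n,\mu_1^n))$. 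I would then control the right-hand factor through the elementary tensorization bound $V(\mu_0^n,\mu_1^n)\leq 1-(1-V(\mu_0,\mu_1))^n \leq n\,V(\mu_0,\mu_1)$, which reduces the whole argument to producing a one-sample pair with tiny variation distance but huge entropy gap.

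Next I would construct the witnesses. Take $\mu_0$ to be the point mass at $1\in\mathbb{X}$, so $H(\mu_0)=0$ and $\mu_0\in\mathcal{F}(\mathbb{X})\subset\mathbb{H}(\mathbb{X})$. For any target $M>0$, set $\epsilon:=1/(4n)$ and let $\nu_M$ be the uniform distribution on $\{2,\ldots,2^{\lceil 4nM\rceil}+1\}$, which has finite support and satisfies $H(\nu_M)=\lceil 4nM\rceil \geq 4nM$. Define
\[
\mu_1 \equiv (1-\epsilon)\,\mu_0 + \epsilon\,\nu_M \;\in\;\mathcal{F}(\mathbb{X}).
\]
Then $V(\mu_0,\mu_1)=\epsilon$, hence $V(\mu_0^n,\mu_1^n)\leq n\epsilon=1/4$, while the grouping identity for Shannon entropy gives $H(\mu_1)=h_2(\epsilon)+\epsilon\,H(\nu_M)\geq \epsilon\,H(\nu_M)\geq M$. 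Plugging into Le Cam, $R^*_n \geq \tfrac{1}{4}M^2(1-\tfrac{1}{4})=\tfrac{3}{16}M^2$. Since $M$ is arbitrary, $R^*_n=\infty$.

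The only real subtlety is balancing the two demands simultaneously: keeping $\mu_1$ inside $\mathbb{H}(\mathbb{X})$ (finite entropy) while forcing $H(\mu_1)\to\infty$ along a sequence whose single-sample variation distance shrinks like $1/n$. The finite-support choice of $\nu_M$ handles both issues cleanly through the grouping identity, so no delicate tail estimate is required; the remaining steps are the textbook Le Cam inequality and the product-measure variation bound. This gives the desired conclusion that the minimax mean-square risk over $\mathbb{H}(\mathbb{X})$ is unbounded for every finite sample size $n$, matching the claim in the footnote.
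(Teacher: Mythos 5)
Your proof is correct, and it takes a genuinely different technical route than the paper while sharing the same high-level strategy (Le Cam's two-point method plus the discontinuity of $H$). The paper bounds the indistinguishability factor via the Kullback--Leibler form of Le Cam's inequality, $R^*_n \geq \tfrac14 (H(Q)-H(P))^2 e^{-nD(P\|Q)}$, and then invokes a lemma extracted from Ho \emph{et al.}~(Theorem~1) asserting the existence of finitely supported $P,Q$ with $D(P\|Q)<\epsilon$ yet $H(Q)-H(P)>\delta$ for any prescribed $\epsilon,\delta>0$; it then fixes $\epsilon$ and sends $\delta\to\infty$. You instead use the total-variation form of Le Cam together with the tensorization bound $V(\mu_0^n,\mu_1^n)\leq 1-(1-V(\mu_0,\mu_1))^n\leq nV(\mu_0,\mu_1)$, and you build the two-point witnesses from scratch: $\mu_0$ a point mass, $\mu_1=(1-\epsilon)\mu_0+\epsilon\nu_M$ with $\epsilon=1/(4n)$ and $\nu_M$ uniform on $2^{\lceil 4nM\rceil}$ atoms, so that $V(\mu_0,\mu_1)=\epsilon$ and, by the grouping identity, $H(\mu_1)\geq\epsilon H(\nu_M)\geq M$. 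All the numerical bookkeeping checks out ($V(\mu_0^n,\mu_1^n)\leq 1/4$, $R^*_n\geq\tfrac{3}{16}M^2$, and $\mu_1\in\mathcal{F}(\mathbb{X})\subset\mathbb{H}(\mathbb{X})$). What your version buys is self-containment: you do not need to import the cited lemma, and the TV/tensorization route avoids any KL calculus. What the paper's version buys is brevity (one cited lemma plus one displayed inequality) and a construction that, via reverse KL smallness, connects the unboundedness of the risk more directly to the existing discontinuity literature it references. Both establish $R^*_n=\infty$ for every finite $n$.
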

For the proof, we use the following lemma that follows from \citep[Th. 1]{ho_2010}. 
\begin{lemma} \label{lemma_discontinity_entropy} 
Let us fix two arbitrary real numbers $\delta>0$ and $\epsilon>0$. Then there are $P$, $Q$ two finite supported distributions on $\mathbb{H}(\mathbb{X})$  that satisfy that $D(P||Q)<\epsilon$ while $H(Q)-H(P)>\delta$. \footnote{The proof of Lemma \ref{lemma_discontinity_entropy} derives from the same construction presented in the proof of \citep[Th. 1]{ho_2010}, i.e., $P=(p_1,..,p_L)$ and a modification of it $Q_M=(p_1\cdot (1-1/\sqrt{M}), p_2+p_1/M\sqrt{M},..., p_L+p_1/M\sqrt{M},p_1/M\sqrt{M},...,p_1/M\sqrt{M})$ both of finite support and consequently in $\mathbb{H}(\mathbb{X})$. It is simple to verify that as $M$ goes to infinity $D(P||Q_M) \longrightarrow 0$ while $H(Q_M)-H(P)\longrightarrow \infty$.} 
\end{lemma}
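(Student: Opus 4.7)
The plan is to prove the lemma by an explicit two-parameter construction in which one parameter controls the I-divergence and the other controls the entropy gap, so that the two requirements can be met independently. Fix any distribution $P \in \mathcal{F}(\mathbb{X})$ with support $A_P = \{1,\ldots,L\}$; for instance, $P$ uniform on $\{1,\ldots,L\}$ works, but the argument does not depend on the choice. For parameters $\alpha \in (0,1)$ and integer $M \geq 1$, I would define $Q = Q_{\alpha,M} \in \mathcal{F}(\mathbb{X})$ by bleeding a small fraction of mass away from $P$ and spreading it uniformly over $M$ fresh atoms disjoint from $A_P$, namely $q_i = (1-\alpha) p_i$ for $i \in A_P$ and $q_{L+j} = \alpha/M$ for $j=1,\ldots,M$. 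Then $P,Q$ are finite-supported, hence in $\mathbb{H}(\mathbb{X})$, and $P \ll Q$, so both $D(P\|Q)$ and $H(Q)-H(P)$ are finite and well defined.

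For the divergence, a direct computation on the support of $P$ gives
\begin{equation}
D(P\|Q) \;=\; \sum_{i \in A_P} p_i \log \frac{p_i}{(1-\alpha)p_i} \;=\; -\log(1-\alpha), \nonumber
\end{equation}
which is independent of $M$ and tends to $0$ as $\alpha \to 0$, so it can be driven below any prescribed $\epsilon>0$ by taking $\alpha$ sufficiently small. For the entropy, splitting the sum over the two disjoint blocks of $A_Q$ yields
\begin{equation}
H(Q) \;=\; -(1-\alpha)\log(1-\alpha) + (1-\alpha)\,H(P) + \alpha\log(M/\alpha), \nonumber
\end{equation}
and therefore
\begin{equation}
H(Q) - H(P) \;=\; \alpha \log M \;+\; \alpha \log(1/\alpha) \;-\; \alpha H(P) \;-\; (1-\alpha)\log(1-\alpha). \nonumber
\end{equation}

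With $\alpha$ already fixed from the divergence requirement, all terms on the right-hand side except $\alpha \log M$ are absolute constants (depending on $\alpha, P$ but not on $M$), while the leading term grows without bound in $M$. Hence, given $\delta, \epsilon > 0$, I would first choose $\alpha \in (0,1)$ so that $-\log(1-\alpha) < \epsilon$, and then choose $M$ large enough that $\alpha \log M$ exceeds $\delta$ plus the absolute constants; the resulting pair $(P, Q_{\alpha,M})$ satisfies both $D(P\|Q) < \epsilon$ and $H(Q) - H(P) > \delta$, as required. There is no genuine obstacle in this plan: the construction is elementary and finite-dimensional, and the only care required is the bookkeeping of the constant terms once $\alpha$ is fixed. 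The construction also differs slightly from the sketch in the footnote, where the mass $p_1/(M\sqrt{M})$ makes the added entropy $\Theta((\log M)/\sqrt{M})$ vanish with $M$; replacing $1/(M\sqrt{M})$ by $\alpha/M$ with $\alpha$ fixed is what gives the divergent $\alpha \log M$ term, and this is the key modification needed to make the argument go through.
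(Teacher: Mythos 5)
Your proof is correct, and it is worth noting that it is not merely a reparametrization of the paper's footnote but a genuine repair of it. Both arguments share the same underlying idea (bleed a small amount of mass from $P$ onto many fresh atoms so that the divergence stays small while the new atoms inflate the entropy), but the paper follows the single-parameter construction attributed to Ho--Yeung, $Q_M$ with mass $p_1/\sqrt{M}$ removed from the first atom and spread in pieces of size $p_1/(M\sqrt{M})$ over roughly $M$ new atoms. As you observe, with that choice the entropy contributed by the new atoms is of order $\frac{p_1}{\sqrt{M}}\bigl(\tfrac{3}{2}\log M+\log\tfrac{1}{p_1}\bigr)\to 0$, and since all the perturbed masses converge to those of $P$, one gets $H(Q_M)\to H(P)$ rather than $H(Q_M)-H(P)\to\infty$; so the footnote's verification does not go through as transcribed. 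Your two-parameter version fixes this cleanly: the removed mass $\alpha$ is held fixed (and alone determines $D(P\Vert Q)=-\log(1-\alpha)$, independent of $M$), while the number of fresh atoms $M$ is sent to infinity, producing the divergent term $\alpha\log M$ in $H(Q)-H(P)$. This decoupling is exactly what makes the divergence requirement and the entropy-gap requirement independently satisfiable, and it also yields the stronger statement needed in Appendix~A (for fixed $\epsilon$, the entropy gap can be made arbitrarily large), which the paper's stated construction would not deliver. Your computations of $D(P\Vert Q)$ and $H(Q)$ are correct, and the choice of $\alpha$ first and $M$ second is logically in the right order.
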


\begin{proof}
	For any pair of distribution $P$, $Q$ in $\mathbb{H}(\mathbb{X})$, {\em Le Cam's two point method} \citep{tsybakov_2009} shows that:   
	\begin{equation}\label{eq_pro_unbounded_minimax_risk_1}
		R^*_n \geq \frac{1}{4} \left( H(Q)-H(P) \right)^2 \exp^{-n D(P||Q)}. 
	\end{equation}	
	 Adopting Lemma \ref{lemma_discontinity_entropy} and (\ref{eq_pro_unbounded_minimax_risk_1}), for any $n$ and any arbitrary $\epsilon>0$ and $\delta>0$, we have that $R^*_n> \delta^2\exp^{-n\epsilon}/4$. Then exploiting the discontinuity of the entropy in infinite alphabets, we can fix $\epsilon$ and make $\delta$ arbitrar large. 
\end{proof}

\section{Proposition \ref{pro_est_error_barron}}
\label{appendix_3}
\begin{proposition}\label{pro_est_error_barron} 
	Under the assumptions of Theorem \ref{th_barron}: 
	\begin{align} \label{eq_proof_th_barron_9b}
	\lim_{n \rightarrow \infty} \sup_{x\in A_{\tilde{\mu}_n}}\left|\frac{d \tilde{\mu}_n}{d \mu^*_n}(x)-1\right|=0, \  \mathbb{P}_\mu-a.s.
	\end{align}
\end{proposition}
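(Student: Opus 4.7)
\textbf{Proof plan for Proposition \ref{pro_est_error_barron}.} The approach is to turn the supremum of a Radon--Nikodym ratio into a supremum of cell-wise empirical deviations, and then apply a Hoeffding union bound together with the growth condition (ii) of Theorem \ref{th_barron} and the Borel--Cantelli lemma. I would begin by writing both densities explicitly on $A_v=A_{\tilde{\mu}_n}$ using (\ref{eq_subsec:barron_2}) and (\ref{eq_proof_th_barron_1}), which gives, for every $x\in A_v$,
\begin{equation*}
\frac{d\tilde{\mu}_n}{d\mu^*_n}(x)-1 \;=\; \frac{(1-a_n)\,\bigl[\mu(A_n(x))-\hat{\mu}_n(A_n(x))\bigr]/v(A_n(x))}{(1-a_n)\,\hat{\mu}_n(A_n(x))/v(A_n(x)) + a_n}.
\end{equation*}
The denominator is bounded below by $a_n$, while the construction of $\pi_n$ in (\ref{eq_subsec:barron_1}) gives $v(A_n(x))\geq h_n$. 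Combining these two lower bounds yields the key pointwise estimate
\begin{equation*}
\sup_{x\in A_v}\left|\frac{d\tilde{\mu}_n}{d\mu^*_n}(x)-1\right| \;\leq\; \frac{1}{a_n h_n}\cdot \max_{i=1,\ldots,m_n}\bigl|\mu(A_{n,i})-\hat{\mu}_n(A_{n,i})\bigr|,
\end{equation*}
which reduces the claim to controlling a maximum of $m_n\leq 1/h_n$ empirical cell deviations.

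Next, I would fix $\epsilon>0$ and apply Hoeffding's inequality for each fixed cell $A_{n,i}$, combined with a union bound over the $m_n$ cells, to obtain
\begin{equation*}
\mathbb{P}^n_{\mu}\!\left(\frac{1}{a_n h_n}\max_i \bigl|\mu(A_{n,i})-\hat{\mu}_n(A_{n,i})\bigr| > \epsilon\right) \;\leq\; 2\,m_n\,\exp\!\bigl(-2n\,\epsilon^2 (a_n h_n)^2\bigr).
\end{equation*}
Condition (ii) gives $\tau\in(0,1/2)$ with $1/(a_n h_n)=o(n^{\tau})$, so eventually $a_n h_n\geq c\,n^{-\tau}$ and $m_n\leq 1/h_n \leq 1/(a_n h_n)=o(n^{\tau})$. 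Substituting these two bounds, the above probability is majorized (for $n$ large) by $C\,n^{\tau}\exp(-2c^2\epsilon^2\,n^{1-2\tau})$, which is summable in $n$ since $1-2\tau>0$ makes the exponential term dominate any polynomial factor. Borel--Cantelli then gives the desired almost-sure convergence.

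The main delicate point is ensuring the exponent $1-2\tau$ (from $n(a_n h_n)^2$) beats the polynomial factor $m_n$ coming from the union bound; this is exactly why the hypothesis requires $\tau<1/2$ rather than the weaker $\tau<1$, and why the sharper cell-wise union bound (giving a factor $m_n$) rather than the coarser total-variation bound over $\sigma(\pi_n)$ (which would give $2^{m_n+1}$ and only work for $\tau<1/3$) is used. A secondary routine check is that $A_{\tilde{\mu}_n}=A_v$ for all $n$, which follows from $a_n\in(0,1)$ making every cell mass of $\tilde{\mu}_n$ strictly positive on $A_v$; this justifies taking the supremum over $A_v$ throughout.
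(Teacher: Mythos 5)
Your proposal matches the paper's own proof step for step: both derive the same pointwise bound $\sup_{x}|d\tilde{\mu}_n/d\mu^*_n(x)-1|\leq (a_nh_n)^{-1}\sup_{A\in\pi_n}|\hat{\mu}_n(A)-\mu(A)|$ from the explicit densities and the lower bounds on the denominator, apply Hoeffding with a union bound over the $m_n$ cells (the paper's inequality $\mathbb{P}^n_\mu(\sup_{A\in\pi_n}|\hat{\mu}_n(A)-\mu(A)|>\epsilon)\leq 2|\pi_n|e^{-2n\epsilon^2}$ is exactly your cell-wise bound), and conclude by Borel--Cantelli using condition (ii). One small inaccuracy in your closing remark: the paper already uses the cell-wise factor $2|\pi_n|$ here rather than the coarser $2^{m_n+1}$ bound, so your contrast describes a hypothetical alternative, not what the paper actually does.
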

\begin{proof}
First note that $supp(\tilde{\mu}_n)=supp({\mu}^*_n)$, then $\frac{d \tilde{\mu}_n}{d \mu^*_n}(x)$ is well-defined and $\forall x \in A_{\tilde{\mu}_n}$
\begin{equation}\label{eq_proof_pro_est_error_barron_1}
	\frac{d \tilde{\mu}_n}{d \mu^*_n}(x)= \frac{(1-a_n)\cdot \mu(A_n(x)) +a_n v(A_n(x))}{(1-a_n)\cdot \hat{\mu}_n(A_n(x)) +a_n v(A_n(x))}.
\end{equation}
Then by construction, 
\begin{align}\label{eq_proof_pro_est_error_barron_2}
	\sup_{x\in A_{\tilde{\mu}_n}}\left|\frac{d \tilde{\mu}_n}{d \mu^*_n}(x)-1\right| \leq \sup_{A\in \pi_n}\frac{\left|\hat{\mu}_n(A)-\mu(A)\right|}{a_n\cdot h_n}.
\end{align}
From {\em Hoeffding's inequality} we have that  $\forall \epsilon>0$
\begin{align}\label{eq_proof_pro_est_error_barron_3}
\mathbb{P}^n_\mu \left( \sup_{A \in \pi_n} \left|\hat{\mu}_n(A) - \mu(A)\right| > \epsilon \right)
\leq 2 \cdot \left|\pi_n\right| \cdot \exp ^{- {2n \epsilon^2}}.
\end{align}
By condition ii),  given that $(1/a_nh_n)$ is $o(n^\tau)$
for some $\tau \in (0,1/2)$, then there exists $\tau_o \in (0,1)$
such that, 
\begin{align*}
	\lim_{n \rightarrow \infty}\frac{1}{n^{\tau_o}} \ln \mathbb{P}^n_\mu \left(\sup_{x\in A_{\tilde{\mu}_n} }\left|\frac{d \tilde{\mu}_n}{d \mu^*_n}(x)-1\right| > \epsilon \right) \leq 
	\lim_{n \rightarrow \infty}\frac{1}{n^{\tau_o}} \ln (2 \left|\pi_n\right| ) - {2\cdot (n^{\frac{1-\tau_o}{2}} a_n h_n \epsilon)^2}= - \infty.
\end{align*}
This implies that $\mathbb{P}^n_\mu \left(\sup_{x\in A_{\tilde{\mu}_n}}\left|\frac{d \tilde{\mu}_n}{d \mu^*_n}(x)-1\right| > \epsilon \right)$ is eventually dominated by a constant time $(e^{-n^{\tau_o}})_{n\geq 1}$, 
which from the {\em Borel-Cantelli lemma} \citep{varadhan_2001} implies that 
\begin{align}\label{eq_proof_pro_est_error_barron_5}
\lim_{n \rightarrow \infty} \sup_{x\in A_{\tilde{\mu}_n}}\left|\frac{d \tilde{\mu}_n}{d \mu^*_n}(x)-1\right|=0, \  \mathbb{P}_\mu-a.s.
\end{align}
\end{proof}

\section{Proposition \ref{pro_div_bound_ddp}}
\label{appendix_div_bound_ddp}
\begin{proposition}\label{pro_div_bound_ddp} 
	\begin{align*}
	D(\mu_{\epsilon_n} || \hat{\mu}_{n,\epsilon_n}) \leq  \frac{2 \log \frac{e}{\epsilon_n}}{\mu(\Gamma_{\epsilon_n})} \cdot V(\mu/\sigma_{\epsilon_n}, \hat{\mu}_n/\sigma_{\epsilon_n})
	\end{align*}
\end{proposition}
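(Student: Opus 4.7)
The plan is to compare $\mu_{\epsilon_n}$ and $\hat{\mu}_{n,\epsilon_n}$ cell-by-cell on $\Gamma_{\epsilon_n}$ and extract the two sources of discrepancy: the ratio of masses $\mu(\{x\})/\hat{\mu}_n(\{x\})$ on each singleton, and the global normalization ratio $\hat{\mu}_n(\Gamma_{\epsilon_n})/\mu(\Gamma_{\epsilon_n})$. Writing $\alpha=\mu(\Gamma_{\epsilon_n})$, $\beta=\hat{\mu}_n(\Gamma_{\epsilon_n})$ and using $\mu_{\epsilon_n}(\{x\})=\mu(\{x\})/\alpha$, $\hat{\mu}_{n,\epsilon_n}(\{x\})=\hat{\mu}_n(\{x\})/\beta$, I would decompose
\begin{equation*}
D(\mu_{\epsilon_n}\|\hat{\mu}_{n,\epsilon_n}) = \frac{1}{\alpha}\sum_{x\in\Gamma_{\epsilon_n}}\mu(\{x\})\log\frac{\mu(\{x\})}{\hat{\mu}_n(\{x\})} + \log\frac{\beta}{\alpha}.
\end{equation*}
Since $\Gamma_{\epsilon_n}\in\sigma_{\epsilon_n}$, the normalization term is controlled immediately via $\log t\le(t-1)\log e$: writing $V\equiv V(\mu/\sigma_{\epsilon_n},\hat{\mu}_n/\sigma_{\epsilon_n})$, we get $\log(\beta/\alpha)\le(\beta-\alpha)\log e/\alpha\le V\log e/\alpha$.

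The main work is bounding the first sum by a quantity proportional to $V\log(1/\epsilon_n)$ rather than the naive $V/\epsilon_n$ that a direct application of $\log t\le(t-1)\log e$ would produce. To achieve this, I would split $\Gamma_{\epsilon_n}=A_+\cup A_-$ with $A_+=\{x:\mu(\{x\})\ge\hat{\mu}_n(\{x\})\}$. The $A_-$ portion contributes a nonpositive quantity and can be discarded. On $A_+$, the key algebraic identity is
\begin{equation*}
\mu_x\log\frac{\mu_x}{\hat{\mu}_x} = (\mu_x-\hat{\mu}_x)\log\frac{\mu_x}{\hat{\mu}_x} + \hat{\mu}_x\log\frac{\mu_x}{\hat{\mu}_x},
\end{equation*}
where the second summand is at most $(\mu_x-\hat{\mu}_x)\log e$ by $\log t\le(t-1)\log e$. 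Together these give
\begin{equation*}
\mu_x\log\frac{\mu_x}{\hat{\mu}_x}\le(\mu_x-\hat{\mu}_x)\log\frac{e\,\mu_x}{\hat{\mu}_x}\le(\mu_x-\hat{\mu}_x)\log\frac{e}{\epsilon_n},
\end{equation*}
because $\hat{\mu}_n(\{x\})\ge\epsilon_n$ for every $x\in\Gamma_{\epsilon_n}$ and $\mu_x\le 1$. Summing over $A_+\subset\Gamma_{\epsilon_n}$ and using $A_+\in\sigma_{\epsilon_n}$ bounds the $A_+$ contribution by $V\log(e/\epsilon_n)$.

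Dividing by $\alpha$ and adding the normalization bound $V\log e/\alpha\le V\log(e/\epsilon_n)/\alpha$ yields the stated inequality with the factor $2$ out front. The subtle step---and the only place the proof could easily be off by a factor $1/\epsilon_n$---is the decomposition $\mu_x\log(\mu_x/\hat{\mu}_x)=(\mu_x-\hat{\mu}_x)\log(\mu_x/\hat{\mu}_x)+\hat{\mu}_x\log(\mu_x/\hat{\mu}_x)$: it is what guarantees that the uniform log bound $\log(1/\epsilon_n)$ multiplies the variation mass $\mu_x-\hat{\mu}_x$ rather than the probability mass $\mu_x$, producing the logarithmic (as opposed to reciprocal) dependence on $\epsilon_n$.
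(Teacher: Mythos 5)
Your proof is correct and follows essentially the same route as the paper: the same decomposition of $D(\mu_{\epsilon_n}\|\hat{\mu}_{n,\epsilon_n})$ into a cell-wise sum plus a normalization term, and the same algebraic split $\mu_x\log(\mu_x/\hat{\mu}_x)=(\mu_x-\hat{\mu}_x)\log(\mu_x/\hat{\mu}_x)+\hat{\mu}_x\log(\mu_x/\hat{\mu}_x)$ that attaches the $\log(1/\epsilon_n)$ factor to the variation mass rather than the probability mass. The only (cosmetic) difference is in the final bookkeeping: the paper bounds the absolute value of the sum over all of $\Gamma_{\epsilon_n}$ and then invokes the atom identity $2V=\sum_{x\in\Gamma_{\epsilon_n}}|f_\mu(x)-f_{\hat\mu_n}(x)|+|\mu(\Gamma_{\epsilon_n})-\hat\mu_n(\Gamma_{\epsilon_n})|$, whereas you discard the nonpositive $A_-$ contribution, use $A_+\in\sigma_{\epsilon_n}$ to bound the $A_+$ discrepancy by $V$ directly, and add the two $V\log(e/\epsilon_n)/\alpha$ pieces; both deliver the factor $2$.
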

\begin{proof}
	From definition, 
	\begin{align} \label{eq_pro_div_bound_ddp_1}
	D(\mu_{\epsilon_n} || \hat{\mu}_{n,\epsilon_n}) = \frac{1}{\mu(\Gamma_{\epsilon_n})}  \sum_{x \in \Gamma_{\epsilon_n}} f_{\mu}(x) \log \frac{f_{\mu}(x)}{f_{\hat{\mu}_n}(x)} + \log \frac{\hat{\mu}_n(\Gamma_{\epsilon_n}) }{\mu(\Gamma_{\epsilon_n})}. 
	\end{align}
	For the right term in the RHS of (\ref{eq_pro_div_bound_ddp_1}): 
	\begin{align} \label{eq_pro_div_bound_ddp_1b}
	\log \frac{\hat{\mu}_n(\Gamma_{\epsilon_n}) }{\mu(\Gamma_{\epsilon_n})}  \leq \frac{\log(e) }{\mu(\Gamma_{\epsilon_n})}  \left|  \hat{\mu}_n(\Gamma_{\epsilon_n})-\mu(\Gamma_{\epsilon_n}) \right|.
	\end{align}
	For the left term in the RHS of (\ref{eq_pro_div_bound_ddp_1}): 
	\begin{align} \label{eq_pro_div_bound_ddp_2}
	 	\left| \sum_{x \in \Gamma_{\epsilon_n}} f_{\mu}(x) \log \frac{f_{\mu}(x)}{f_{\hat{\mu}_n}(x)} \right| 
			&= \left| \sum_{\substack{ x \in \Gamma_{\epsilon_n} \\ f_\mu(x)\leq f_{\hat{\mu}_n}(x)}} f_{\mu}(x) \log \frac{f_{\mu}(x)}{f_{\hat{\mu}_n}(x)} + \sum_{\substack{ x \in \Gamma_{\epsilon_n}\\  f_\mu(x) > f_{\hat{\mu}_n}(x) \geq \epsilon_n}} f_{\mu}(x) \log \frac{f_{\mu}(x)}{f_{\hat{\mu}_n}(x)}  \right| \nonumber\\
			&\leq \sum_{\substack{ x \in \Gamma_{\epsilon_n} \\ f_\mu(x)\leq f_{\hat{\mu}_n}(x)}} f_{\mu}(x) \log \frac{f_{\hat{\mu}_n}(x) }{f_{\mu}(x)} + \sum_{\substack{ x \in \Gamma_{\epsilon_n}\\  f_\mu(x) > f_{\hat{\mu}_n}(x) \geq \epsilon_n}}  f_{\hat{\mu}_n}(x) \log \frac{f_{\mu}(x)}{f_{\hat{\mu}_n}(x)}\nonumber\\
			&+ \sum_{\substack{ x \in \Gamma_{\epsilon_n}\\  f_\mu(x) > f_{\hat{\mu}_n}(x) \geq \epsilon_n}}  ( f_{{\mu}}(x)-f_{\hat{\mu}_n}(x)) \cdot  \log \frac{f_{\mu}(x)}{f_{\hat{\mu}_n}(x)}\\
			\label{eq_pro_div_bound_ddp_2b}
			&\leq  \log e \left[ \sum_{\substack{ x \in \Gamma_{\epsilon_n} \\ f_\mu(x)\leq f_{\hat{\mu}_n}(x)}} (f_{\hat{\mu}_n}(x)-f_{\mu}(x)) + \sum_{\substack{ x \in \Gamma_{\epsilon_n}\\  f_\mu(x) > f_{\hat{\mu}_n}(x)}} (f_{\mu}(x)- f_{\hat{\mu}_n}(x))  \right] \nonumber\\
			&+ \log \frac{1}{\epsilon_n} \cdot \sum_{\substack{ x \in \Gamma_{\epsilon_n}\\  f_\mu(x) > f_{\hat{\mu}_n}(x) }}  ( f_{{\mu}}(x)-f_{\hat{\mu}_n}(x))\\
			\label{eq_pro_div_bound_ddp_2c}
			&\leq  (\log e+ \log \frac{1}{\epsilon_n}) \cdot \sum_{x\in \Gamma_{\epsilon_n}} 	 \left| f_{\mu}(x)- f_{\hat{\mu}_n}(x) \right|. 
	\end{align}
	The first inequality in (\ref{eq_pro_div_bound_ddp_2}) is by triangular inequality,  the second in (\ref{eq_pro_div_bound_ddp_2b}) is from the fact that $\ln x\leq x-1$ for $x>0$. 
	Finally from definition of the total  variational distance over $\sigma_{\epsilon_n}$ in (\ref{eq_data_driven_new5b})
	we have that 
	\begin{align} \label{eq_pro_div_bound_ddp_3}
		2 \cdot V(\mu/\sigma_{\epsilon_n}, \hat{\mu}_n/\sigma_{\epsilon_n})=  \sum_{x\in \Gamma_{\epsilon_n}}  \left| f_{\mu}(x)- f_{\hat{\mu}_n}(x) \right| + \left|  \hat{\mu}_n(\Gamma_{\epsilon_n})-\mu(\Gamma_{\epsilon_n}) \right|, 
	\end{align}
	 which concludes the argument from (\ref{eq_pro_div_bound_ddp_1}), (\ref{eq_pro_div_bound_ddp_1b}) and (\ref{eq_pro_div_bound_ddp_2}).
\end{proof}

\section{Proposition \ref{pre1_th_conv_conditional_prob}}
\label{appendix_1}
\begin{proposition}\label{pre1_th_conv_conditional_prob}
	Considering that $(k_n) \rightarrow \infty$, 
	there exists $K>0$ and $N>0$ such that $\forall n \geq N$, 
	\begin{equation}\label{eq_proof_th_conv_conditional_prob_1}
		V(\tilde{\mu}_{k_n}, \hat{\mu}^*_{k_n,n} )  \leq K \cdot V(\mu/\sigma_{k_n}, \hat{\mu}_n/\sigma_{k_n}).
	\end{equation}
\end{proposition}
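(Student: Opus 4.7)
The plan is to show the inequality by unfolding the definition of conditional probability, applying a standard algebraic identity to split the numerator, and then using the structure of the data-driven partition to recognize the resulting sets as being in $\sigma_{k_n}$; the denominator will then be controlled uniformly for large $n$ using the assumption $k_n\to\infty$.

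First, I would fix an arbitrary $B\subset \mathbb{X}$ and write (with $\Gamma$ a shorthand for the data-driven set $\Gamma_{k_n}$ in (\ref{eq_data_driven_1}))
\begin{equation*}
\tilde{\mu}_{k_n}(B)-\hat{\mu}^*_{k_n,n}(B) = \frac{\mu(B\cap \Gamma)}{\mu(\Gamma)} - \frac{\hat{\mu}_n(B\cap \Gamma)}{\hat{\mu}_n(\Gamma)}.
\end{equation*}
Using the identity $a/c-b/d = (a(d-c)+c(a-b))/(cd)$ with $a=\mu(B\cap \Gamma)$, $b=\hat{\mu}_n(B\cap \Gamma)$, $c=\mu(\Gamma)$, $d=\hat{\mu}_n(\Gamma)$, and the elementary bound $\mu(B\cap\Gamma)\leq \mu(\Gamma)$, I obtain
\begin{equation*}
\bigl|\tilde{\mu}_{k_n}(B)-\hat{\mu}^*_{k_n,n}(B)\bigr|\leq \frac{|\hat{\mu}_n(\Gamma)-\mu(\Gamma)|}{\hat{\mu}_n(\Gamma)} + \frac{|\mu(B\cap \Gamma)-\hat{\mu}_n(B\cap \Gamma)|}{\hat{\mu}_n(\Gamma)}.
\end{equation*}

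Second, I would exploit the structure of $\Pi_{k_n}$: since $\Gamma$ is the (finite) union of the singleton cells of the partition and $B\cap \Gamma$ is a sub-union of the same singletons, both sets belong to $\sigma_{k_n}$. Hence each numerator is bounded above by $V(\mu/\sigma_{k_n},\hat{\mu}_n/\sigma_{k_n})$, and taking the supremum over $B$ yields
\begin{equation*}
V(\tilde{\mu}_{k_n}, \hat{\mu}^*_{k_n,n})\leq \frac{2}{\hat{\mu}_n(\Gamma)}\cdot V(\mu/\sigma_{k_n},\hat{\mu}_n/\sigma_{k_n}).
\end{equation*}

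Finally, to absorb $1/\hat{\mu}_n(\Gamma)$ into a deterministic constant $K$, I would argue that $\hat{\mu}_n(\Gamma_{k_n})\to 1$ eventually (with the sense of eventuality matching how the bound is used in (\ref{eq_data_driven_5}) in the proof of Theorem \ref{th_data_driven}, where $\mu(\Gamma_{\epsilon_n})\to 1$ a.s.\ is invoked). Concretely, since $k_n\to \infty$ implies that the corresponding threshold tends to zero, one shows $\mu(\Gamma_{k_n}^c)\to 0$, and then uses the elementary deviation bound $|\hat{\mu}_n(\Gamma_{k_n})-\mu(\Gamma_{k_n})|\leq V(\mu/\sigma_{k_n},\hat{\mu}_n/\sigma_{k_n})$ together with Hoeffding's inequality (as in (\ref{eq_data_driven_new7})) to conclude that $\hat{\mu}_n(\Gamma_{k_n})\geq 1/2$ eventually. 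This gives the claim with $K=4$ and some finite $N$.

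The main obstacle is the third step: because $\Gamma_{k_n}$ is itself data-dependent, one cannot directly apply concentration to $\hat{\mu}_n(\Gamma_{k_n})$. The cleanest workaround is to dominate $\Gamma_{k_n}^c$ by the (deterministic) oracle set $\tilde{\Gamma}_{k_n/2}^c \equiv \{x\colon \mu(\{x\})<2\epsilon_{k_n}\}$ up to a deviation event whose probability is controlled by Hoeffding, mirroring the bridging argument used in Appendix \ref{app_prop_oracle_est_err_aprt} between $\rho_n$ and $\xi_n$; once this reduction is in place, the asymptotic behavior of $\hat{\mu}_n(\Gamma_{k_n})$ becomes straightforward.
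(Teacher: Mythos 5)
Your proof is correct and follows essentially the same strategy as the paper: write the two conditional measures as ratios, split the difference algebraically into a ``numerator deviation'' term and a ``normalizer deviation'' term, observe that both numerators live in $\sigma_{k_n}$ and are therefore bounded by $V(\mu/\sigma_{k_n}, \hat{\mu}_n/\sigma_{k_n})$, and finally absorb the denominator into a constant using that the conditioning event fills up. The paper's split pulls out $1/\mu(\Gamma_{k_n})$ and gets $V(\tilde{\mu}_{k_n},\hat{\mu}^*_{k_n,n}) \leq \tfrac{3}{2\mu(\Gamma_{k_n})} V(\mu/\sigma_{k_n},\hat{\mu}_n/\sigma_{k_n})$, then simply invokes ``by hypothesis $\mu(\Gamma_{k_n})\to 1$''; your split pulls out $1/\hat{\mu}_n(\Gamma_{k_n})$ and gets a factor $2/\hat{\mu}_n(\Gamma_{k_n})$ instead. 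Both denominators tend to $1$ under the assumptions that justify invoking this proposition (in the proof of Theorem~\ref{th_data_driven}, $\hat{\mu}_n(\Gamma_{\epsilon_n})\to 1$ is actually established first, and $\mu(\Gamma_{\epsilon_n})\to 1$ is derived from it). Your last paragraph flags the data-dependence of $\Gamma_{k_n}$ as ``the main obstacle'' and proposes an oracle-bridging reduction; that reduction is sound and does mirror Appendix~\ref{app_prop_oracle_est_err_aprt}, but it is more machinery than needed here, since $\hat{\mu}_n(\Gamma_{\epsilon_n})\to 1$ already follows from $\epsilon_n=o(1)$ (and the paper sidesteps the whole issue by taking $\mu(\Gamma_{k_n})\to 1$ as given). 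Net: same proof up to a cosmetic change of normalizer, with a heavier-than-necessary closing step.
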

\begin{proof} 
\begin{align}\label{eq_proof_th_conv_conditional_prob_2}
	V(\tilde{\mu}_{k_n}, \hat{\mu}^*_{k_n,n} )&= \frac{1}{2} \sum_{x \in A_{\mu}\cap \Gamma_{k_n}} \left|\frac{\mu\left\{x\right\}}{\mu(\Gamma_{k_n})}- \frac{\hat{\mu}_n\left\{x\right\}}{\hat{\mu}_n(\Gamma_{k_n})}  \nonumber\right|\\
	&\leq \frac{1}{2\mu(\Gamma_{k_n})} \left[\sum_{x \in A_{\mu}\cap \Gamma_{k_n}}  \left|\hat{\mu}_n\left\{x\right\}- \mu\left\{x\right\}\right| + \sum_{x \in A_{\mu}\cap \Gamma_{k_n}} \hat{\mu}_n\left\{x\right\} \left| \frac{\mu(\Gamma_{k_n})}{\hat{\mu}_n(\Gamma_{k_n})}-1\right| \right] \nonumber\\
	&= \frac{1}{2\mu(\Gamma_{k_n})} \left[ 2 \cdot V(\mu/\sigma_{k_n}, \hat{\mu}_n/\sigma_{k_n}) + \left| \mu(\Gamma_{k_n})  - \hat{\mu}_n(\Gamma_{k_n}) \right| \right]\nonumber\\
	&\leq \frac{3 \cdot V(\mu/\sigma_{k_n}, \hat{\mu}_n/\sigma_{k_n})}{2\mu(\Gamma_{k_n})}
\end{align}
By the hypothesis  $\mu(\Gamma_{k_n}) \rightarrow 1$, which concludes the proof.
\end{proof}

\section{Proposition \ref{prop_oracle_appro_err_apr}}
\label{app_prop_oracle_appro_err_apr}
\begin{proposition}\label{prop_oracle_appro_err_apr}
	If $\epsilon_n$ is $O(n^{-\tau})$  with $\tau\in (0, 1/2)$, then 
	\begin{equation*}
	\lim \sup_{n \rightarrow \infty} b_{\epsilon_n}(X_1,..,X_n) - a_{2 \epsilon_n} \leq 0, \  \mathbb{P}_{\mu}-a.s.
	\end{equation*}
\end{proposition}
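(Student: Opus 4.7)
The plan is to compare the random set $\Gamma_{\epsilon_n}$ with the deterministic oracle set $\tilde{\Gamma}_{2\epsilon_n}$ and show that they are eventually nested in a favorable way, so that the three summands defining $b_{\epsilon_n}$ are each controlled by their counterparts in $a_{2\epsilon_n}$ (up to a vanishing remainder). The key observation is that if the empirical probabilities uniformly approximate $\mu$ at a rate faster than $\epsilon_n$, then $\tilde{\Gamma}_{2\epsilon_n}\subset \Gamma_{\epsilon_n}$ and all monotone quantities in $\mu(\Gamma_{\epsilon_n})$ and $\mu(\Gamma_{\epsilon_n}^c)$ move in the direction we want.

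First, I would establish the inclusion $\tilde{\Gamma}_{2\epsilon_n}\subset \Gamma_{\epsilon_n}$ eventually $\mathbb{P}_\mu$-a.s. The cardinality of $\tilde{\Gamma}_{2\epsilon_n}$ is at most $1/(2\epsilon_n)$, so a union bound combined with Hoeffding's inequality yields
\begin{equation*}
\mathbb{P}^n_\mu\!\left(\exists x\in \tilde{\Gamma}_{2\epsilon_n}:\ |\hat{\mu}_n(\{x\})-\mu(\{x\})|\geq \epsilon_n\right)\ \leq\ \frac{1}{\epsilon_n}\cdot e^{-2n\epsilon_n^2}.
\end{equation*}
Since $\epsilon_n$ is $O(n^{-\tau})$ with $\tau\in(0,1/2)$, we have $n\epsilon_n^2=\Omega(n^{1-2\tau})$, so this bound is summable in $n$. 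The Borel--Cantelli lemma then gives $\sup_{x\in\tilde{\Gamma}_{2\epsilon_n}}|\hat{\mu}_n(\{x\})-\mu(\{x\})|<\epsilon_n$ eventually a.s., which implies $\hat{\mu}_n(\{x\})>\epsilon_n$ for every $x\in\tilde{\Gamma}_{2\epsilon_n}$ and hence $\tilde{\Gamma}_{2\epsilon_n}\subset\Gamma_{\epsilon_n}$ eventually a.s.

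Restricted to this event, I would compare $b_{\epsilon_n}$ and $a_{2\epsilon_n}$ term by term. The first two terms behave monotonically: $\Gamma_{\epsilon_n}^c\subset \tilde{\Gamma}_{2\epsilon_n}^c$ immediately gives
$\sum_{x\in \Gamma^c_{\epsilon_n}}\mu(\{x\})\log(1/\mu(\{x\}))\leq \sum_{x\in \tilde{\Gamma}^c_{2\epsilon_n}}\mu(\{x\})\log(1/\mu(\{x\}))$,
and $\mu(\Gamma_{\epsilon_n})\geq \mu(\tilde{\Gamma}_{2\epsilon_n})$ gives $\log(1/\mu(\Gamma_{\epsilon_n}))\leq \log(1/\mu(\tilde{\Gamma}_{2\epsilon_n}))$. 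For the third term, write $S_A\equiv \sum_{x\in A}\mu(\{x\})\log(1/\mu(\{x\}))$, bound $S_{\Gamma_{\epsilon_n}}\leq H(\mu)$, and use $S_{\tilde{\Gamma}_{2\epsilon_n}}=H(\mu)-S_{\tilde{\Gamma}_{2\epsilon_n}^c}$ to obtain
\begin{equation*}
b_3-a_3\ \leq\ H(\mu)\Bigl[\tfrac{1}{\mu(\Gamma_{\epsilon_n})}-\tfrac{1}{\mu(\tilde{\Gamma}_{2\epsilon_n})}\Bigr]+\Bigl(\tfrac{1}{\mu(\tilde{\Gamma}_{2\epsilon_n})}-1\Bigr)\cdot S_{\tilde{\Gamma}_{2\epsilon_n}^c}.
\end{equation*}
The bracketed term is nonpositive, and the second is $o(1)$ since $\mu(\tilde{\Gamma}_{2\epsilon_n})\to 1$ (as $\epsilon_n\to 0$ and $\mu$ is a probability measure) and $S_{\tilde{\Gamma}_{2\epsilon_n}^c}\to 0$ (tail of the convergent entropy sum, using $\mu\in\mathbb{H}(\mathbb{X})$). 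Combining the three comparisons gives $b_{\epsilon_n}-a_{2\epsilon_n}\leq o(1)$ eventually a.s., which is the desired conclusion.

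The main obstacle is the third summand $b_3-a_3$, since the monotonicity goes in opposite directions in the coefficient $(1/\mu-1)$ and in the partial entropy sum $S$. Handling it requires replacing $S_{\Gamma_{\epsilon_n}}$ by the uniform bound $H(\mu)$ and exploiting the fact that both $1/\mu(\tilde{\Gamma}_{2\epsilon_n})-1$ and $S_{\tilde{\Gamma}_{2\epsilon_n}^c}$ vanish under the finite-entropy assumption. A secondary technical point is justifying the uniform deviation bound only over the finite oracle set $\tilde{\Gamma}_{2\epsilon_n}$ (rather than over all of $\mathbb{X}$), which is what makes the Hoeffding union bound summable in the regime $\tau<1/2$.
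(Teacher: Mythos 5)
Your proof is correct and follows the paper's strategy: establish the inclusion $\tilde{\Gamma}_{2\epsilon_n} \subset \Gamma_{\epsilon_n}$ eventually $\mathbb{P}_\mu$-a.s.\ via Hoeffding's inequality with a union bound over the at-most-$1/(2\epsilon_n)$ points of $\tilde{\Gamma}_{2\epsilon_n}$, note that $n\epsilon_n^2 = \Omega(n^{1-2\tau})$ makes the tail summable for $\tau\in(0,1/2)$, invoke Borel--Cantelli, and then compare $b_{\epsilon_n}$ with $a_{2\epsilon_n}$ term by term on this event.

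The one place you diverge is the third summand, and your treatment there is a genuine improvement in transparency. The paper simply asserts that $\tilde{\Gamma}_{2\epsilon_n}\subset\Gamma_{\epsilon_n}$ implies $b_{\epsilon_n}\leq a_{2\epsilon_n}$, but you correctly observe that term-by-term monotonicity fails for the third summand: passing from $\tilde{\Gamma}_{2\epsilon_n}$ to the larger $\Gamma_{\epsilon_n}$ shrinks the coefficient $1/\mu(\cdot)-1$ but grows the partial entropy sum $S$, so the sign of $b_3-a_3$ is genuinely indeterminate in isolation. Your repair --- bounding $S_{\Gamma_{\epsilon_n}}\leq H(\mu)$ and absorbing the slack into the deterministic $o(1)$ term $\bigl(1/\mu(\tilde{\Gamma}_{2\epsilon_n})-1\bigr)S_{\tilde{\Gamma}^c_{2\epsilon_n}}$ --- is valid since only $\limsup\leq 0$ is required, and it correctly uses $\mu\in\mathbb{H}(\mathbb{X})$ for the tail of the entropy series and $\epsilon_n\to 0$ for $\mu(\tilde{\Gamma}_{2\epsilon_n})\to 1$. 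For what it is worth, the paper's sharper assertion can be recovered without your $o(1)$ overhead by summing the three differences before bounding: writing $p=\mu(\Gamma_{\epsilon_n})$, $q=\mu(\tilde{\Gamma}_{2\epsilon_n})$, and $\Delta=\Gamma_{\epsilon_n}\setminus\tilde{\Gamma}_{2\epsilon_n}$, one finds
\begin{equation*}
b_{\epsilon_n}-a_{2\epsilon_n}\ =\ \Bigl(\tfrac{1}{p}-2\Bigr)S_\Delta\ +\ \log\tfrac{q}{p}\ +\ S_{\tilde{\Gamma}_{2\epsilon_n}}\Bigl(\tfrac{1}{p}-\tfrac{1}{q}\Bigr),
\end{equation*}
where each term is nonpositive once $p\geq 1/2$, which holds eventually on $\mathcal{B}_n$; the positive cross contribution in $b_3-a_3$ is exactly cancelled by $b_1-a_1=-S_\Delta$. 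Either way the conclusion stands.
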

\begin{proof}
	Let us define the set 
	  $$\mathcal{B}_n=  \left\{(x_1,..,x_n):  \tilde{\Gamma}_{2 \epsilon_n}  \subset \Gamma_{\epsilon_n} \right\}\subset \mathbb{X}^n.$$  
	  From definition  every sequence $(x_1,..,x_n) \in \mathcal{B}_n$  is such that   
	  $b_{\epsilon_n}(x_1,..,x_n) \leq a_{2\epsilon_n}$ and, consequently,  we just need to prove that 
	   $\mathbb{P}_{\mu}(\lim \inf_{n \rightarrow \infty} \mathcal{B}_n) = \mathbb{P}_{\mu} (\cup_{n\geq 1}\cap_{k\geq n} \mathcal{B}_k)=1$ \citep{breiman_1968}. 
	   Furthermore, if $\sup_{x\in \tilde{\Gamma}_{2\epsilon_n} }  \left| \hat{\mu}_n(\left\{x \right\})- \mu(\left\{x \right\}) \right| \leq \epsilon_n$, then by definition of $\tilde{\Gamma}_{2 \epsilon_n}$ in (\ref{pr_th_rate_data_driven_1}),  we have that $\hat{\mu}_n (\left\{ x\right\})  \geq \epsilon_n$ for all $x\in \Gamma_{2 \epsilon_n}$ (i.e., $\tilde{\Gamma}_{2 \epsilon_n} \subset \Gamma_{\epsilon_n}$).  From this 
	   \begin{align}\label{pr_th_rate_data_driven_4}   
	   	\mathbb{P}^n_{\mu} (\mathcal{B}^c_n)  &\leq  \mathbb{P}^n_{\mu}  \left( \sup_{x\in \tilde{\Gamma}_{2\epsilon_n}}  \left| \hat{\mu}_n(\left\{x \right\})- \mu(\left\{x \right\}) \right| > \epsilon_n \right)  \leq  \left| \tilde{\Gamma}_{2\epsilon_n} \right| \cdot e^{-2n \epsilon_n^2} \leq   
		 \frac{1}{2\epsilon_n} \cdot e^{-2n \epsilon_n^2} ,
	   \end{align} 
	   from the {\em Hoeffding's inequality} \citep{devroye1996,devroye_2001}, the union bound and the fact that 
	   by construction $ \left| \tilde{\Gamma}_{2 \epsilon_n} \right| 
	   \leq \frac{1}{2\epsilon_{n}}$.
	   If we consider $\epsilon_{n}=O(n^{-\tau})$ and $l>0$, we have that: 
	   \begin{align}\label{pr_th_rate_data_driven_4b}   
	   	\frac{1}{n^l} \cdot \ln \mathbb{P}^n_{\mu} (\mathcal{B}^c_n) \leq \frac{1}{n^l} \ln (1/2 \cdot n^{\tau}) - 2n^{1-2\tau -l}.    
	   \end{align}
	   From (\ref{pr_th_rate_data_driven_4b}) for any  $\tau \in (0,1/2)$ there is $l \in (0,1-2\tau]$
	   such that $\mathbb{P}^n_{\mu} (\mathcal{B}^c_n)$ is bounded by a term $O(e^{-n^l})$. 
	   This implies that $\sum_{n\geq 1}\mathbb{P}^n_{\mu} (\mathcal{B}^c_n) <\infty$, that suffices to  show that  $\mathbb{P}_{\mu} (\cup_{n\geq 1}\cap_{k\geq n} \mathcal{B}_k)=1$. 
\end{proof}

\section{Auxiliary results for Theorem \ref{th_rate_data_driven_power}}
\label{aux_th_rate_data_driven}
Let us first consider the series 
\begin{align}\label{eq_aux_th_rate_data_driven_1}
	\mathcal{S}_{x_o}= \sum_{x\geq x_o} x^{-p}&=x_o^{-p} \cdot  \left( 1+  \left( \frac{x_o}{x_o+1}\right)^p+ \left(\frac{x_o}{x_o+2}\right)^p + \ldots  \right) \nonumber\\
					&=x_o^{-p} \cdot  \left( \tilde{\mathcal{S}}_{x_o,0}+  \tilde{\mathcal{S}}_{x_o,1} + \ldots  + \tilde{\mathcal{S}}_{x_o,x_o-1} \right),  
\end{align}
where $ \tilde{\mathcal{S}}_{x_o,j} \equiv \sum_{k=0}^{\infty}  \left( \frac{k\cdot x_o +j}{x_o}\right)^{-p} $ for all 
$j\in \left\{0,..,x_o-1 \right\}$. It is simple to verify that for all 
$j\in \left\{0,..,x_o-1 \right\}$, $\tilde{\mathcal{S}}_{x_o,j} \leq \tilde{\mathcal{S}}_{x_o,0} =\sum_{k\geq 0} k^{-p} <\infty$ given that  by hypothesis $p>1$. Consequently, 
$\mathcal{S}_{x_o}\leq x_o^{1-p}\cdot \sum_{k\geq 0} k^{-p}$.

Similarly, for the second series we have that:
\begin{align}\label{eq_aux_th_rate_data_driven_2}
	 \mathcal{R}_{x_o} = \sum_{x\geq x_o}x^{-p} \log x &= x_o^{-p} \cdot   \left(\log (x_o)+  \left(\frac{x_o}{x_o+1} \right) \log (x_o+1) +  \left(\frac{x_o}{x_o+2} \right) \log (x_o+2) + \ldots \right)  \nonumber\\
	 &=x_o^{-p} \cdot \left( \tilde{\mathcal{R}}_{x_o,0}+  \tilde{\mathcal{R}}_{x_o,2} + \ldots  +\tilde{\mathcal{R}}_{x_o,x_o-1} \right), 
\end{align}
where $ \tilde{\mathcal{R}}_{x_o,j} \equiv \sum_{k=1}^{\infty}  \left( \frac{k\cdot x_o +j}{x_o} \right)^{-p} \cdot \log (kx_o+j)$  for all  $j\in \left\{0,..,x_o-1 \right\}$. 
Note again that  $\tilde{\mathcal{R}}_{x_o,j}\leq   \tilde{\mathcal{R}}_{x_o,0} <\infty$ for all $j\in \left\{0,..,x_o-1 \right\}$, and, consequently, $\mathcal{R}_{x_o} \leq x_o^{1-p} \cdot \sum_{k\geq 1} k^{-p} \log k$ from (\ref{eq_aux_th_rate_data_driven_2}).

\section{Proposition \ref{prop_oracle_est_err_aprt}}
\label{app_prop_oracle_est_err_aprt}
\begin{proposition}\label{prop_oracle_est_err_aprt}
	If $\epsilon_n$ is $O(n^{-\tau})$  with $\tau\in (0, 1/2)$, then 
	\begin{equation*}
	\lim \inf_{n \rightarrow \infty} \xi_n(X_1,..,X_n) - \rho_n(X_1,..,X_n) \geq 0, \ \ \mathbb{P}_\mu-a.s.
	\end{equation*}
\end{proposition}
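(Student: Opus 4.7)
The plan is to show that the data-driven set $\Gamma_{\epsilon_n}$ is contained in the oracle set $\tilde{\Gamma}_{\epsilon_n/2}$ eventually $\mathbb{P}_\mu$-a.s., from which the $\sigma$-field inclusion $\sigma(\Pi_{\epsilon_n}) \subset \sigma(\tilde{\Pi}_{\epsilon_n/2})$ follows, and hence $\rho_n \leq \xi_n$ on the good event by monotonicity of the supremum defining the total variation in (\ref{eq_data_driven_new5b}). The $\liminf$ claim then follows immediately.

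First I would verify the $\sigma$-field comparison: if $\Gamma_{\epsilon_n} \subset \tilde{\Gamma}_{\epsilon_n/2}$, every singleton $\{x\}$ with $x \in \Gamma_{\epsilon_n}$ is already an atom of $\tilde{\Pi}_{\epsilon_n/2}$, while the remaining cell $\phi_{\epsilon_n} = \Gamma_{\epsilon_n}^c$ decomposes as $(\tilde{\Gamma}_{\epsilon_n/2} \setminus \Gamma_{\epsilon_n}) \cup \tilde{\phi}_{\epsilon_n/2}$, a union of atoms of $\tilde{\Pi}_{\epsilon_n/2}$; hence $\sigma(\Pi_{\epsilon_n}) \subset \sigma(\tilde{\Pi}_{\epsilon_n/2})$, and $V(\mu/\sigma(\Pi_{\epsilon_n}), \hat{\mu}_n/\sigma(\Pi_{\epsilon_n})) \leq V(\mu/\sigma(\tilde{\Pi}_{\epsilon_n/2}), \hat{\mu}_n/\sigma(\tilde{\Pi}_{\epsilon_n/2}))$, i.e., $\rho_n \leq \xi_n$.

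Next I would bound $\mathbb{P}^n_\mu(\Gamma_{\epsilon_n} \not\subset \tilde{\Gamma}_{\epsilon_n/2})$ and apply the Borel-Cantelli lemma. The failure event requires some $x$ with $\hat{\mu}_n(\{x\}) \geq \epsilon_n$ yet $\mu(\{x\}) < \epsilon_n/2$, and I would split the union bound over such $x$ at the threshold $c = \epsilon_n/4$. For $x$ with $c \leq \mu(\{x\}) < \epsilon_n/2$ (at most $4/\epsilon_n$ points), Hoeffding's inequality gives $\mathbb{P}^n_\mu(\hat{\mu}_n(\{x\}) - \mu(\{x\}) \geq \epsilon_n/2) \leq e^{-n\epsilon_n^2/2}$, so this part contributes at most $(4/\epsilon_n)\,e^{-n\epsilon_n^2/2}$, which is summable because $n\epsilon_n^2 \gtrsim n^{1-2\tau} \to \infty$. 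For the (possibly infinitely many) $x$ with $\mu(\{x\}) < c$, the standard Chernoff tail bound yields $\mathbb{P}^n_\mu(\hat{\mu}_n(\{x\}) \geq \epsilon_n) \leq (e\mu(\{x\})/\epsilon_n)^{n\epsilon_n}$; peeling off one factor of the ratio and using $e\mu(\{x\})/\epsilon_n < e/4 < 1$ together with $\sum_x \mu(\{x\}) \leq 1$ bounds this entire regime by $(e/4)^{n\epsilon_n - 1}(e/\epsilon_n)$, which decays super-polynomially since $n\epsilon_n = n^{1-\tau} \to \infty$.

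Summing both contributions in $n$ gives $\sum_n \mathbb{P}^n_\mu(\Gamma_{\epsilon_n} \not\subset \tilde{\Gamma}_{\epsilon_n/2}) < \infty$, so by Borel-Cantelli the inclusion holds eventually $\mathbb{P}_\mu$-a.s. Combined with the $\sigma$-field step, this yields $\xi_n - \rho_n \geq 0$ eventually, hence $\liminf_n(\xi_n - \rho_n) \geq 0$ a.s. The hard part is the second regime: the union bound ranges over an infinite collection of $x$ and Hoeffding alone is too weak to sum. The key observation is that the Chernoff form places $\mu(\{x\})$ in the base of the tail estimate, so after splitting off one copy of the small ratio the residual sum collapses to $\sum_x \mu(\{x\}) \leq 1$, leaving only a super-polynomially small prefactor.
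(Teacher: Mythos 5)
Your proof is correct and follows the same high-level skeleton as the paper's (Appendix~\ref{app_prop_oracle_est_err_aprt}): establish that $\Gamma_{\epsilon_n}\subset\tilde{\Gamma}_{\epsilon_n/2}$ implies the $\sigma$-field inclusion and hence $\rho_n\le\xi_n$, then show the inclusion holds eventually $\mathbb{P}_\mu$-a.s.\ via a failure-probability bound and Borel--Cantelli. The difference lies in the key probabilistic step. The paper defines $\mathcal{C}_n=\{\sup_{x\in\Gamma_{\epsilon_n}}|\hat\mu_n(\{x\})-\mu(\{x\})|\le\epsilon_n/2\}$, notes $\mathcal{C}_n\subset\mathcal{B}_n$, and bounds $\mathbb{P}(\mathcal{C}_n^c)\le|\Gamma_{\epsilon_n}|e^{-n\epsilon_n^2/2}\le\epsilon_n^{-1}e^{-n\epsilon_n^2/2}$ --- a union bound taken over the \emph{data-dependent} set $\Gamma_{\epsilon_n}$, using only $|\Gamma_{\epsilon_n}|\le1/\epsilon_n$. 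That step is a shortcut: formally, the failure event is $\exists\, x\in\mathbb{X}$ with $\hat\mu_n(\{x\})\ge\epsilon_n$ and $\mu(\{x\})<\epsilon_n/2$, and a legitimate union bound must range over a deterministic index set, which is all of $\{x:\mu(\{x\})<\epsilon_n/2\}$ and is infinite in the generic case. Hoeffding alone gives a per-$x$ bound independent of $\mu(\{x\})$, so summing over that infinite set fails. Your two-regime split at $\epsilon_n/4$ --- Hoeffding for the at most $4/\epsilon_n$ moderate-mass points, and the multiplicative Chernoff bound $(e\mu(\{x\})/\epsilon_n)^{n\epsilon_n}$ for the small-mass tail, exploiting $\mu(\{x\})$ in the base so that the infinite sum collapses against $\sum_x\mu(\{x\})\le1$ --- makes the union bound rigorous over the full countable alphabet at the cost of one extra inequality. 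Both routes yield a super-polynomially decaying $\mathbb{P}(\mathcal{B}_n^c)$ for $\tau\in(0,1/2)$ and conclude by Borel--Cantelli; your version simply closes a gap that the paper glosses over in passing from a random index set to the bound $|\Gamma_{\epsilon_n}|\le1/\epsilon_n$.
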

\begin{proof}
	By definition if $\sigma{(\Gamma_{\epsilon_n})} \subset \sigma( \tilde{\Gamma}_{\epsilon_n/2})$ then 
	$\xi_n(X_1,..,X_n) \geq \rho_n(X_1,..,X_n)$. Consequently,  if we define the set: 
	\begin{equation} \label{eq_app_prop_oracle_est_err_aprt_1}
		\mathcal{B}_n= \left\{ (x_1,..,x_n): \sigma{(\Gamma_{\epsilon_n})} \subset \sigma( \tilde{\Gamma}_{\epsilon_n/2})\right\}, 
	\end{equation}
	then the proof reduced to verify that $\mathbb{P}_\mu(\lim\inf_{n  \rightarrow \infty} \mathcal{B}_n)$ $= \mathbb{P}_\mu( \cup_{n\geq 1 } \cap_{k\geq n} \mathcal{B}_k)=1$.
	
	On the other hand, if $\sup_{x\in \Gamma_{\epsilon_n}}  \left|  \hat{\mu}_n(\left\{ x\right\})  -  \mu ( \left\{ x \right\} ) \right| \leq \epsilon_n/2$ then by definition of $\Gamma_\epsilon$, for all $x\in \Gamma_{\epsilon_n}$ $\mu (\left\{ x\right\}) \geq \epsilon_n/2$, i.e., $\Gamma_{\epsilon_n} \subset  \tilde{\Gamma}_{\epsilon_n/2}$.  In other words, 
	\begin{equation} \label{eq_app_prop_oracle_est_err_aprt_2}
	\mathcal{C}_n= \left\{ (x_1,..,x_n): \sup_{x\in \Gamma_{\epsilon_n}}  \left|  \hat{\mu}_n(\left\{ x\right\})  -  \mu ( \left\{ x \right\} ) \right| \leq \epsilon_n/2 \right\} \subset \mathcal{B}_n.
	\end{equation}
	Finally, 
	\begin{align} \label{eq_app_prop_oracle_est_err_aprt_3}
		\mathbb{P}^n_{\mu} (\mathcal{C}^c_n)  & = \mathbb{P}^n_{\mu}  \left( \sup_{x\in \Gamma_{\epsilon_n}}  \left|  \hat{\mu}_n(\left\{ x\right\})  -  \mu ( \left\{ x \right\} ) \right| > \epsilon_n/2 \right) \leq \left| \Gamma_{\epsilon_n} \right| \cdot  e^{-n \epsilon^2/2} \leq  \frac{1}{\epsilon_n} \cdot  e^{-n \epsilon^2/2}.
	\end{align}
	In this context, if we consider $\epsilon_{n}=O(n^{-\tau})$ and $l>0$, then we have that: 
	   \begin{align}\label{eq_app_prop_oracle_est_err_aprt_4}   
	   	\frac{1}{n^l} \cdot \ln \mathbb{P}^n_{\mu} (\mathcal{C}^c_n) \leq \tau \cdot \frac{\ln n}{n^l} - \frac{n^{1-2\tau -l}}{2}.    
	   \end{align}
	   Therefore, we have that for any  $\tau \in (0,1/2)$ we can take $l \in (0,1-2\tau]$
	   such that $\mathbb{P}^n_{\mu} (\mathcal{C}^c_n)$ is bounded by a term $O(e^{-n^l})$.  Then, 
	  the Borel Cantelli lemma tells us that 
	   $\mathbb{P}_\mu( \cup_{n\geq 1 } \cap_{k\geq n} \mathcal{C}_k)=1$, which concludes the proof from (\ref{eq_app_prop_oracle_est_err_aprt_2}).
\end{proof}

\section{Proposition \ref{prop_rate_oracle_est_error}}
\label{app_rate_oracle_est_error}
\begin{proposition} \label{prop_rate_oracle_est_error}
For the $p$-power tail dominating distribution stated in Theorem \ref{th_rate_data_driven_power}, 
if $(\epsilon_n)$ is $O(n^{-\tau})$ with $\tau\in (0,p)$ then
$\xi_n(X_1,..,X_n) \text{ is } o(n^{-q}) \text{ for all } q\in (0,(1-\tau/p)/2)$, $\mathbb{P}_\mu$-a.s.
\end{proposition}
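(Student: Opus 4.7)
\section{Proof plan for Proposition \ref{prop_rate_oracle_est_error}}

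The plan is to exploit the fact that $\tilde{\Gamma}_{\epsilon_n/2}$ is a \emph{deterministic} set (it depends on $\mu$ and $(\epsilon_n)$, not on the sample), so a direct union bound plus Hoeffding's inequality on the sigma field $\sigma(\tilde{\Gamma}_{\epsilon_n/2})$ gives a tractable concentration inequality for $V(\mu/\sigma(\tilde{\Gamma}_{\epsilon_n/2}), \hat{\mu}_n/\sigma(\tilde{\Gamma}_{\epsilon_n/2}))$. This is analogous to how (\ref{eq_data_driven_new7}) is used in the proof of Theorem \ref{th_data_driven}, but with the sharper cardinality bound provided by the $p$-power tail hypothesis.

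First I would bound $\lvert \tilde{\Gamma}_{\epsilon_n/2} \rvert$. Using the upper tail condition $\mu(\{x\}) \leq k_1 x^{-p}$, any $x \in \tilde{\Gamma}_{\epsilon_n/2}$ must satisfy $k_1 x^{-p} \geq \epsilon_n/2$, hence $x \leq (2k_1/\epsilon_n)^{1/p}$. Since $(\epsilon_n)$ is $O(n^{-\tau})$, this gives $\lvert \tilde{\Gamma}_{\epsilon_n/2} \rvert \leq C_0 \cdot n^{\tau/p}$ for some constant $C_0>0$ depending only on $k_1$, $p$ and the hidden constant in $(\epsilon_n)$. Then Hoeffding's inequality applied on the finite sigma field $\sigma(\tilde{\Gamma}_{\epsilon_n/2})$, whose cardinality is at most $2^{\lvert \tilde{\Gamma}_{\epsilon_n/2} \rvert+1}$, yields for every $\delta>0$
\begin{align*}
\mathbb{P}^n_{\mu}\!\left( \xi_n(X_1,\ldots,X_n) > \delta \right)
&\leq 2^{\,C_0 n^{\tau/p}+1} \cdot \exp\!\left( -\frac{2n\,\delta^2}{(\log 1/\epsilon_n)^2}\right).
\end{align*}

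Next I would specialize to $\delta_n = n^{-q}$ with $q \in (0,(1-\tau/p)/2)$. Taking logarithms and using $\log(1/\epsilon_n) = O(\log n)$, one gets
\begin{align*}
\ln \mathbb{P}^n_{\mu}\!\left( \xi_n > n^{-q} \right)
&\leq \bigl(C_0 n^{\tau/p}+1\bigr)\ln 2 - \frac{2\,n^{\,1-2q}}{(\tau \log n)^2 \,(1+o(1))}.
\end{align*}
Since $q<(1-\tau/p)/2$ is equivalent to $1-2q > \tau/p$, the negative term dominates the positive one by a polynomial factor, and therefore the right-hand side decays faster than any $e^{-n^{l}}$ for some $l \in (\tau/p,\,1-2q)$. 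In particular the probabilities are summable in $n$, so by the Borel--Cantelli lemma $\xi_n(X_1,\ldots,X_n) \leq n^{-q}$ eventually $\mathbb{P}_\mu$-a.s., which is precisely the claim that $\xi_n$ is $o(n^{-q})$ almost surely.

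The only subtle point, and therefore the step I expect to require the most care, is the balance of exponents: the stretched-exponential factor $2^{C_0 n^{\tau/p}}$ coming from the size of $\sigma(\tilde{\Gamma}_{\epsilon_n/2})$ must be beaten by the $\exp(-n^{1-2q}/(\log n)^2)$ factor from Hoeffding. The hypothesis $\tau < p$ guarantees that $(1-\tau/p)/2 > 0$, so the admissible range of $q$ is nonempty; and the comparison $1-2q > \tau/p$ must be strict to absorb the $(\log n)^2$ loss in the denominator, which is why the result is stated as $o(n^{-q})$ for $q$ strictly below $(1-\tau/p)/2$ rather than at the endpoint.
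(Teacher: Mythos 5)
Your proposal is correct and follows essentially the same approach as the paper's Appendix~\ref{app_rate_oracle_est_error}: bound $\lvert\tilde{\Gamma}_{\epsilon_n/2}\rvert$ polynomially from the power-tail condition, apply Hoeffding on the deterministic sigma field $\sigma(\tilde{\Gamma}_{\epsilon_n/2})$, and pick $l\in(\tau/p,\,1-2q)$ so the deviation probabilities decay as $O(e^{-n^l})$ and Borel--Cantelli applies. (Your use of $k_1$ in the cardinality bound is the correct constant; the paper's $k_0$ there is a harmless typo since only the upper tail controls the size of $\tilde{\Gamma}_\epsilon$.)
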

\begin{proof}
	From the {\em Hoeffding's inequality} we have that
	\begin{align} \label{eq_prop_rate_oracle_est_error_1}
		\mathbb{P}^n_{\mu} \left(  \left\{ x_1,..,x_n:   \xi_n(x_1,..,x_n) > \delta  \right\}  \right) &\leq \left| \sigma(\tilde{\Gamma}_{\epsilon_n/2}) \right| \cdot  e^{-2n \frac{\delta^2}{\log(1/\epsilon_n)^2}} \nonumber\\
					&\leq 2^{(\frac{2k_o}{\epsilon_n})^{1/p}+1} \cdot e^{-2n \frac{\delta^2}{\log(1/\epsilon_n)^2}},
	\end{align}
	the second inequality using that $\tilde{\Gamma}_\epsilon \leq (\frac{k_0}{\epsilon}) ^{1/p}+1$  from the definition 
	of $\tilde{\Gamma}_{\epsilon}$ in (\ref{pr_th_rate_data_driven_1}) and the tail bounded assumption on $\mu$.
	If we consider $\epsilon_{n}=O(n^{-\tau})$ and $l>0$, then we have that: 
	\begin{align} \label{eq_prop_rate_oracle_est_error_2}
		\frac{1}{n^l} \cdot \ln \mathbb{P}^n_{\mu} \left(  \left\{ x_1,..,x_n:   \xi_n(x_1,..,x_n) > \delta  \right\}  \right) 
			& \leq \ln 2  \cdot (C n^{\tau/p-l}+n^{-l}) -  \frac{2\delta^2}{\tau^2} \cdot \frac{n^{1-l}}{\log n^2}
	\end{align}
	for some constant $C>0$. Then in order to obtain that $\xi_n(X_1,..,X_n)$ converges almost surely to zero from (\ref{eq_prop_rate_oracle_est_error_2}),  it is sufficient that $l>0$, $l<1$, and $l>\tau/p$. This implies that if $\tau<p$, 
	there is $l\in (\tau/p,1)$ such that such that $\mathbb{P}^n_{\mu} (\xi_n(x_1,..,x_n) > \delta )$ is bounded 
	by a term $O(e^{-n^l})$ and, consequently,  $\lim_{n \rightarrow \infty} \xi_n(X_1,..,X_n)=0$, $\mathbb{P}_\mu$-a.s.\footnote{Using the same steps used in Appendix \ref{app_prop_oracle_est_err_aprt}.}
	
	Moving to the rate of convergence of $ \xi_n(X_1,..,X_n)$ (assuming that $\tau<p$), 
	let us consider $\delta_n=n^{-q}$ for some $q\geq 0$.  From (\ref{eq_prop_rate_oracle_est_error_2}): 
	\begin{align} \label{eq_prop_rate_oracle_est_error_3}
		\frac{1}{n^l} \cdot \ln \mathbb{P}^n_{\mu} \left(  \left\{ x_1,..,x_n:   \xi_n(x_1,..,x_n) > \delta_n  \right\}  \right) 
			& \leq \ln 2  \cdot (C n^{\tau/p-l}+n^{-l}) -  \frac{2\delta^2}{\tau^2} \cdot \frac{n^{1-2q-l}}{\log n^2}.
	\end{align}
	To make $\xi_n(X_1,..,X_n)$ being $o(n^{-q})$ $\mathbb{P}$-a.s., a sufficient condition is that 
	$l>0$, $l>\tau/p$, and $l<1-2q$. Therefore (considering that $\tau<p$),  the admissibility condition on the existence of a exponential rate of convergence $O(e^{-n^l})$ for $l>0$ for the deviation event $\left\{ x_1,..,x_n:   \xi_n(x_1,..,x_n) > \delta_n  \right\}$ is that $\tau/p < 1-2q$, which is equivalent to $0<q<\frac{1-\tau/p}{2}$.
\end{proof}

\bibliography{main_jorge_silva}			

\end{document}